\newtheorem{theorem}{Theorem}
\newtheorem{corollary}{Corollary}
\newtheorem{proposition}{Proposition}
\newtheorem{example}{Example}
\newtheorem{property}{Property}
\begin{document}

\title{Maximum Likelihood Estimation Based Complex-Valued Robust Chinese Remainder Theorem and Its Fast Algorithm}

\author{Xiaoping Li,  \IEEEmembership{Member, IEEE,}
 Shiyang Sun, Qunying Liao, Xiang-Gen Xia, \IEEEmembership{Fellow, IEEE}

\thanks{The work of Xiaoping Li was supported in part by the National Natural Science Foundation of China under Grant 62131005. The work of Qunying Liao was supported in part by the National Natural Science Foundation of China under Grant 12471494. The work of Xiang-Gen Xia was supported in part by the National Science Foundation (NSF) under Grant CCF-2246917.

Xiaoping Li and Shiyang Sun are with the School of Mathematical Science, University of Electronic Science and Technology of China, Chengdu 611731, China. (e-mail: {\tt lixiaoping.math@uestc.edu.cn}; {\tt 202421110224@std.uestc.edu.cn}).

Qunying Liao is with the Institute of Mathematics Science, Sichuan Normal University, Chengdu 610068, China. (e-mail: {\tt qunyingliao@sicnu.edu.cn}).

Xiang-Gen Xia is with the Department of Electrical and Computer Engineering, University of Delaware, Newark, DE 19716, USA. (e-mail: {\tt xxia@ee.udel.edu}).}}

\markboth{Journal of \LaTeX\ Class Files,~Vol.~14, No.~8, August~2015}%
{Shell \MakeLowercase{\textit{et al.}}: Bare Demo of IEEEtran.cls for IEEE Journals}

\maketitle
\begin{abstract}
In this paper, we investigate complex-valued Chinese remainder theorem (C-CRT) with erroneous remainders, 
where the  moduli are Gaussian integers and the errors follow wrapped complex Gaussian distributions. 
Based on the existing real-valued CRT utilizing maximum likelihood estimation (MLE), we propose a fast MLE-based C-CRT (MLE C-CRT). 
The proposed algorithm requires only $2L$ searches to obtain the optimal estimate of the common remainder, 
where $L$ is the number of moduli. 
Once the common remainder is estimated, the complex number can be determined using the C-CRT. Furthermore, 
we obtain a necessary and sufficient condition for the fast MLE C-CRT to achieve robust estimation. 
Finally, we apply the proposed algorithm to a multi-channel self-reset analog-to-digital converter (ADC) system with Gaussian integers as moduli, 
which enables the recovery of high dynamic range complex-valued bandlimited signals at the Nyquist sampling rate. The results demonstrate that the proposed algorithm  outperforms the existing methods.
\end{abstract}

\begin{IEEEkeywords}
Chinese remainder theorem (CRT), real-valued CRT, complex-valued CRT (C-CRT), robust CRT, residue number system, multi-channel self-reset (SR) analog-to-digital converter (ADC).
\end{IEEEkeywords}

\ifCLASSOPTIONpeerreview
\begin{center} \bfseries EDICS Category: 3-BBND \end{center}
\fi

\IEEEpeerreviewmaketitle

\section{Introduction}
\IEEEPARstart{T}{he} Chinese remainder theorem (CRT) is a fundamental theorem in ring theory, widely applied in computer science, coding theory, and digital signal processing \cite{Krishna_1994, Ding_1999}. However, the CRT is not robust as even a small error in any remainder may lead to a large error in the reconstruction. 
To overcome this shortcoming, a robust CRT has been studied in \cite{Xia_2007, Li_2007, Li_2008, Li_2009, Wang_2010, Yang_2014, Xiao_2014, Wang_2015} by utilizing remainder redundancy. The existing literature mainly considers two types of remainder redundancy: 1) the remaining factors of the moduli after being divided by their greatest common divisor (gcd) are pairwise coprime; and 2) the remaining factors of the moduli after being divided by their gcd are not pairwise coprime. For the first type of remainder redundancy, any two moduli have the same gcd. Furthermore, all the remainders modulo the gcd are identical and are referred to as the common remainder \cite{Ore_1952}. In \cite{Xia_2007}, a searching-based method is proposed to address this redundancy. In \cite{Wang_2010}, a closed-form CRT is introduced, assuming identical remainder error variances, thereby eliminating the need for search steps through a direct closed-form reconstruction process. In \cite{Wang_2015}, a maximum likelihood estimation (MLE)-based algorithm is proposed, which optimally estimates the common remainder and the noises may have different variances. For the second type of remainder redundancy, there are at least two distinct groups of moduli, each having a different gcd \cite{Yang_2014, Xiao_2014}. In \cite{Xiao_2014}, a multi-stage robust CRT method is proposed that enhances the robustness, with a potentially improved performance compared to the first type when the moduli are appropriately grouped. The robust CRT has numerous applications, such as in multi-channel SAR and InSAR systems \cite{Xu_2018, Huang_2024}. It has also been generalized for vectors \cite{Xiao_2020, Xiao_2024}, for multiple integers \cite{Xia_1997, Xia_1999, Wang_GCRT_2015, Li_2016, Li_2019}, and for polynomials \cite{Xiao_2015_p, Xiao_2018_p}.

In this paper, we propose robust complex-valued CRT (C-CRT) with Gaussian integers as moduli to robustly determine a complex number from its remainders modulo several Gaussian integers. 
It can be thought of as a  generalization of the robust CRT for real numbers in \cite{Wang_2010, Wang_2015}.
Note that the robust CRT for real numbers  is able to have the reconstruction error level the same as that of the remainder errors (or noises) that are typically measured using the circular distance based on the modulo operation \cite{Wang_2015}. However, the circular distance between two complex numbers involves both scaling and rotation in the complex plane. Hence, an error must be computed by considering both the real and imaginary parts simultaneously. Moreover, when the moduli are complex numbers, the reconstruction process becomes more complicated. When the complex moduli are pairwise conjugate each other, the reconstruction of a complex number can be solved by the two-stage robust CRT for real numbers \cite{Gong_2021}. 
To the best of our knowledge, there is no robust C-CRT that addresses the reconstruction of a complex number from its erroneous remainders directly. 
In this paper, we propose a robust C-CRT with Gaussian integers as moduli, where the product and the gcd of the moduli are real-valued integers (or simply called integers). 
Additionally, the moduli are pairwise coprime after divided by their gcd. Motivated by the work of \cite{Wang_2015}, we propose a fast MLE-based algorithm for the robust C-CRT for complex numbers, which is more challenging compared to that for real numbers, particularly in the MLE model and the analysis of robust estimation. 

\begin{figure}[h]
\centering
\captionsetup{skip=-10pt}
\setlength{\belowcaptionskip}{-0.2cm}
\captionsetup{font={footnotesize}}
\captionsetup{labelsep=period}
\includegraphics[width=0.5\textwidth]{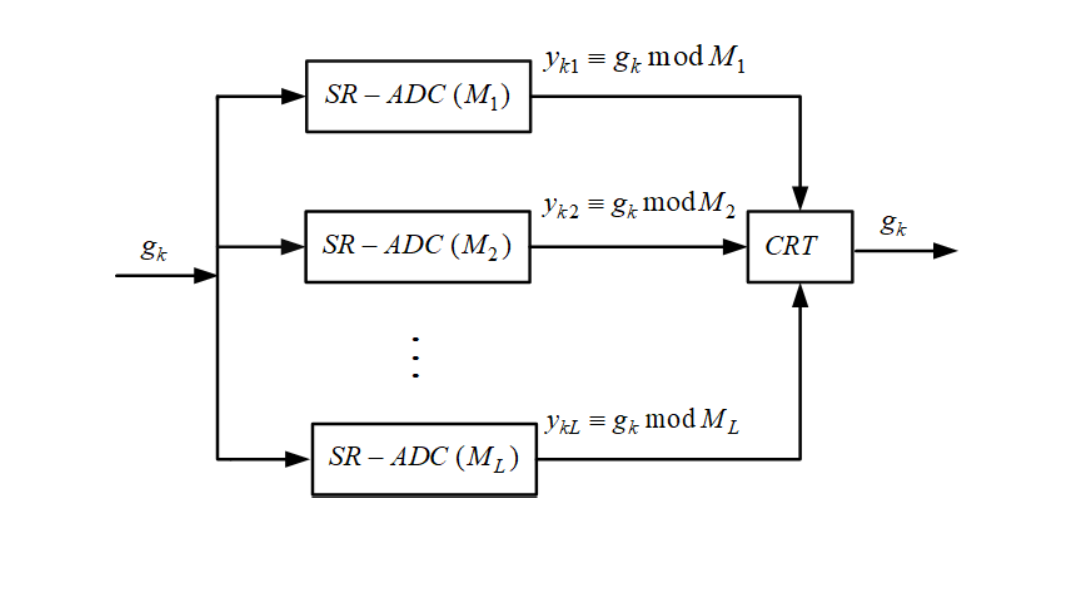}
\caption{Multi-channel SR-ADCs \cite{Gan_2020}.}\label{fig:r_SR_ADC}
\end{figure}

As we shall see later, although C-CRT can be formulated as a special case of 2D-CRT \cite{Xiao_2020, Lin_1994} and robust MD-CRT has been studied in a general setting \cite{Xiao_2024}, and both MD-CRT and robust MD-CRT can be generalized to real-valued vectors, the study in this paper is much different in the following sense. The key difference with the studies for robust MD-CRT in \cite{Xiao_2020} and \cite{Xiao_2024} is that in this paper, we take a probabilistic approach and treat remainder errors as random variables, while the studies in \cite{Xiao_2020} and \cite{Xiao_2024} are deterministic only. In this paper we propose the MLE C-CRT and robust MLE C-CRT, when the remainder errors follow wrapped complex Gaussian distributions that are the most common distributions for remainder errors. The results we obtain in this paper also provide a detailed (special and interesting) robust 2D-CRT.

The proposed fast MLE C-CRT algorithm can be applied in the multi-channel self-reset analog-to-digital converter (SR-ADC) system proposed in \cite{Gan_2020, Gong_2021, Yan_2024, Yan_2025}, as shown in Fig. \ref{fig:r_SR_ADC}. Compared to the single-channel ADC system, it offers a higher dynamic range and is capable of reconstructing bandlimited signals with sampling at the Nyquist rate. Unlike the traditional ADC-based continuous-time signal recovery, this system first reconstructs the sampled values from their multiple modulo samples before recovering a continuous-time signal. In \cite{Gong_2021}, the authors have proposed a complex-valued modulus multi-channel ADC architecture based on Gaussian integers, which offers a higher dynamic range compared to \cite{Gan_2020}. To recover sampled values of a complex-valued bandlimited signal, the moduli are classified into two types: Gaussian integers and positive integers. A sampled value is then reconstructed using the two-stage robust CRT \cite{Xiao_2014} for a real number. In the first stage, the remainders of the Gaussian integer moduli are used to recover the partial signal value through the closed-form robust CRT \cite{Wang_2010}. In the second stage, the remainders of the positive integer moduli are utilized to recover the complete signal value based on the first stage. It is demonstrated that the reconstruction is robust if the error conditions for both stages are satisfied.

Our contributions are fourfold. First, we propose an efficient algorithm for the C-CRT to determine the MLE from erroneous remainders with wrapped complex Gaussian noises. Second, we provide the optimal estimate of the common remainder from complex erroneous remainders. The total number of the optimal common remainder candidates is $2L$ compared to $L$ in the real-valued case presented in \cite{Wang_2015}, where $L$ is the number of moduli. Third, we derive a necessary and sufficient condition for the MLE-based C-CRT (MLE C-CRT) to be robust. 
Forth, we apply our proposed robust C-CRT to multi-channel SR-ADCs with improved performance.

The remainder of this paper is organized as follows. In Section \ref{Background}, we introduce modulo operations for complex numbers and the C-CRT in the absence of errors. Moreover, we introduce the application background of C-CRT in the modulus sampler. In Section \ref{sec_MLE}, we present a fast MLE C-CRT algorithm to
estimate a complex number from erroneous complex remainders. In Section \ref{sec4}, we provide a necessary and sufficient condition for the MLE C-CRT to be robust. In Section \ref{sec5}, we present simulation results to verify the performance of the proposed algorithm and demonstrate its application to ADCs.

\textit{Notations}: The set of integers is denoted as $\mathbb{Z}$, and the sets of 2 dimensional (2D) real vectors and integer vectors are denoted as $\mathbb R^2$ and $\mathbb Z^2$, respectively. To clearly distinguish between complex numbers, real numbers, and matrices, this paper uses $\mathsf{N}$, $\mathsf{\Gamma}$, $\mathsf{r}$, etc., to represent complex numbers; $N$, $\Gamma$, $r$, etc., for real numbers; and $\mathbf N $, $\mathbf M $, $\mathbf k$, etc., for matrices.
For a complex number $\mathsf z = z_1 + z_2 \mathrm{i}$, where $\mathrm{i}$ represents the imaginary unit, i.e., $\mathrm{i} = \sqrt{-1}$, $\mathrm{Re}(\mathsf z)$ denotes the real part $z_1$, and $\mathrm{Im}(\mathsf z)$ denotes the imaginary part $z_2$. $\left\lfloor r \right\rfloor$ denotes the flooring operation of real number $r$, i.e., the greatest integer less than or equal to $r$, and $\left\lfloor\mathsf z\right\rfloor$ denotes the flooring operation of $\mathsf z$, i.e., $\left\lfloor\mathsf z\right\rfloor = \left\lfloor \mathrm{Re}(\mathsf z)\right\rfloor + \left\lfloor \mathrm{Im}(\mathsf z)\right\rfloor\mathrm i$. The set $\mathbb{Z}[\mathrm{i}] = \{ z_1 + z_2 \mathrm{i}: z_1, z_2\in \mathbb{Z} \}$ is the ring of Gaussian integers.

\section{C-CRT and Problem Description}\label{Background}
In this section, we first introduce some concepts for C-CRT, including the Euclidean division, the system of complex congruences, and the Euclidean algorithm. We then discuss the application background and the challenges of C-CRT in modulo samplers.

\subsection{Basic Concepts for C-CRT}
First, we introduce the Euclidean division for complex numbers. Let $\mathsf N$ be a complex number, and let $\mathsf M$ be a nonzero Gaussian integer. Then, there exist unique $\mathsf r\in \mathcal F_{\mathsf M}$ and $\mathsf q\in \mathbb Z[\mathrm i]$ satisfying
\begin{eqnarray}\label{C_div}
\mathsf N = \mathsf M \mathsf q + \mathsf r,
\end{eqnarray}
where $\mathcal F_{\mathsf M}$ is the complex remainder set satisfying
\begin{eqnarray}\label{FM}
\mathcal F_{\mathsf M} = \left\{\mathsf M(a + b\mathrm i): 0 \leq a, b < 1 \right\},
\end{eqnarray}
and $\mathsf r$ is called the remainder of $\mathsf N$ modulo $\mathsf M$.

For $\mathcal F_{\mathsf M}$ described above, let $\mathsf M = \rho e^{\mathrm i\theta}$, where $\rho$ and $\theta$ represent the modulus and angle of $\mathsf M$, respectively. Then, we have two properties below, which are proven in Appendix $A$.

\begin{property}\label{prop1_F_M}
If $\mathsf z\in \mathcal F_{\mathsf M}$, then $\mathsf ze^{-\mathrm i\theta}\in \mathcal F_{\mathsf \rho }$.
\end{property}

\begin{property}\label{prop2_F_M}
$\mathcal F_{\mathsf M}$ is a square and its area is $\rho^2$.
\end{property}

If we consider the real and imaginary parts of both sides of (\ref{C_div}) separately, then we have
\begin{eqnarray}\label{n1n2_matirx}
\begin{pmatrix} n_1 \\   n_2  \end{pmatrix} = \begin{pmatrix} m_1 & -m_2 \\ m_2 & m_1\end{pmatrix}\begin{pmatrix} q_1 \\  q_2 \end{pmatrix} + \begin{pmatrix} r_1 \\  r_2  \end{pmatrix},
\end{eqnarray}
where $\mathsf N =  n_1 +  n_2\mathrm i$, $\mathsf M =  m_1 +  m_2\mathrm i$, $\mathsf q =  q_1 + q_2\mathrm i$, and $\mathsf r = r_1 +  r_2\mathrm i$. One can see that (\ref{n1n2_matirx}) is the 2D modulo problem studied in MD-CRT in \cite{Xiao_2020, Xiao_2024, Lin_1994} with integer matrix moduli of the form
\begin{eqnarray}\label{MathbfM}
\mathbf M = \begin{pmatrix} m_1 & - m_2 \\ m_2 & m_1\end{pmatrix}.
\end{eqnarray}
The set $\mathcal F_{\mathsf M}$ in (\ref{FM}) is equivalent to the following set of vector remainders modulo $\mathbf M$, which is known as the fundamental parallelepiped (FPD) of $\mathbf M$ \cite{Vaidyanathan_1993}:
$$\text{FPD}(\mathbf M) = \{\mathbf k: \mathbf k = \mathbf M \mathbf x, \mathbf x \in [0, 1)^2\}.$$
In the following, for notational convenience, for any complex number $\mathsf N = n_1 + n_2\mathrm i $, we use its equivalent forms
$n_1 +  n_2\mathrm i$, $\begin{pmatrix} n_1 \\  n_2  \end{pmatrix}$, and $\begin{pmatrix}n_1 & - n_2 \\  n_2 &  n_1\end{pmatrix}$
interchangeably, whenever and wherever they apply. For example, when complex number $\mathsf M = m_1 + m_2\mathrm i$ is used as a modulus number, it is either $m_1 + m_2\mathrm i$ or $\begin{pmatrix} m_1 & -m_2 \\ m_2 & m_1\end{pmatrix}$. In what follows, we consider general complex numbers or vectors in $\mathcal F_{\mathsf M}$ or $\text{FPD}(\mathbf M)$, not just Gaussian integers or lattice points, while the moduli are Gaussian integers only.

By (\ref{C_div}) and (\ref{FM}), $\mathsf r$ can be rewritten as
\begin{equation}\label{mathsfrN}\textstyle
\mathsf r = \mathsf N - \mathsf M\left\lfloor \frac{\mathsf N}{\mathsf M} \right\rfloor.
\end{equation}
For convenience, we denote
$\left\langle \mathsf N\right\rangle_\mathsf M = \mathsf r = \mathsf N \mod \mathsf M$.

\begin{figure}[h]
\centering
\setlength{\belowcaptionskip}{-0.2cm}
\captionsetup{font={footnotesize}}
\captionsetup{labelsep=period}
\setlength{\subfigcapskip}{-15pt}
\subfigure[$\mathsf M=3+4\mathrm i$.]{\includegraphics[width=0.25\textwidth]{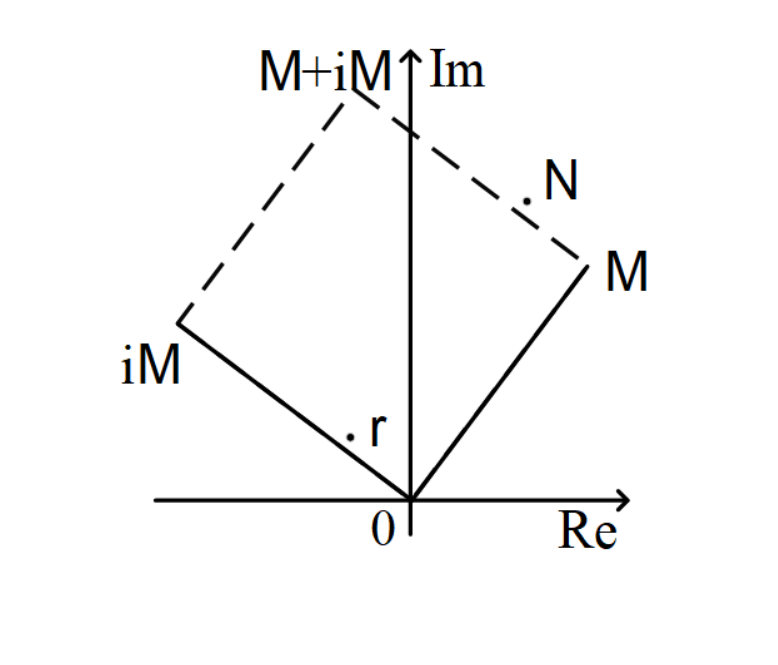}}
\subfigure[$\mathsf M=4$.]{\includegraphics[width=0.25\textwidth]{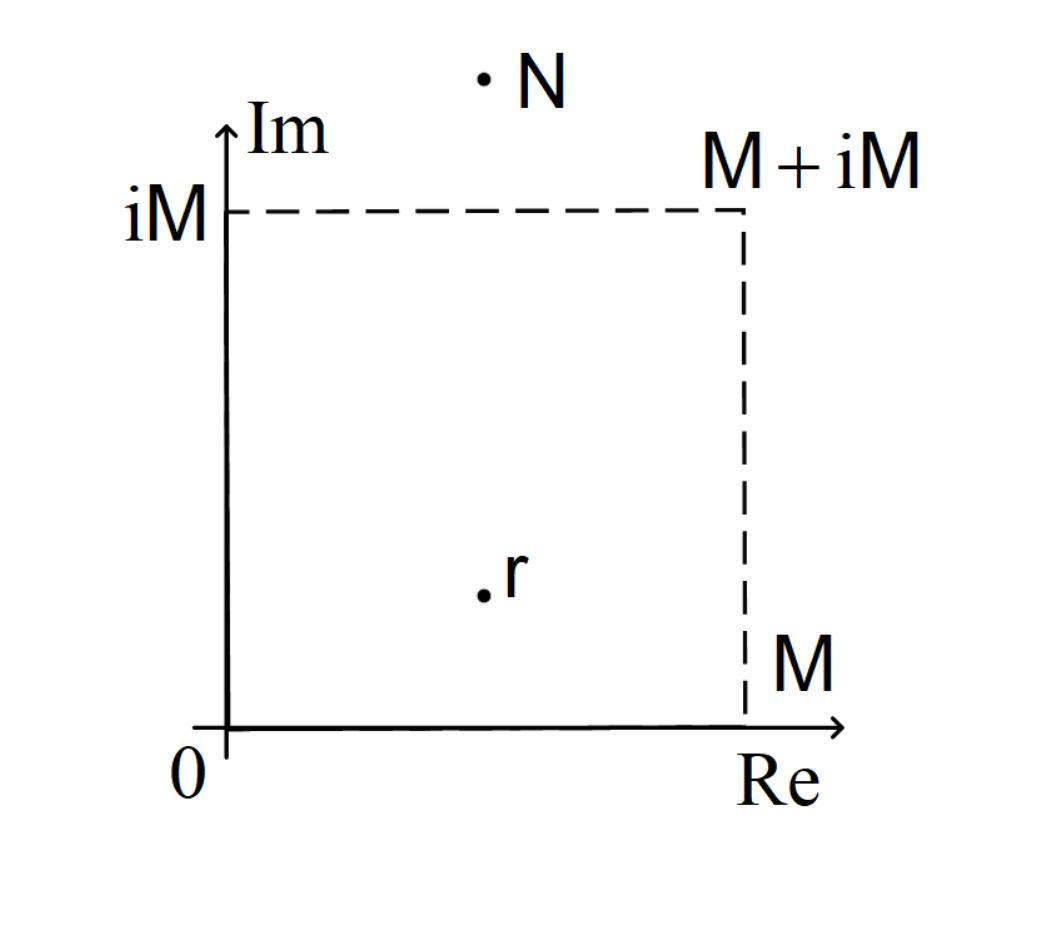}}
\caption{Illustration of $\mathcal F_{\mathsf M}$.}\label{F_M_1_2}
\end{figure}

Fig. \ref{F_M_1_2} gives an illustration of the complex remainder set $\mathcal F_{\mathsf M}$ when $\mathsf N = 2+5\mathrm{i}$. By (\ref{mathsfrN}), we have $\left\langle \mathsf N\right\rangle_{3+4\mathrm{i}} = -1+\mathrm{i}$ and $\left\langle \mathsf N\right\rangle_{4} = 2+\mathrm{i}$. Clearly, if $\mathsf M$ is a real number, then
$$ \langle \mathsf N \rangle_{\mathsf M} = \langle \mathrm{Re}(\mathsf N)
\rangle_{\mathsf M} + \mathrm i \langle \mathrm{Im}(\mathsf N)\rangle_{\mathsf M}.$$
In this case, the modulo operation is performed separately on the real and imaginary parts of $\mathsf N$.

The C-CRT replaces moduli and remainders of the real-valued CRT with complex values. Now, we introduce the C-CRT when moduli are Gaussian integers. Unlike the CRT in rings, where the remainders must belong to an integral domain (see Theorem 17, Section 7.6 in \cite{Dummit_2004}), this C-CRT allows the remainders to be any complex numbers, not just limited to Gaussian integers similar to the CRT for real numbers studied in \cite{Wang_2010, Wang_2015}. The problem is as follows. Let $\mathsf\Gamma_i$, $i = 1, 2, \ldots, L$, be $L$ Gaussian integers as moduli, $\mathsf N$ be a complex number, and $\mathsf r_i = \mathsf N \mod \mathsf\Gamma_i$, $i = 1, 2, \ldots, L$, be its remainders, where $\mathsf N$ and $\mathsf r_i$, $i = 1, 2, \ldots, L$, may not necessarily be Gaussian integers as explained above. Thus, we have the following system of congruences: 
\begin{equation}\label{eq_C_CRT}
\mathsf N = \mathsf k_i \mathsf\Gamma_i + \mathsf r_i, \ i = 1, 2, \ldots, L,
\end{equation}
where $\mathsf k_i$, $i = 1, 2, \ldots, L$, are unknown Gaussian integers called folding Gaussian integers. The problem is to determine $\mathsf N$ from its remainders $\mathsf r_i$, $i = 1, 2, \ldots, L$. This problem occurs in multi-channel SR-ADC for complex-valued bandlimited signals \cite{Gong_2021}.

We next present C-CRT. For two Gaussian integers $\mathsf\Gamma_i$ and $\mathsf\Gamma_j$, we say that $\mathsf\Gamma_i$ and $\mathsf\Gamma_j$ are coprime if their common divisors are all $\pm 1$ and $\pm\mathrm i$. This coprimality is consistent with that for 2D integer matrices, i.e., if $\mathbf\Gamma_i$ and $\mathbf\Gamma_j$ are treated as two 2D integer matrices, then two Gaussian integers $\mathsf\Gamma_i$ and $\mathsf\Gamma_j$ are coprime if and only if their corresponding two 2D integer matrices $\mathbf\Gamma_i$ and $\mathbf\Gamma_j$ are coprime \cite{Li_2019_def}.

\begin{theorem} [C-CRT] \label{Th_C_CRT}
Let $\mathsf\Gamma_1, \mathsf\Gamma_2, \ldots, \mathsf\Gamma_L$ be pairwise coprime Gaussian integers with $|\mathsf\Gamma_i| \ge \sqrt{2}$ for $i = 1, 2, \ldots, L$. Then, for a complex number $\mathsf N \in \mathcal F_{\mathsf\Gamma}$, the system of congruences \eqref{eq_C_CRT} has a unique solution
\begin{equation}\label{eq_N_mod_Gamma}
\mathsf N = \left\langle \mathsf r_1 - \left\lfloor \mathsf r_1 \right\rfloor + \sum_{i=1}^L\bar{\gamma}_i\mathsf\gamma_i \left\lfloor \mathsf r_i \right\rfloor \right\rangle_\mathsf\Gamma,
\end{equation}
where $\mathsf\Gamma = \prod_{i=1}^L\mathsf\Gamma_i$, $\mathsf\gamma_i = \frac{\mathsf\Gamma}{\mathsf\Gamma_i}$, and $\bar{\mathsf\gamma}_i$ is the modular multiplicative inverse of $\mathsf\gamma_i$ modulo $\mathsf\Gamma_i$, i.e., there exists a Gaussian integer $\bar{\mathsf\Gamma}_i$ such that
\begin{equation}\label{eq_Bezout_1}
\mathsf\gamma_i\bar{\mathsf\gamma}_i + \mathsf \Gamma_i \bar{\mathsf\Gamma}_i = 1.
\end{equation}
\end{theorem}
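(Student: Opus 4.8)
The plan is to reduce this C-CRT --- whose remainders $\mathsf r_i$ and whose solution $\mathsf N$ need not be Gaussian integers --- to the ordinary CRT over the Euclidean domain $\mathbb Z[\mathrm i]$ by peeling off fractional parts. The starting observation is that for any solution $\mathsf N\in\mathcal F_{\mathsf\Gamma}$ of \eqref{eq_C_CRT} the difference $\mathsf N-\mathsf r_i=\mathsf k_i\mathsf\Gamma_i$ is a Gaussian integer, hence has integer real and imaginary parts; therefore $\mathsf N$ and each $\mathsf r_i$ share the same fractional part, i.e. $\mathsf r_i-\lfloor\mathsf r_i\rfloor=\mathsf N-\lfloor\mathsf N\rfloor=:\mathsf f$ for all $i$. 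Subtracting $\mathsf f$ from $\mathsf N-\mathsf r_i=\mathsf k_i\mathsf\Gamma_i$ shows that the \emph{Gaussian integer} $\lfloor\mathsf N\rfloor$ satisfies the ordinary $\mathbb Z[\mathrm i]$-congruences $\lfloor\mathsf N\rfloor\equiv\lfloor\mathsf r_i\rfloor\pmod{\mathsf\Gamma_i}$, $i=1,\dots,L$.

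Next I would apply the classical CRT in $\mathbb Z[\mathrm i]$. Coprimality in the sense of the paper (common divisors are units) coincides with coprimality in the PID $\mathbb Z[\mathrm i]$, so pairwise coprimality gives $\gcd(\mathsf\gamma_i,\mathsf\Gamma_i)=1$ and the B\'ezout identity \eqref{eq_Bezout_1} has a solution $(\bar{\mathsf\gamma}_i,\bar{\mathsf\Gamma}_i)$. Then $\bar{\mathsf\gamma}_i\mathsf\gamma_i\equiv1\pmod{\mathsf\Gamma_i}$ by \eqref{eq_Bezout_1}, while $\bar{\mathsf\gamma}_i\mathsf\gamma_i\equiv0\pmod{\mathsf\Gamma_j}$ for $j\neq i$ since $\mathsf\Gamma_j\mid\mathsf\gamma_i$. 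Hence $\sum_{i=1}^L\bar{\mathsf\gamma}_i\mathsf\gamma_i\lfloor\mathsf r_i\rfloor\equiv\lfloor\mathsf r_j\rfloor\equiv\lfloor\mathsf N\rfloor\pmod{\mathsf\Gamma_j}$ for every $j$, and because the $\mathsf\Gamma_j$ are pairwise coprime this upgrades to a congruence modulo $\mathsf\Gamma=\prod_i\mathsf\Gamma_i$; i.e. $\lfloor\mathsf N\rfloor$ and $\sum_{i=1}^L\bar{\mathsf\gamma}_i\mathsf\gamma_i\lfloor\mathsf r_i\rfloor$ differ by an element of $\mathsf\Gamma\,\mathbb Z[\mathrm i]$.

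Finally I would reassemble. Adding back $\mathsf f=\mathsf r_1-\lfloor\mathsf r_1\rfloor$, the value $\mathsf N=\lfloor\mathsf N\rfloor+\mathsf f$ differs from $\mathsf r_1-\lfloor\mathsf r_1\rfloor+\sum_{i=1}^L\bar{\mathsf\gamma}_i\mathsf\gamma_i\lfloor\mathsf r_i\rfloor$ by an element of $\mathsf\Gamma\,\mathbb Z[\mathrm i]$. By the uniqueness in the Euclidean division \eqref{C_div} applied with $\mathsf M=\mathsf\Gamma$, every complex number is congruent modulo $\mathsf\Gamma$ to exactly one element of $\mathcal F_{\mathsf\Gamma}$, namely its image under $\langle\,\cdot\,\rangle_{\mathsf\Gamma}$; since $\mathsf N\in\mathcal F_{\mathsf\Gamma}$, it follows that $\mathsf N$ equals the right-hand side of \eqref{eq_N_mod_Gamma}. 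This delivers both uniqueness and the explicit formula, and existence is automatic since the $\mathsf N$ producing the $\mathsf r_i$ is itself a solution lying in $\mathcal F_{\mathsf\Gamma}$.

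I do not expect a genuine obstacle here; the one point that demands care is keeping the two kinds of congruence apart --- honest divisibility in $\mathbb Z[\mathrm i]$ for the integer parts $\lfloor\mathsf r_i\rfloor$, versus equivalence modulo the lattice $\mathsf\Gamma\,\mathbb Z[\mathrm i]$ (equivalently $\mathbf\Gamma\mathbb Z^2$, the lattice whose fundamental parallelepiped is $\text{FPD}(\mathbf\Gamma)=\mathcal F_{\mathsf\Gamma}$) for the complex value $\mathsf N$ --- together with the use of \eqref{C_div} for the fact that $\mathcal F_{\mathsf\Gamma}$ is a complete set of residue representatives. The hypothesis $|\mathsf\Gamma_i|\ge\sqrt2$ only serves to exclude unit moduli.
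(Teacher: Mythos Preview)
Your proposal is correct and follows essentially the same approach as the paper: both arguments peel off the common fractional part $\mathsf f=\mathsf r_i-\lfloor\mathsf r_i\rfloor=\mathsf N-\lfloor\mathsf N\rfloor$ to reduce to the classical CRT for Gaussian integers, apply it to get $\lfloor\mathsf N\rfloor\equiv\sum_i\bar{\mathsf\gamma}_i\mathsf\gamma_i\lfloor\mathsf r_i\rfloor\pmod{\mathsf\Gamma}$, and then add $\mathsf f$ back. The only cosmetic difference is that the paper writes out the uniqueness argument by hand (showing any two solutions in $\mathcal F_{\mathsf\Gamma}$ differ by $\mathsf k\mathsf\Gamma$ with $\mathsf k=0$), whereas you invoke the uniqueness clause of the Euclidean division \eqref{C_div} directly; these amount to the same thing.
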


\begin{proof}
Let $\mathsf N' = \mathsf N - \left\lfloor \mathsf N \right\rfloor$ and
$\mathsf r_i' = \mathsf r_i - \left\lfloor \mathsf r_i \right\rfloor$  for $i = 1, 2, \ldots, L$. By \eqref{eq_C_CRT}, we have
\begin{equation}\label{eq_r_0}
\mathsf N' - \mathsf r_i' 
= \mathsf k_i\mathsf\Gamma_i + \left\lfloor \mathsf r_i \right\rfloor - \left\lfloor \mathsf N \right\rfloor 
\in \mathbb Z[\mathrm i],\ i = 1, 2, \ldots, L.
\end{equation}
Since $\mathsf N'\in \mathcal F_1$ and $ \mathsf r_i'\in \mathcal F_1$. We have 
$\mathrm{Re}(\mathsf N' - \mathsf r_i' ) \in (-1, 1)$ 
and 
$\mathrm{Im}(\mathsf N' - \mathsf r_i' ) \in (-1, 1)$.
Then, we obtain from \eqref{eq_r_0} that $\mathsf N' = \mathsf r_i'$ holds for each $i$. Hence, $$\mathsf r_1' = \mathsf r_2' = \cdots = \mathsf r_L'.$$
Consequently, \eqref{eq_C_CRT} can be rewritten as
$$\mathsf N - \mathsf r_1' = \mathsf k_i\mathsf\Gamma_i + \left\lfloor \mathsf r_i \right\rfloor ,\ i = 1, 2, \ldots, L.$$
Since $\mathsf\Gamma_1, \mathsf\Gamma_2, \ldots, \mathsf\Gamma_L$ are pairwise coprime, according to the CRT over rings, 
$$\mathsf N - \mathsf r_1' \equiv \sum_{i=1}^L\bar{\mathsf\gamma}_i\mathsf\gamma_i \left\lfloor \mathsf r_i \right\rfloor \mod \mathsf\Gamma.$$
That is, there exists a Gaussian integer $\mathsf k$ such that
$$\mathsf N - \mathsf r_1' - \sum_{i=1}^L\bar{\mathsf\gamma}_i\mathsf\gamma_i \left\lfloor \mathsf r_i \right\rfloor = \mathsf k \mathsf\Gamma.$$
Therefore, \eqref{eq_N_mod_Gamma} is a solution of \eqref{eq_C_CRT} in $\mathcal F_{\mathsf\Gamma}$.

Next, we prove the uniqueness of the solution in $\mathcal F_{\mathsf\Gamma}$. Let $\mathsf N'\in\mathcal F_\mathsf\Gamma$ be another solution. By (\ref{eq_C_CRT}), we have
$\mathsf{N'} \equiv \mathsf N\mod\mathsf\Gamma_i$. Hence,
$\mathsf\Gamma_i$ divides $\mathsf N' - \mathsf N$.
Since $\mathsf\Gamma_1, \mathsf\Gamma_2, \ldots, \mathsf\Gamma_L$ are pairwise coprime, we obtain that $\mathsf\Gamma$ divides $\mathsf N'-\mathsf N$.
Thus, there exists a Gaussian integer $\mathsf k=k_1+k_2\mathrm i$ such that
$\mathsf N' = \mathsf N + \mathsf k\mathsf\Gamma$. Since $\mathsf N \in\mathcal F_{\mathsf\Gamma}$, there exist $n_1,n_2\in [0,1)$ such that
$\mathsf N = \mathsf\Gamma (n_1+n_2\mathrm i)$. Consequently, $\mathsf N' = \mathsf\Gamma(k_1+n_1+(k_2+n_2)\mathrm i)$. It follows from $\mathsf N'\in\mathcal F_\mathsf\Gamma$ that $k_1+n_1, k_2+n_2\in[0, 1)$.
Since $k_1, k_2\in\mathbb Z$, we have $k_1=k_2=0$. Therefore, $\mathsf N'=\mathsf N$.
\end{proof}

\textbf{Remark 1}: Different from the CRT in rings, Theorem \ref{Th_C_CRT} gives a reconstruction method for a complex number in $\mathcal F_\mathsf\Gamma$ (not just a Gaussian integer) from its remainders. In fact, the CRT in the ring of Gaussian integers can be easily generalized from that in the ring of integers. The C-CRT described here is for any complex number, which comes from the applications where the unknown $\mathsf N$ and its remainders are complex numbers.

Note that the general 2D-CRT for 2D vectors studied in \cite{Xiao_2020} and \cite{Xiao_2024} may not have the concise form in (\ref{eq_N_mod_Gamma}) similar to the conventional CRT for real integers. 
Another advantage of the above C-CRT over the general 2D-CRT is that it may be more convenient to find a set of pairwise coprime Gaussian integers \cite{LHLi_2019} than that for 2D integer matrices. Furthermore, a necessary and sufficient condition was obtained in \cite{PP_2011} for 2D integer matrices of the form (\ref{MathbfM}) called skew-circulant matrices in \cite{PP_2011} to be coprime.

As explained in Introduction, the key for the robust CRT for real numbers in the literature is to have some redundancies in the remainders. 
One of such redundancies is to have a non-unit  gcd among all the moduli. 
We next consider the moduli of the forms $M \mathsf\Gamma_i$ where $M$ is the gcd of all the moduli. Thus, we have the following system of congruences
\begin{equation}\label{mathsfNmod}
\mathsf{N} = \mathsf k_i M\mathsf\Gamma_i + \mathsf r_i, \ i = 1, 2, \ldots, L,
\end{equation}
where $\Gamma = \prod_{i=1}^L\mathsf\Gamma_i$ and $M$ are both assumed positive integers, $\mathsf k_i$, $i = 1, 2, \ldots, L$, are the unknown folding Gaussian integers. In other words, both the gcd and the least common multiple (lcm) of the moduli are assumed integers and in this case $\mathcal F_{M\Gamma}$ is a square of sides on the real and imaginary axes. 
In the following we generalize Theorem \ref{Th_C_CRT} to solve (\ref{mathsfNmod}). 
By (\ref{mathsfNmod}), we have
$$\mathsf{r}_i \equiv \mathsf{N}\mod M.$$ 
Then,
$$\mathsf{r}_i\equiv\left\langle \mathsf{N}\right\rangle_M \mod M.$$
That is, all remainders modulo $M$ have the same value called the common remainder $\mathsf r^c$, which is in $\mathcal F_M$ and can be determined by
\begin{equation}\label{eq_rc}
\mathsf r^c \equiv \mathsf r_i \mod M, \ i=1, 2, \ldots, L.
\end{equation}
It follows that $\mathsf r_i - \mathsf r^c \in M\mathbb Z[\mathrm i]$. Let
\begin{equation}\label{eq_q_iNN0}
\mathsf q_i = \frac{\mathsf r_i - \mathsf r^c}{M} \ \text{and} \
\mathsf N_0 = \frac{\mathsf N - \mathsf r^c}{M}.
\end{equation}
Then, we have $\mathsf q_i \in \mathbb Z[\mathrm i]$ and
\begin{equation}\label{mathsfN0}
\mathsf N_0 \equiv \mathsf q_i \mod \mathsf{\Gamma}_i, \ i=1, 2, \ldots, L.
\end{equation}
If $\mathsf N\in \mathcal F_{M\Gamma}$, we have
$0\leq\left\lfloor\mathrm{Re}(\mathsf N) \right\rfloor, \left\lfloor\mathrm{Im}(\mathsf N) \right\rfloor < M \Gamma$.
This leads to
$0\leq\left\lfloor\frac{\mathrm{Re}(\mathsf N)}{M} \right\rfloor, \left\lfloor\frac{\mathrm{Im}(\mathsf N)}{M} \right\rfloor < \Gamma$.
Since $$ \mathsf N - \mathsf r^c = \left\lfloor \frac{\mathsf N}{M} \right\rfloor M = \left\lfloor \frac{\mathrm{Re}(\mathsf N)}{M} \right\rfloor M + \mathrm i\left\lfloor \frac{\mathrm{Im}(\mathsf N)}{M} \right\rfloor M,$$ we have $\mathsf N - \mathsf r^c \in \mathcal F_{M\Gamma}$, and consequently, $\mathsf N_0 = \frac{\mathsf N - \mathsf r^c}{M}\in \mathcal F_\Gamma$.
By (\ref{mathsfN0}) and Theorem \ref{Th_C_CRT}, we have
\begin{equation}\label{N_0_mod}
\mathsf N_0 = \left\langle\sum_{i=1}^L\bar{\mathsf\gamma}_i\mathsf\gamma_i\mathsf q_i\right\rangle_\Gamma.
\end{equation}
It follows from (\ref{eq_q_iNN0}) that
\begin{equation}\label{eq_N_r_c}
\mathsf N = M \mathsf N_0 + \mathsf r^c.
\end{equation}

\textbf{Remark 2}:
As mentioned earlier, the C-CRT can be viewed as a special 2D-CRT.
If the moduli in the 2D-CRT can be simultaneously diagonalized by integer matrices
into diagonal integer matrices, the 2D-CRT reduces to two individual 1D-CRTs.
The following example demonstrates that the C-CRT does not generally reduce to two individual 1D-CRTs. Let $\mathsf\Gamma_1=3+4\mathrm i$ and $\mathsf\Gamma_2=3-4\mathrm i$. Then the matrix forms of $\mathsf\Gamma_1$ and $\mathsf\Gamma_2$ are $\mathbf\Gamma_1=\begin{pmatrix}3 & -4\\4 & 3 \end{pmatrix} $
and $\mathbf\Gamma_2=\begin{pmatrix}3 & 4\\-4 & 3\end{pmatrix}$, respectively.
If there exist invertible matrices $\mathbf U$ and $\mathbf V$ such that $\mathbf U\mathbf\Gamma_1\mathbf V=\mathrm{diag}(a_1, a_2)$ and
$\mathbf U\mathbf\Gamma_2\mathbf V=\mathrm{diag}(b_1, b_2)$ for some non-zero integers $a_1$, $a_2$, $b_1$, $b_2$, then
$$ \mathbf V^{-1}\mathbf\Gamma_1^{-1}\mathbf U^{-1}\mathbf U\mathbf\Gamma_2\mathbf V = \mathbf V^{-1}\mathbf\Gamma_1^{-1}\mathbf\Gamma_2\mathbf V =\mathrm{diag}\left(\frac{b_1}{a_1}, \frac{b_2}{a_2}\right).$$
Hence, $\mathbf\Gamma_1^{-1}\mathbf\Gamma_2$ has two real eigenvalues, $\frac{b_1}{a_1}$ and $\frac{b_2}{a_2}$. This contradicts the fact that $\mathbf\Gamma_1^{-1}\mathbf\Gamma_2 = \begin{pmatrix}-\frac{7}{25} & \frac{24}{25}\\ -\frac{24}{25}& -\frac{7}{25} \end{pmatrix}$ has two complex eigenvalues, $-\frac{7}{25}+\frac{24}{25} \mathrm i$ and $-\frac{7}{25}-\frac{24}{25} \mathrm i$.

Note that the above process requires solving for the modular multiplicative inverse of the Gaussian integer. Since the ring of Gaussian integers is a Euclidean domain \cite{Dummit_2004}, for any $\mathsf M, \mathsf N \in \mathbb{Z}[\mathrm{i}]$ with $\mathsf M \neq 0$, there exist $\mathsf q, \mathsf r \in \mathbb{Z}[\mathrm{i}]$ such that $\mathsf N = \mathsf q\mathsf M + \mathsf r$, where $|\mathsf r| < |\mathsf M|$. It makes sense that the Euclidean algorithm can be used to find modular multiplicative inverses, similar to how it is used for integers. However, using (\ref{mathsfrN}) to compute $\mathsf r$ is insufficient. For example, if we let $\mathsf N = 5 + 10\mathrm{i}$ and $\mathsf M = 4 + 4\mathrm{i}$, then $\mathsf r = \mathsf N - \mathsf M\left\lfloor \frac{\mathsf N}{\mathsf M} \right\rfloor = 1 + 6\mathrm{i}$. It is evident that the condition $|\mathsf r| < |\mathsf M|$ is not satisfied.
To use the Euclidean algorithm for complex numbers, we introduce the following rounding operation:
$$ \left[\mathsf z \right] = \left[z_1 \right] + \left[z_2 \right]\mathrm i,$$
where $\left[z_i \right]$ satisfy
$-\frac{1}{2} \le z_i - \left[z_i \right] < \frac{1}{2}$ for $ i = 1, 2$.
It is easy to verify that if $\mathsf r = \mathsf N-\mathsf M\left[\frac{\mathsf N}{\mathsf M}\right]$, then $|\mathsf r| < |\mathsf M|$. Hence, we can recursively obtain the modular inverse of the Gaussian integer using the Euclidean algorithm. For example, we let $\mathsf n = 19+8\mathrm i$ and $\mathsf m = 3+4\mathrm{i}$. Clearly, $\left[\frac{\mathsf n}{\mathsf m} \right] = 4-2\mathrm{i}$. Hence,
$\mathsf n = (4-2\mathrm{i})\mathsf m +(-1-2\mathrm{i})$.
Since $\left[\frac{\mathsf m}{-1-2\mathrm{i}} \right] = -2$, we have
$\mathsf m = -2(-1-2\mathrm{i})+1$.
Then, we have the following Bezout's identity:
$$1 = 2\mathsf n + (-7+4\mathrm{i})\mathsf m.$$
Hence, the modular multiplicative inverse of $\mathsf m$ modulo $\mathsf n$ is $-7+4\mathrm{i}$.

\begin{example}\label{ex_C_CRT}
Let us consider the following system of congruence equations
\begin{equation*}\label{ex_mod_eq}
\begin{cases}
\mathsf N \equiv -3+6\mathrm i \mod 2(1+4\mathrm{i}), \\
\mathsf N \equiv -1-6\mathrm i \mod 2(-3-4\mathrm{i}), \\
\mathsf N \equiv -15+44\mathrm i \mod 2(13+16\mathrm i).
\end{cases}
\end{equation*}
Clearly, we have $\mathsf\Gamma_1 = 1+4\mathrm{i}$, $\mathsf\Gamma_2 = -3-4\mathrm{i}$, $\mathsf\Gamma_3 = 13+16\mathrm i$, and $M = 2$. Hence,
$\mathsf\gamma_1 = 25-100\mathrm i$, $\mathsf\gamma_2 = -51+68\mathrm i$, and $\mathsf\gamma_3 = 13-16\mathrm i$. By the Euclidean algorithm, we have
$\bar{\mathsf\gamma}_1 = 1$, $\bar{\mathsf\gamma}_2 = -2-2\mathrm i$, and $\bar{\mathsf\gamma}_3 =9+6\mathrm i$. By (\ref{eq_rc}), we have $\mathsf r^c = \left\langle \mathsf r_i \right\rangle_M = 1$, where $i=1, 2, 3$. By (\ref{eq_q_iNN0}), we have
$\mathsf q_1 = -2 + 3\mathrm i$, $\mathsf q_2 =-1-3\mathrm i$, and $\mathsf q_3 = -8+22\mathrm i$. Hence, we obtain by (\ref{N_0_mod}) that $\mathsf N_0 = 8+9\mathrm i$. It follows from (\ref{eq_N_r_c}) that $\mathsf N = 17+18\mathrm i$.
\end{example}

\subsection{Problem Description and SR-ADCs}

\begin{figure}[h]
\centering
\captionsetup{skip=-10pt}
\setlength{\belowcaptionskip}{-0.2cm}
\captionsetup{font={footnotesize}}
\captionsetup{labelsep=period}
\includegraphics[width=0.5\textwidth]{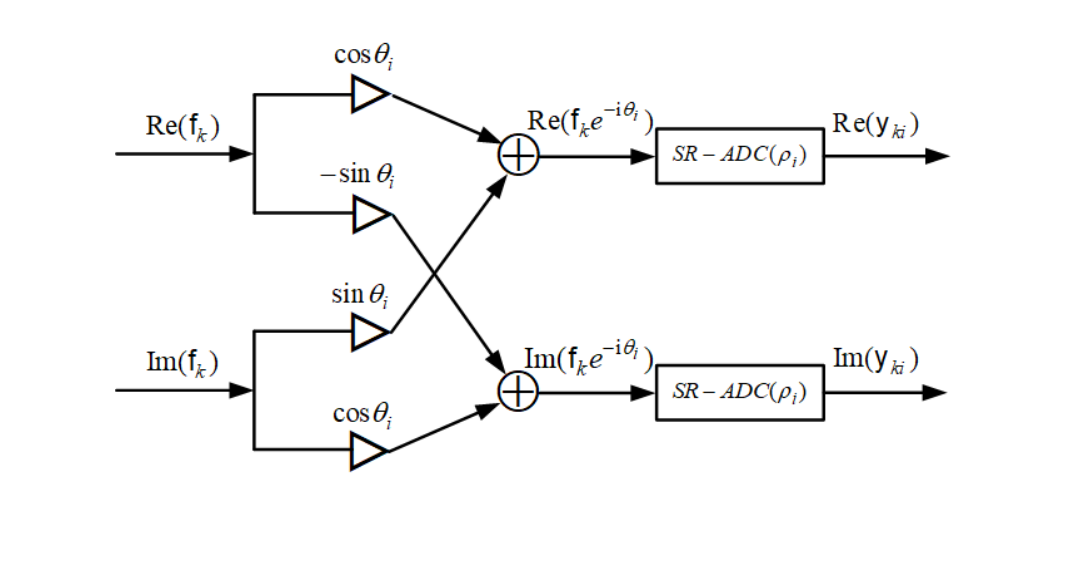}
\setlength{\belowcaptionskip}{-0.2cm}
\caption{Complex-valued modulus SR-ADCs \cite{Gong_2021}.}\label{fig:C_SR_ADC}
\end{figure}

Now we consider the problem of recovery of sampled values for complex-valued bandlimited signals using the C-CRT. For a single SR-ADC, one can compute the modulus of real numbers. Two combined SR-ADCs can obtain a modulus sampler with Gaussian integer $\mathsf M_i$,
as illustrated in Fig. \ref{fig:C_SR_ADC}. For convenience, we denote $\mathsf M_i = \rho_i e^{\mathrm{i}\theta_i}$, where $\theta_i$ represents the angle of $\mathsf M_i$, and $\rho_i$ represents the dynamic range of the SR-ADC. Let $T$ be the sampling interval length of each SR-ADC and $\mathsf f_k = \mathsf f(kT)$, where $\mathsf f(t)$ is a complex-valued bandlimited signal and $k \in \mathbb{Z}$. Then, the output $\mathsf y_{ki}$ can be expressed as
\begin{equation}\label{eq:f_k_n_ki}
\mathsf y_{ki} = \left\langle \mathsf f_k e^{-\mathrm i\theta_i} \right\rangle_{\rho_i}, \ i = 1, 2, \ldots, L.
\end{equation}
By applying a phase shift, one can obtain $\mathsf y_{ki}$ through separately applying the modulo operation to its real and imaginary parts. To be specific, if we rewrite $\mathsf y_{ki}$ as
$$\mathsf y_{ki} = \left\langle \mathrm{Re}(\mathsf f_ke^{-\mathrm{i}\theta_i}) \right\rangle_{\rho_i} + \mathrm i \left\langle \mathrm{Im}(\mathsf f_ke^{-\mathrm{i}\theta_i}) \right\rangle_{\rho_i},$$
then $\left\langle \mathrm{Re}(\mathsf f_{k}e^{-\mathrm{i}\theta_i}) \right\rangle_{\rho_i}$
and $\left\langle \mathrm{Im}(\mathsf f_{k}e^{-\mathrm{i}\theta_i}) \right\rangle_{\rho_i}$ can be obtained by
$$\left\langle \mathrm{Re}(\mathsf f_{k}e^{-\mathrm{i}\theta_{i}}) \right\rangle_{\rho_i} = \left\langle \mathrm{Re}(\mathsf f_{k})\cos\theta_i + \mathrm{Im}(\mathsf f_{k})\sin\theta_i\right\rangle_{\rho_i}$$
and $$\left\langle\mathrm{Im}(\mathsf f_{k}e^{-\mathrm{i}\theta_{i}})\right\rangle_{\rho_i} = \left\langle\mathrm{Im}(\mathsf f_{k})\cos\theta_i - \mathrm{Re}(\mathsf f_{k})\sin\theta_i\right\rangle_{\rho_i},$$
respectively. By (\ref{eq:f_k_n_ki}), we can obtain the system of congruences
$$\mathsf{f}_k \equiv \mathsf{y}_{ki}e^{\mathrm{i}\theta_i} \mod \mathsf{M}_i, \ i = 1, 2, \ldots, L.$$
For convenience, we let $\mathsf{N} = \mathsf{f}_k$, and let $\mathsf{r}_i = \mathsf{y}_{ki}e^{\mathrm{i}\theta_i}$, then we have
$$\mathsf{N} \equiv \mathsf{r}_i \mod \mathsf M_i,\ i = 1, 2, \ldots, L.$$
When there is no error in any remainder, $\mathsf f_k$ or $\mathsf{N}$ can be recovered by the C-CRT. However, in practical applications, the obtained remainders $\mathsf{r}_i$ may have errors. Let the erroneous remainders be
$$\tilde{\mathsf r}_i = \mathsf r_i + \Delta \mathsf r_i,$$
where $i = 1, 2, \ldots, L$, and $\Delta \mathsf r_i$ represents the error in the $i$-th remainder. In the following, we consider the MLE C-CRT based on the assumption of wrapped complex Gaussian distributions of the errors, which is to estimate the complex value $\mathsf{N}$ from its erroneous complex remainders $\tilde{\mathsf r}_1, \tilde{\mathsf r}_2, \ldots, \tilde{\mathsf r}_L$.

\section{MLE C-CRT and Its Fast Algorithm}\label{sec_MLE}
In this section, we first introduce circular distance and wrapped distributions. Then, we provide the expression of the MLE C-CRT. Finally, we propose a fast algorithm for the MLE C-CRT.

\subsection{Circular Distance and Wrapped Distributions}
First, we introduce the definition of circular distance for complex numbers. For two complex numbers $\mathsf x$ and $\mathsf y$, and a nonzero Gaussian integer $\mathsf M$, we define the circular distance between $\mathsf x$ and $\mathsf y$ for $\mathsf M$ as
\begin{equation}\label{d_mathsf_M}
d_\mathsf M(\mathsf x, \mathsf y) = \mathsf x - \mathsf y - \left[\frac{\mathsf x - \mathsf y}{\mathsf M} \right] \mathsf M.
\end{equation}
The circular distance has the following properties, which are proven in Appendix $A$.
\begin{property}\label{prop1_d_M}
$d_\mathsf M(\mathsf x, \mathsf y)\in \mathcal S_\mathsf M$, where
\begin{equation}\label{eq_def_SM}
\mathcal S_\mathsf M = \left\{\mathsf M(c + d \mathrm i): -\frac{1}{2}\leq c, d < \frac{1}{2}\right\}.
\end{equation}
\end{property}

Similar to Property \ref{prop2_F_M}, we have that $\mathcal S_\mathsf M$ is a square with side length $|\mathsf M|$.

\begin{property}\label{prop2_d_M}
For any Gaussian integer $\mathsf k$, it holds that $d_\mathsf{\mathsf M}(\mathsf x +\mathsf {kM}, \mathsf y)=d_\mathsf{\mathsf M}(\mathsf x, \mathsf y+\mathsf {kM}) = d_\mathsf{\mathsf M}(\mathsf x, \mathsf y)$.
Furthermore, $d_{\mathsf M}(\mathsf x, \mathsf y)=d_{\mathsf M}(\mathsf x , \left\langle \mathsf y\right\rangle_\mathsf M)$.
\end{property}


\begin{property}\label{prop3_d_M}
If $\mathsf x-\mathsf y\in \mathcal S_{\mathsf M}$, then $d_\mathsf{M}(\mathsf x, \mathsf y)=\mathsf x-\mathsf y$.
\end{property}

\begin{property}\label{prop4_d_M}
If $\mathsf x-\mathsf y \in \partial\mathcal S_\mathsf M$,
where $\partial\mathcal S_\mathsf M$ denotes the boundary of $\mathcal S_\mathsf M$, then $|d_{\mathsf M}(\mathsf x, \mathsf y)|=|\mathsf x-\mathsf y|$.
\end{property}

\begin{property}\label{prop5_d_M}
Let $k$ be a nonzero integer. If $|\mathsf M|\geq \sqrt{2}$, then
$|d_{k\mathsf M}(d_k(\mathsf x, \mathsf y), 0)|=|d_k(\mathsf x, \mathsf y)|$.
\end{property}

Fig. \ref{S_M_1_2} gives an illustration of circular distance of $\mathsf x = 3+3\mathrm i$ and $\mathsf y = 1-2\mathrm i$. By the definition of the circular distance, we have $d_{3+4\mathrm i}(\mathsf x, \mathsf y) = -1-\mathrm i$ and $d_{4}(\mathsf x, \mathsf y) = -2+\mathrm i$.

\begin{figure}[h]
\centering
\setlength{\belowcaptionskip}{-0.2cm}
\setlength{\subfigcapskip}{-10pt}
\captionsetup{font={footnotesize}}
\captionsetup{labelsep=period}
\subfigure[$\mathsf M = 3+4\mathrm i$.]{\includegraphics[width=.33 \textwidth]{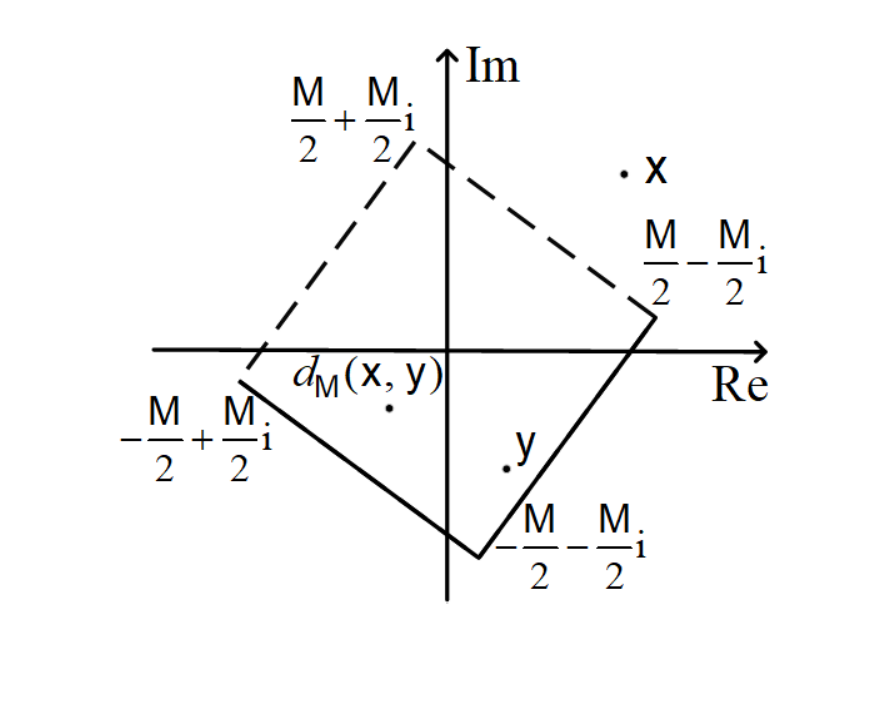}}
\subfigure[$\mathsf M = 4$.]{\includegraphics[width=.3 \textwidth]{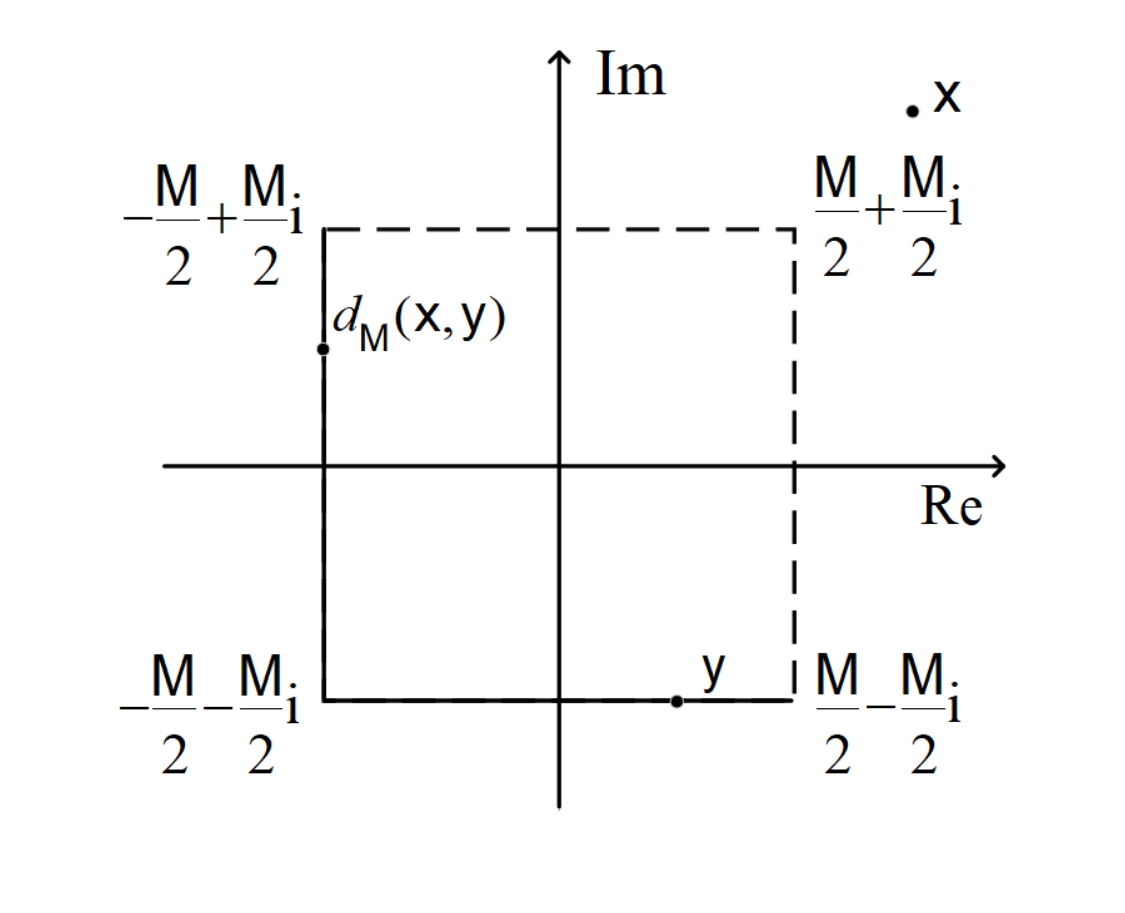}}
\setlength{\belowcaptionskip}{-0.2cm}
\caption{Illustration of circular distance based on $\mathcal S_\mathsf M$.}\label{S_M_1_2}
\end{figure}

\begin{proposition}\label{Th_du_dx}
If $d_\mathsf M(\mathsf r, \mathsf N)$ is considered as a function of $\mathsf r=x+y\mathrm i \in \mathcal F_\mathsf M$, where $\mathsf M =\rho e^{\mathrm i\theta}$ is a Gaussian integer, and $\mathsf N$ is a complex number. Then, the points of discontinuity of $d_\mathsf M(\mathsf r, \mathsf N)$ belong to the following set:
$$\mathcal D = \{\mathsf r: y\sin\theta+x\cos\theta = c_1\text{\ or\ } y\cos\theta-x\sin\theta = c_2\},$$
where $c_1=\pm \frac{\rho}{2}+\mathrm{Re}\left(\langle \mathsf N e^{-\mathrm i\theta}\rangle_{\rho} \right)$ and
$c_2 = \pm \frac{\rho}{2}+\mathrm{Im}\left(\langle \mathsf N e^{-\mathrm i\theta}\rangle_{\rho} \right)$. Furthermore, the measure of $\mathcal D$ is zero.
\end{proposition}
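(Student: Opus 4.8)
The plan is to trace every discontinuity of $d_{\mathsf M}(\cdot,\mathsf N)$ back to the single non-smooth ingredient in the definition \eqref{d_mathsf_M}, the integer-rounding map $[\cdot]$, and then to read off where that rounding jumps once $\mathsf r$ is written in coordinates.

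First I would write $d_{\mathsf M}(\mathsf r,\mathsf N) = (\mathsf r-\mathsf N) - [\mathsf w]\,\mathsf M$ with $\mathsf w := (\mathsf r-\mathsf N)/\mathsf M = (\mathsf r-\mathsf N)e^{-\mathrm i\theta}/\rho$. The map $\mathsf r\mapsto \mathsf r-\mathsf N$ is affine, hence continuous, and multiplication by the fixed nonzero constant $\mathsf M$ is a homeomorphism of $\mathbb C$; therefore $d_{\mathsf M}(\cdot,\mathsf N)$ and $\mathsf r\mapsto[\mathsf w]$ have the same set of discontinuity points. Since $[\mathsf w]=[\mathrm{Re}(\mathsf w)]+[\mathrm{Im}(\mathsf w)]\mathrm i$ and the scalar rounding $t\mapsto[t]$ is locally constant on $\mathbb R\setminus(\mathbb Z+\tfrac12)$, the function $\mathsf r\mapsto[\mathsf w]$ is continuous at every $\mathsf r$ with $\mathrm{Re}(\mathsf w)\notin\mathbb Z+\tfrac12$ and $\mathrm{Im}(\mathsf w)\notin\mathbb Z+\tfrac12$. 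Consequently, every discontinuity point of $d_{\mathsf M}(\cdot,\mathsf N)$ must satisfy $\mathrm{Re}(\mathsf w)\in\mathbb Z+\tfrac12$ or $\mathrm{Im}(\mathsf w)\in\mathbb Z+\tfrac12$.

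Next I would pass to coordinates. From $\mathsf re^{-\mathrm i\theta}=(x\cos\theta+y\sin\theta)+\mathrm i(y\cos\theta-x\sin\theta)$ one gets $\rho\,\mathrm{Re}(\mathsf w)=x\cos\theta+y\sin\theta-\mathrm{Re}(\mathsf Ne^{-\mathrm i\theta})$ and $\rho\,\mathrm{Im}(\mathsf w)=y\cos\theta-x\sin\theta-\mathrm{Im}(\mathsf Ne^{-\mathrm i\theta})$, so a discontinuity forces $x\cos\theta+y\sin\theta=\mathrm{Re}(\mathsf Ne^{-\mathrm i\theta})+\rho(k+\tfrac12)$ or $y\cos\theta-x\sin\theta=\mathrm{Im}(\mathsf Ne^{-\mathrm i\theta})+\rho(k+\tfrac12)$ for some $k\in\mathbb Z$. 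To pin down the admissible $k$ I would invoke Property \ref{prop1_F_M}: $\mathsf r\in\mathcal F_{\mathsf M}$ implies $\mathsf re^{-\mathrm i\theta}\in\mathcal F_\rho$, whence $x\cos\theta+y\sin\theta\in[0,\rho)$ and $y\cos\theta-x\sin\theta\in[0,\rho)$. Writing $\mathrm{Re}(\mathsf Ne^{-\mathrm i\theta})=\mathrm{Re}(\langle\mathsf Ne^{-\mathrm i\theta}\rangle_\rho)+\rho m$ with $m\in\mathbb Z$ and using $\mathrm{Re}(\langle\mathsf Ne^{-\mathrm i\theta}\rangle_\rho)\in[0,\rho)$, the right-hand side $\mathrm{Re}(\langle\mathsf Ne^{-\mathrm i\theta}\rangle_\rho)+\rho(k+m+\tfrac12)$ can lie in $[0,\rho)$ only for $k+m\in\{-1,0\}$; that is, the only candidate line from the real-part condition is $x\cos\theta+y\sin\theta=\mathrm{Re}(\langle\mathsf Ne^{-\mathrm i\theta}\rangle_\rho)\pm\tfrac\rho2=c_1$ (and exactly one sign is actually feasible, according to whether $\mathrm{Re}(\langle\mathsf Ne^{-\mathrm i\theta}\rangle_\rho)<\tfrac\rho2$). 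The imaginary-part condition is handled identically and produces $c_2$. Hence every discontinuity point of $d_{\mathsf M}(\cdot,\mathsf N)$ belongs to $\mathcal D$.

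Finally, since $(\cos\theta,\sin\theta)\neq(0,0)$, each of the equations $y\sin\theta+x\cos\theta=c_1$ and $y\cos\theta-x\sin\theta=c_2$ defines a straight line in $\mathbb R^2$, so $\mathcal D$ is a finite union of lines and therefore has two-dimensional Lebesgue measure zero. I expect the only point requiring care is the bookkeeping in the third paragraph — checking that, once restricted to $\mathcal F_{\mathsf M}$, only the lines with the stated constants $c_1,c_2$ survive among the countably many translates $\mathrm{Re}(\mathsf Ne^{-\mathrm i\theta})+\rho(k+\tfrac12)$ and $\mathrm{Im}(\mathsf Ne^{-\mathrm i\theta})+\rho(k+\tfrac12)$; everything else is routine.
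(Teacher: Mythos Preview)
Your proof is correct and follows essentially the same route as the paper's: both trace the discontinuities to the jumps of the rounding $[\cdot]$ applied to the real and imaginary parts of $(\mathsf r-\mathsf N)e^{-\mathrm i\theta}/\rho$, then use Property~\ref{prop1_F_M} to confine $\mathsf re^{-\mathrm i\theta}$ to $\mathcal F_\rho$ and read off the candidate lines. The only cosmetic difference is that the paper first invokes Property~\ref{prop2_d_M} to replace $\mathsf N$ by $\langle\mathsf N\rangle_{\mathsf M}$ (so that $\mathsf Ne^{-\mathrm i\theta}\in\mathcal F_\rho$ and the difference lands directly in $(-\rho,\rho)$), whereas you keep $\mathsf N$ general and absorb the integer multiple of $\rho$ into your parameter $m$; the two bookkeeping choices yield the same lines.
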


\begin{proof}
Note that $d_\mathsf M(\mathsf r, \mathsf N) = d_\mathsf M(\mathsf r, \left\langle \mathsf N \right\rangle_\mathsf M)$ by Property \ref{prop2_d_M}, we consider $\mathsf N\in\mathcal F_\mathsf M$ without loss of generality. Since $\mathsf r, \mathsf N \in \mathcal F_\mathsf M$, we have $\mathsf re^{-\mathrm i\theta}, \mathsf Ne^{-\mathrm i\theta}\in \mathcal F_{\rho}$
by Property \ref{prop1_F_M}. Hence,
$\mathrm{Re}\left(\mathsf re^{-\mathrm i\theta}-\mathsf Ne^{-\mathrm i\theta}\right), \mathrm{Im}\left(\mathsf re^{-\mathrm i\theta}- \mathsf Ne^{-\mathrm i\theta}\right)\in (-\rho, \rho)$. Note that
$$d_\mathsf M(\mathsf r, \mathsf N)=\mathsf r - \mathsf N - \left[\frac{\mathsf re^{-\mathrm i\theta} - \mathsf Ne^{-\mathrm i\theta}}{\rho} \right] \mathsf M.$$
That is,
$$d_\mathsf M(\mathsf r, \mathsf N) = \mathsf r - \mathsf N - \left[\frac{\mathrm{Re}(\mathsf re^{-\mathrm i\theta} - \mathsf Ne^{-\mathrm i\theta})}{\rho} \right] \mathsf M - \mathrm i\left[\frac{\mathrm{Im}(\mathsf re^{-\mathrm i\theta} - \mathsf Ne^{-\mathrm i\theta})}{\rho} \right] \mathsf M.$$
Thus, $d_\mathsf M(\mathsf r, \mathsf N)$ is discontinuous at $\mathsf r$ when the real or imaginary part of $\mathsf re^{-\mathrm i\theta} - \mathsf Ne^{-\mathrm i\theta}$ equals $\frac{\rho}{2}$ or $-\frac{\rho}{2}$.
Note that since the proofs of these four cases are similar, we only consider the case when $\mathrm{Re}\left(\mathsf re^{-\mathrm i\theta}-\mathsf Ne^{-\mathrm i\theta}\right) = \frac{\rho}{2}$, i.e., $\mathrm{Re}\left(\mathsf re^{-\mathrm i\theta}\right) = \frac{\rho}{2}+\mathrm{Re}\left(\mathsf Ne^{-\mathrm i\theta}\right)$. Since $$\mathsf re^{-\mathrm i\theta} = (y\sin\theta+x\cos\theta) + (y\cos\theta-x\sin\theta)\mathrm i,$$ we have $$y\sin\theta+x\cos\theta=\frac{\rho}{2}+\mathrm{Re}\left(\mathsf N e^{-\mathrm i\theta}\right).$$
Hence, $\mathsf r \in \mathcal D$. Since $\mathcal D$ contains at most four line segments, its measure is zero.
\end{proof}

We now consider that the unknown complex number $\mathsf N$ to determine is noisy in its observation and the noise follows a complex Gaussian distribution as commonly assumed. This noise results in noises in the remainders. To study the noises in the remainders, we introduce the multivariate wrapped distributions below. Let $f_{\mathbf X}(\mathbf x)$ be the probability density function (pdf) of a 2D random variable $\mathbf X$.
As described in \cite{Greco_2021}, the pdf of $\mathbf Y \equiv \mathbf X \mod \mathbf I$ is
$$f_{\mathbf Y}(\mathbf y)=\sum_{\mathbf k\in \mathbb Z^2}f_{\mathbf X}(\mathbf y+ \mathbf k),$$
where $\mathbf y \in \mathrm{FPD}(\mathbf I)$, and $\mathbf I$ is the identity matrix. Next, we consider the pdf of $\mathbf Z\equiv \mathbf X \mod \mathbf M$, where $\mathbf M$ is an invertible matrix. Clearly,
$$\mathbf M^{-1}\mathbf Z\equiv \mathbf M^{-1}\mathbf X \mod \mathbf I.$$
If we let $\mathbf Z'=\mathbf M^{-1}\mathbf Z$, then
$$f_{\mathbf Z'}(\mathbf z')=|\det(\mathbf M)|\sum_{\mathbf k\in \mathbb Z^2}f_{\mathbf X}(\mathbf M(\mathbf z'+ \mathbf k)),$$
where $\det(\mathbf M)$ is the determinant of $\mathbf M$. Consequently, the pdf of $\mathbf Z$ is
\begin{equation}\label{eq_pdf_of_fZ}\hspace{-0em}
f_{\mathbf Z}(\mathbf z)=|\det(\mathbf M^{-1})|f_{\mathbf Z'}(\mathbf M^{-1}\mathbf z) = \sum\limits_{\mathbf k\in \mathbb Z^2}f_{\mathbf X}(\mathbf z+\mathbf M \mathbf k),
\end{equation}
where $\mathbf z \in \mathrm{FPD}(\mathbf M)$. Then, (\ref{eq_pdf_of_fZ}) can be used to represent the pdf of the complex random variable $\mathsf X = \mathsf N+\mathsf W$, where $\mathsf W$ is the noise of $0$ mean and $\mathsf N$ is the true unknown complex number to determine. Specifically, if $\mathsf W$ follows a complex Gaussian distribution, then the pdf of $\mathsf X$ is
$$f_{\mathsf X}(\mathsf x) = \frac{1}{2\pi \sigma^2}\exp \left\{-\frac {\left|{\mathsf x} - \mathsf N \right|^2} {2\sigma^2} \right\},$$
where $\sigma^2$ is the variance of both the real and imaginary parts of $\mathsf W$.
Thus, from (\ref{eq_pdf_of_fZ}) we have the pdf of $\mathsf R\equiv \mathsf X\mod \mathsf M$:
\begin{equation}\label{eq_pdf_of_R}
f_{\mathsf R}(\mathsf r) = \frac{1}{2\pi \sigma^2}\sum\limits_{\mathsf k \in \mathbb Z[\mathrm i]} \exp \left\{-\frac {\left|{\mathsf r} - \mathsf N + \mathsf k \mathsf M \right|^2} {2\sigma^2} \right\},
\end{equation}
where $\mathsf r\in \mathcal F_{\mathsf M}$. Let $\mathsf k' = \mathsf k + \left[\frac{\mathsf r -\mathsf N}{\mathsf M}\right]$, we can obtain from (\ref{eq_pdf_of_R}) that
\begin{align}\label{eq:pdf_hat_ri2}
f_{\mathsf R}(\mathsf r)
& =\frac{1}{2\pi\sigma^2}
\sum\limits_{\mathsf k' \in \mathbb Z[\mathrm i]} \exp \left
\{-\frac{|\mathsf r - \mathsf N - \left[\frac{\mathsf r - \mathsf N}{\mathsf M}\right]\mathsf M + \mathsf k'\mathsf M|^2} {2\sigma^2} \right\}\nonumber\\
& =  \frac{1}{2\pi\sigma^2} \sum\limits_{\mathsf k' \in \mathbb Z[\mathrm i]} \exp \left
\{-\frac{|d_\mathsf M(\mathsf r, \mathsf N) + \mathsf k'\mathsf M|^2} {2\sigma^2} \right\}.
\end{align}

To simplify the expression of (\ref{eq:pdf_hat_ri2}), we introduce a proposition below.

\begin{proposition}\label{Th_3sigma}
Let $\mathsf r=x+y\mathrm i\in\mathcal F_{\mathsf M}$. If $|\mathsf M| \geq 6\sqrt{2}\sigma$, then
\begin{equation}\label{eq_3sigma}
\frac{1}{2\pi\sigma^2}\iint_{\mathcal F_{\mathsf M}}\exp\left
\{-\frac{|d_\mathsf M(\mathsf r, \mathsf N)|^2}{2\sigma^2} \right\}\mathrm dx\mathrm dy > 0.9973^2.
\end{equation}
\end{proposition}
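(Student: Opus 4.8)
The plan is to reduce the two-dimensional integral over $\mathcal F_{\mathsf M}$ to a product of two one-dimensional Gaussian integrals by exploiting the rotational structure of $\mathcal F_{\mathsf M}$ and $d_{\mathsf M}$. First I would apply Property \ref{prop2_d_M} to assume $\mathsf N \in \mathcal F_{\mathsf M}$ without loss of generality, and then perform the change of variables $\mathsf r \mapsto \mathsf r e^{-\mathrm i\theta}$, where $\mathsf M = \rho e^{\mathrm i\theta}$. By Property \ref{prop1_F_M} this maps $\mathcal F_{\mathsf M}$ bijectively onto $\mathcal F_{\rho}$, which by Property \ref{prop2_F_M} is precisely the axis-aligned square $[0,\rho)\times[0,\rho)$; the Jacobian of a rotation is $1$, so the integral is unchanged. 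Using the identity $d_{\mathsf M}(\mathsf r,\mathsf N) = \bigl(d_\rho(\mathrm{Re}(\mathsf r e^{-\mathrm i\theta}),\mathrm{Re}(\mathsf N e^{-\mathrm i\theta})) + \mathrm i\, d_\rho(\mathrm{Im}(\mathsf r e^{-\mathrm i\theta}),\mathrm{Im}(\mathsf N e^{-\mathrm i\theta}))\bigr)e^{\mathrm i\theta}$ (which follows by expanding the rounding operation componentwise as in the proof of Proposition \ref{Th_du_dx}), and the fact that $|e^{\mathrm i\theta}|=1$, the exponent $|d_{\mathsf M}(\mathsf r,\mathsf N)|^2$ becomes a sum of squares of the two real circular distances. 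Hence the integral factors as a product of two integrals of the form $\frac{1}{\sqrt{2\pi}\sigma}\int_{0}^{\rho} \exp\{-d_\rho(u,u_0)^2/(2\sigma^2)\}\,du$ for suitable centers $u_0 \in [0,\rho)$.

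Next I would evaluate this one-dimensional integral. Since $u$ ranges over a full period $[0,\rho)$ and $d_\rho(\cdot,u_0)$ is periodic with period $\rho$, a shift of the integration variable by $u_0$ (mod $\rho$) reduces it to $\frac{1}{\sqrt{2\pi}\sigma}\int_{-\rho/2}^{\rho/2} \exp\{-t^2/(2\sigma^2)\}\,dt$, using Property \ref{prop3_d_M} to identify $d_\rho(t,0)=t$ on the interior of $\mathcal S_\rho$. This equals $\Phi(\rho/(2\sigma)) - \Phi(-\rho/(2\sigma)) = 2\Phi(\rho/(2\sigma)) - 1$, where $\Phi$ is the standard normal CDF. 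The hypothesis $|\mathsf M| = \rho \geq 6\sqrt{2}\,\sigma$ gives $\rho/(2\sigma) \geq 3\sqrt{2} > 3$, and since $\Phi$ is increasing, $2\Phi(\rho/(2\sigma)) - 1 > 2\Phi(3) - 1 > 0.9973$ by the standard three-sigma bound. The $6\sqrt 2$ (rather than $6$) accounts for the square $\mathcal S_\rho$ having side $\rho$ while the relevant coordinate half-width is $\rho/2$; wait—more precisely, the side length of $\mathcal F_{\mathsf M}$ is $|\mathsf M|/\sqrt{?}$—I should double check: Property \ref{prop2_F_M} says the area of $\mathcal F_{\mathsf M}$ is $\rho^2$, so its side is $\rho = |\mathsf M|$, and the half-width is $\rho/2 = |\mathsf M|/2 \geq 3\sqrt 2\,\sigma$; the extra $\sqrt 2$ is slack that strengthens the bound, so the three-sigma estimate applies comfortably.

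Squaring the two identical one-dimensional factors then yields $(2\Phi(\rho/(2\sigma))-1)^2 > 0.9973^2$, which is exactly \eqref{eq_3sigma}. The main obstacle I anticipate is bookkeeping in the change of variables: one must verify carefully that the rounding-based circular distance $d_{\mathsf M}$ transforms cleanly under the rotation $e^{-\mathrm i\theta}$ into a pair of decoupled one-dimensional circular distances, i.e., that $\bigl[\tfrac{\mathsf r - \mathsf N}{\mathsf M}\bigr]$ acts separately on the rotated real and imaginary coordinates — this is essentially the content already extracted in the proof of Proposition \ref{Th_du_dx}, so it can be cited rather than reproven. A secondary point is confirming that the center $u_0$ of each one-dimensional integral lies in $[0,\rho)$ so that the periodic shift argument is valid; this follows from $\mathsf N e^{-\mathrm i\theta} \in \mathcal F_\rho$ by Property \ref{prop1_F_M}. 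Everything else is a routine Gaussian tail estimate.
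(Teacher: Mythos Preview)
Your argument is correct but takes a different route from the paper. The paper first rewrites the integral over $\mathcal F_{\mathsf M}$ as an integral over $\mathcal S_{\mathsf M}$ by treating $\mathsf r\mapsto d_{\mathsf M}(\mathsf r,\mathsf N)$ as a measure-preserving map (justified via a Riemann-sum argument and Proposition~\ref{Th_du_dx}), and then---because $\mathcal S_{\mathsf M}$ is in general a \emph{rotated} square---bounds the integral from below by inscribing the axis-aligned square $\mathcal S_T$ with $T=\tfrac{\sqrt{2}}{2}|\mathsf M|$, over which the Gaussian factors. It is this inscribed-square step that consumes the extra $\sqrt{2}$ in the hypothesis $|\mathsf M|\geq 6\sqrt{2}\sigma$, yielding $T\geq 6\sigma$ and hence the $0.9973^2$ bound.

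Your approach instead rotates first, sending $\mathcal F_{\mathsf M}$ to the axis-aligned square $\mathcal F_\rho=[0,\rho)^2$, and then uses the identity $d_{\mathsf M}(\mathsf r,\mathsf N)e^{-\mathrm i\theta}=d_\rho(u',u_0')+\mathrm i\,d_\rho(v',v_0')$ to factor the integral \emph{exactly} as a product of two one-dimensional circular-Gaussian integrals, each equal to $\tfrac{1}{\sqrt{2\pi}\sigma}\int_{-\rho/2}^{\rho/2}e^{-t^2/(2\sigma^2)}dt$. This is cleaner: it avoids both the Riemann-sum justification and the inscribed-square lower bound, and it shows (as you noticed with your remark about ``slack'') that the weaker hypothesis $|\mathsf M|\geq 6\sigma$ would already suffice. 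The paper's route, by contrast, explains \emph{why} the stated constant is $6\sqrt{2}$: that factor is genuinely needed for the inscribed-square estimate, whereas in your argument it is simply unused margin.
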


\begin{proof}
By Proposition \ref{Th_du_dx}, we know that the discontinuity points of $d_{\mathsf M}(\mathsf r, \mathsf N)$ in $\mathcal F_{\mathsf M}$ form a set with measure zero.
Thus, $g(\mathsf r) \triangleq \exp\left\{-\frac{|d_{\mathsf M}(\mathsf r, \mathsf N)|^2}{2\sigma^2} \right\}$ is integrable on $\mathcal F_{\mathsf M}$.
We divide $\mathcal F_{\mathsf M}$ into $n$ subrectangles equally for convenience, say $\mathcal P_1, \mathcal P_2, \ldots, \mathcal P_n$. For any $\mathsf p_j\in\mathcal P_j$ such that $d_\mathsf M(\mathsf r, \mathsf N)$ is continuous at $\mathsf p_j$, denoting $d_{\mathsf M}(\mathsf p_j, \mathsf N) = u_j + v_j\mathrm i$, we have
\begin{equation}\label{eq_int_FM}
\begin{aligned}
 \iint_{\mathcal F_{\mathsf M}}\exp\left\{-\frac{|d_{\mathsf M}(\mathsf r, \mathsf N)|^2}{2\sigma^2} \right\}\mathrm dx\mathrm dy
 =\lim_{n\to\infty}\frac{|\mathsf M|^2}{n} \sum_{j=1}^{n}\exp\left\{-\frac{ u_j^2+v_j^2}{2\sigma^2} \right\}
 =\iint_{\mathcal S_{\mathsf M}}\exp\left\{-\frac{u^2+v^2}{2\sigma^2} \right\}\mathrm du\mathrm dv.
\end{aligned}
\end{equation}
Let $T=\frac{\sqrt{2}}{2}|\mathsf M|$. According to (\ref{eq_def_SM}), we know that $\mathcal S_{T}$ is a square inscribed in the incircle of $\mathcal S_{\mathsf M}$, as shown in Fig. \ref{STSM}. Hence,
\begin{align}\label{iint_SM}
\left(\int_{-\frac{T}{2}}^{\frac{T}{2}}\exp\Big\{-\frac{u^2}{2\sigma^2}
 \Big\}\mathrm du\right)^2 = \iint_{\mathcal S_T}\exp\left\{-\frac{u^2+v^2}{2\sigma^2} \right\}\mathrm du\mathrm dv
  <\iint_{\mathcal S_{\mathsf M}}\exp\left\{-\frac{u^2+v^2}{2\sigma^2} \right\}\mathrm du\mathrm dv.
\end{align}
Note that $$\frac{1}{\sqrt{2\pi} \sigma}\int_{-3\sigma}^{3\sigma}\exp\left\{-\frac{u^2}{2\sigma^2} \right\}\mathrm du = 0.9973.$$ Since $|\mathsf M|\geq 6\sqrt{2}\sigma$, i.e., $T \ge 6\sigma$, we have
$$ \frac{1}{2\pi\sigma^2}\left(\int_{-\frac{T}{2}}^{\frac{T}{2}}\exp\left\{-\frac{u^2}{2\sigma^2} \right\}\mathrm du\right)^2 \ge 0.9973^2.$$
By (\ref{iint_SM}), we have $$\frac{1}{2\pi\sigma^2}\iint_{\mathcal S_{\mathsf M}}\exp\left\{-\frac{u^2+v^2}{2\sigma^2} \right\}\mathrm du\mathrm dv > 0.9973^2.$$
This leads to (\ref{eq_3sigma}) by (\ref{eq_int_FM}) .
\end{proof}

\begin{figure}[htbp]
\centering
\captionsetup{skip=-2pt}
\captionsetup{font={footnotesize}}
\captionsetup{labelsep=period}
\includegraphics[width=0.3\textwidth]{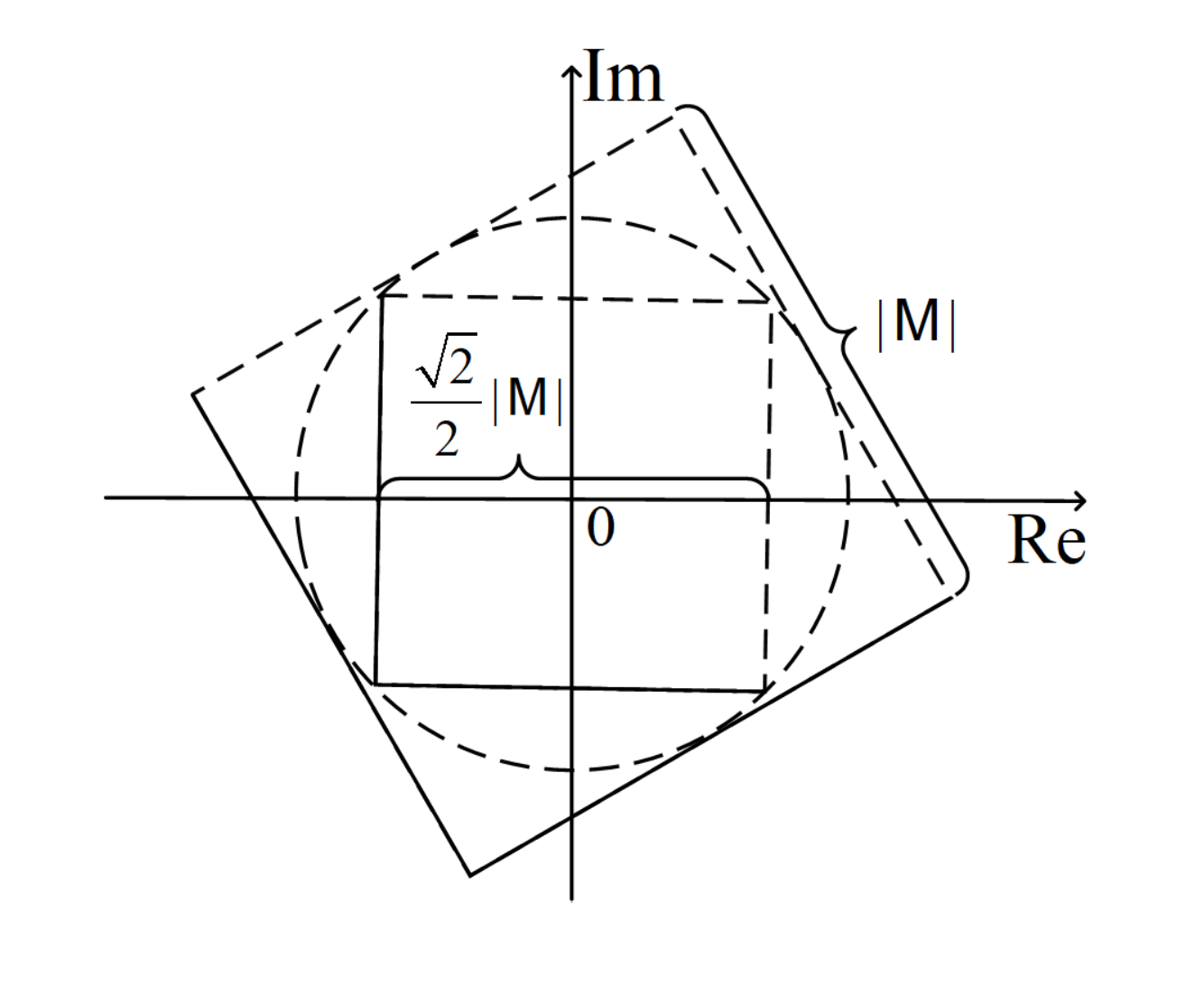}
\setlength{\belowcaptionskip}{-0.2cm}
\caption{Description of $\mathcal S_T$ and $\mathcal S_{\mathsf M}$.}\label{STSM}
\end{figure}

By Proposition \ref{Th_3sigma}, when $|\mathsf M|\geq 6\sqrt{2}\sigma$,
the pdf (\ref{eq:pdf_hat_ri2}) can be approximated as
\begin{equation}\label{eq_approximated_pdf}
f_{\mathsf R}(\mathsf r)\approx \frac{1}{2\pi\sigma^2}\exp\left
\{-\frac{|d_\mathsf M(\mathsf r, \mathsf N)|^2}{2\sigma^2} \right\}.
\end{equation}

\subsection{MLE C-CRT}
Now, we calculate the maximum likelihood function for $\mathsf N\in\mathcal F_{M\Gamma}$, which is the fixed but unknown complex number to determine as mentioned above. Assume that $\mathsf X_i=\mathsf N+\mathsf W_i$ follow complex Gaussian distributions, and $\mathsf W_i$ are independent of each other, where $i=1, 2, \ldots, L$. The variances of the real and imaginary parts of $\mathsf W_i$ are both $\sigma_i^2$, and their means are both $0$ for each $i$. Denote $\mathsf R_i$ as the random variable with observed value $\tilde{\mathsf r}_i$, which satisfies $\mathsf R_i \equiv \mathsf X_i\mod M\mathsf\Gamma_i$ and $\Gamma=\prod_{i=1}^L\mathsf\Gamma_i$ is an integer as assumed earlier. Then, $\mathsf R_i$ follows a wrapped complex Gaussian distribution as studied above.
For each $i$, from (\ref{eq_pdf_of_R}), we can obtain the conditional pdf of $\mathsf R_i$ as follows:
$$ f_{\mathsf R_i} (\tilde{\mathsf r}_i\mid \mathsf N ) = \frac{1}{2\pi \sigma_i^2}\sum\limits_{\mathsf k \in \mathbb Z[\mathrm i]} \exp \left\{-\frac {\left|\tilde{\mathsf r}_i - \mathsf N+ \mathsf k M \mathsf\Gamma_i \right|^2} {2\sigma_i ^2} \right\}.$$
Generally, $|M\mathsf\Gamma_i|$ is much larger than $\sigma_i$.
By Proposition \ref{Th_3sigma} and (\ref{eq_approximated_pdf}), we have
$$f_{\mathsf R_i}(\tilde{\mathsf r}_i\mid\mathsf N)\approx \frac{1}{2\pi\sigma_i^2}\exp\left
\{-\frac{|d_{M\mathsf\Gamma_i}(\tilde{\mathsf r}_i, \mathsf N)|^2}{2\sigma_i ^2} \right\}.$$
Consequently, we can approximate the joint conditional pdf
$f_{\mathsf R_1, \mathsf R_2, \ldots, \mathsf R_L}(\tilde{\mathsf r}_1, \tilde{\mathsf r}_2, \ldots, \tilde{\mathsf r}_L\mid\mathsf N) = \prod_{i=1}^L f_{\mathsf R_i}(\tilde{\mathsf r}_i\mid\mathsf N)$ as
$$ (2\pi)^{-L}\prod_{i=1}^{L} \sigma_i^{-2} \exp \left\{\! -\sum_{i=1}^L {\frac{1} {2\sigma_i^2}|d_{M\mathsf\Gamma_i}\left(\tilde{\mathsf r}_i, \mathsf N\right)|^2}\right\}.$$
Therefore, we have the log likelihood function of $\mathsf N$
\begin{equation} \label{eq:log_likelihood_fun}
\mathcal{L}\left(\mathsf z \right) = -L \ln 2\pi - 2\sum_{i=1}^L \ln \sigma_i-
\sum_{i=1}^L\frac{|d_{M\mathsf\Gamma_i}\left(\tilde{\mathsf r}_i, \mathsf z \right)|^2}{2\sigma_i^2}.
\end{equation}
The MLE maximizes $\mathcal{L}(\mathsf z)$ with respect to an unknown complex number $\mathsf z \in \mathcal F_{M\Gamma}$, which yields the following minimization problem
\begin{equation}\label{eq:minmization_problem}
\begin{aligned}
 \hat{\mathsf N}_{\text{MLE}} & =  \arg \max\limits_{\mathsf z \in \mathcal F_{M\Gamma}} \mathcal{L}\left(\mathsf z\right)
 =   \arg \min\limits_{\mathsf z \in \mathcal F_{M\Gamma}} \sum_{i=1}^L {\frac{1} {\sigma_i ^2}|d_{M\mathsf\Gamma_i}\left(\tilde{\mathsf r}_i, \mathsf z\right)|^2}.
\end{aligned}
\end{equation}
\begin{figure}[h!]
\centering
\captionsetup{font={footnotesize}}
\captionsetup{labelsep=period}
\includegraphics[width= 0.5\textwidth]{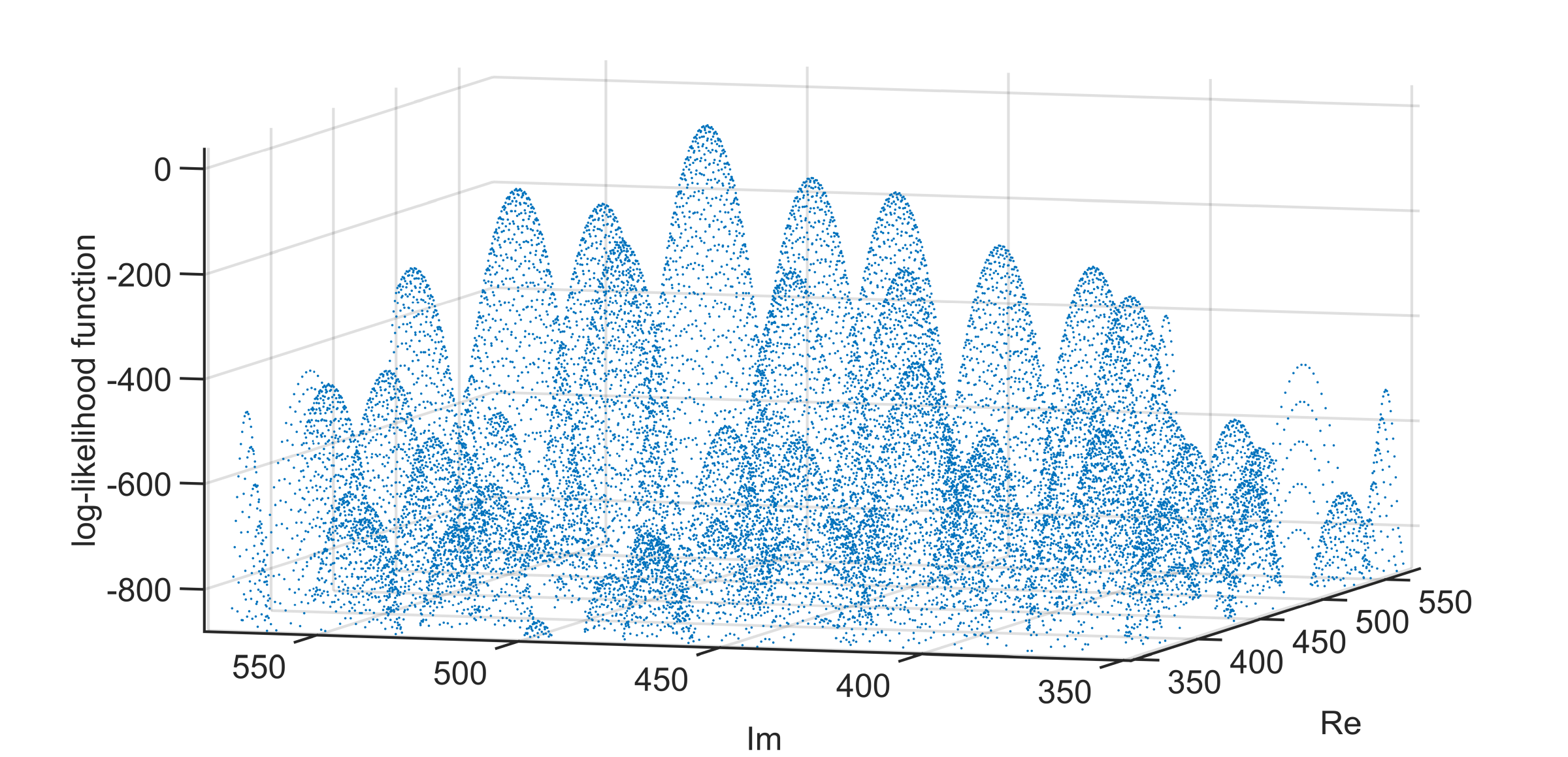}
\setlength{\belowcaptionskip}{-0.2cm}
\caption{The log likelihood function (\ref{eq:log_likelihood_fun}).}  \label{log_likelihood}
\end{figure}
In Fig. \ref{log_likelihood}, we show the right-hand side of the log likelihood function in (\ref{eq:log_likelihood_fun}), where $\mathsf N = 500 + 500\mathrm i$, $M=10$, $\mathsf\Gamma_1 = 3 + 4\mathrm i,  \mathsf\Gamma_2 = 3 - 4\mathrm i, \mathsf\Gamma_3 = 4$, and $\sigma_1$ to $\sigma_3$ are $0.2$, $0.3$, and $0.4$, respectively. In this case, $\Gamma=100$. By (\ref{eq:minmization_problem}), we have $\hat{\mathsf N}_\text{MLE} = 501 + 500\mathrm i$.

Note that $\mathsf z$ in (\ref{eq:minmization_problem}) may take any complex number in $\mathcal F_{M\Gamma}$. In general, solving the minimization problem (\ref{eq:minmization_problem}) may have a high computational complexity by searching the whole 2D region $\mathcal F_{M\Gamma}$. Next, we will present a fast algorithm with only $2L$ searches.

\subsection{Fast MLE C-CRT Algorithm}
From (\ref{eq_q_iNN0}), (\ref{N_0_mod}), and (\ref{eq_N_r_c}), one can see that the common remainder $\mathsf r^c$ is crucial for reconstructing $\mathsf N$. For noisy remainders $\tilde{\mathsf r}_i$ of $\mathsf N$ modulo $\mathsf M_i$, their remainders modulo $M$, i.e.,
$\tilde{\mathsf r}^c_i = \langle\tilde{\mathsf r}_i\rangle_M$
may be different from each other due to the errors for $i = 1, 2, \ldots, L$. To estimate the common remainder from $\tilde{\mathsf r}^c_1, \tilde{\mathsf r}^c_2, \ldots, \tilde{\mathsf r}^c_L$, we define a special averaging operation of $\tilde{\mathsf r}^c_i$ as
\begin{equation} \label{eq:define_har_rc}
\hat{\mathsf r}^c \triangleq \arg \min\limits_{\mathsf x \in \mathcal F_M} \sum_{i=1}^{L} \frac{1}{\sigma_i^2} |d_M\left(\tilde{\mathsf r}^c_i, \mathsf x\right)|^2.
\end{equation}
After the common remainder $\mathsf r^c$ is estimated above, we can estimate $\mathsf q_i$ as
\begin{equation} \label{eq:estimating_q_i}
\hat{\mathsf q}_i = \left[\frac{\tilde{\mathsf r}_i - \hat{\mathsf r}^c}{M}\right ], \ i=1, 2, \ldots, L.
\end{equation}
Consequently, $\mathsf N_0$ can be estimated by
\begin{equation}\label{eq:calculate_N_0}
\hat{\mathsf N}_0 = \left\langle \sum_{i=1}^L \bar {\mathsf\gamma}_i\mathsf \gamma_i\hat{\mathsf q}_i\right\rangle_\Gamma.
\end{equation}
Therefore, $\mathsf N$ can be estimated by
\begin{equation} \label{eq:calculate_N2B}
\hat{\mathsf N} = M\hat{\mathsf N}_0 + \hat{\mathsf r}^c.
\end{equation}

The following result says that the obtained $\hat{\mathsf N}$ in (\ref{eq:calculate_N2B}) is indeed the MLE when the estimate of $\mathsf r^c$ is $\hat{\mathsf r}^c$ in (\ref{eq:define_har_rc}).

\begin{theorem}\label{th_MLE}
If $\mathsf N \in \mathcal F_{M\Gamma}$, then $\hat{\mathsf N}$ in (\ref{eq:calculate_N2B}) is the MLE of $\mathsf N$, that is, $\hat {\mathsf N} = \hat {\mathsf N}_{\text{MLE}}$.
\end{theorem}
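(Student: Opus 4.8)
The plan is to show that the separable structure of the objective in \eqref{eq:minmization_problem} lets us decouple the minimization over $\mathsf z\in\mathcal F_{M\Gamma}$ into an outer minimization over the common remainder $\mathsf x\in\mathcal F_M$ and an inner, \emph{exactly solvable}, minimization over the ``folding part.'' Concretely, I would parametrize any candidate $\mathsf z\in\mathcal F_{M\Gamma}$ in the C-CRT form $\mathsf z = M\mathsf z_0 + \mathsf x$ with $\mathsf x\in\mathcal F_M$ and $\mathsf z_0\in\mathcal F_\Gamma$, mirroring \eqref{eq_q_iNN0}--\eqref{eq_N_r_c}. The key computation is to rewrite the $i$-th term $|d_{M\mathsf\Gamma_i}(\tilde{\mathsf r}_i,\mathsf z)|^2$ in terms of $\mathsf x$ and $\mathsf z_0$. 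Writing $\tilde{\mathsf r}_i = M\tilde{\mathsf q}_i + \tilde{\mathsf r}^c_i$ with $\tilde{\mathsf r}^c_i = \langle\tilde{\mathsf r}_i\rangle_M$, and using Property \ref{prop5_d_M} (with $k=M$ and modulus $\mathsf\Gamma_i$, legitimate since $|\mathsf\Gamma_i|\ge\sqrt2$), the circular distance $d_{M\mathsf\Gamma_i}(\tilde{\mathsf r}_i,\mathsf z)$ should split as $M\cdot d_{\mathsf\Gamma_i}(\tilde{\mathsf q}_i - \mathsf z_0,\, \cdot\,) + d_M(\tilde{\mathsf r}^c_i,\mathsf x)$ up to the appropriate integer-folding bookkeeping; squaring and noting that the two pieces lie in orthogonal ``scales'' (one a multiple of $M$ times a $\mathbf\Gamma_i$-lattice quantity, the other of size $<M/\sqrt2\cdot\sqrt2$ inside $\mathcal S_M$) gives
$$
|d_{M\mathsf\Gamma_i}(\tilde{\mathsf r}_i,\mathsf z)|^2
= M^2\,\bigl|d_{\mathsf\Gamma_i}\bigl(\tilde{\mathsf q}_i,\ \mathsf z_0 + \tfrac{\mathsf x - \tilde{\mathsf r}^c_i}{M}\bigr)\bigr|^2 + |d_M(\tilde{\mathsf r}^c_i,\mathsf x)|^2 .
$$

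Next I would handle the inner minimization. For \emph{fixed} $\mathsf x\in\mathcal F_M$, the first summand above, weighted by $1/\sigma_i^2$, is minimized by choosing $\mathsf z_0$ so that $d_{\mathsf\Gamma_i}(\tilde{\mathsf q}_i, \mathsf z_0 + (\mathsf x-\tilde{\mathsf r}^c_i)/M)=0$ for \emph{every} $i$ simultaneously. Since the $\mathsf\Gamma_i$ are pairwise coprime, Theorem \ref{Th_C_CRT} (the noise-free C-CRT) guarantees exactly one $\mathsf z_0\in\mathcal F_\Gamma$ achieving this joint vanishing, and that is precisely the value produced by \eqref{eq:estimating_q_i}--\eqref{eq:calculate_N_0} when $\hat{\mathsf r}^c$ is replaced by the current $\mathsf x$: $\hat{\mathsf q}_i = \lceil(\tilde{\mathsf r}_i-\mathsf x)/M\rfloor$ is the nearest Gaussian integer, so $\tilde{\mathsf q}_i - \hat{\mathsf q}_i = d_{\mathsf\Gamma_i}(\cdots)$-type residual is forced to be the zero-distance representative, and $\mathsf z_0 = \langle\sum_i\bar{\mathsf\gamma}_i\mathsf\gamma_i\hat{\mathsf q}_i\rangle_\Gamma$. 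Substituting this minimizer back collapses the whole first family of terms to $0$, leaving
$$
\min_{\mathsf z\in\mathcal F_{M\Gamma}}\sum_{i=1}^L\frac{|d_{M\mathsf\Gamma_i}(\tilde{\mathsf r}_i,\mathsf z)|^2}{\sigma_i^2}
= \min_{\mathsf x\in\mathcal F_M}\sum_{i=1}^L\frac{|d_M(\tilde{\mathsf r}^c_i,\mathsf x)|^2}{\sigma_i^2},
$$
whose minimizer is exactly $\hat{\mathsf r}^c$ of \eqref{eq:define_har_rc}. Tracking the arg\,min through these reductions yields $\hat{\mathsf N}_{\text{MLE}} = M\hat{\mathsf N}_0 + \hat{\mathsf r}^c = \hat{\mathsf N}$, which is \eqref{eq:calculate_N2B}.

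The step I expect to be the main obstacle is rigorously justifying the orthogonal splitting of $|d_{M\mathsf\Gamma_i}(\tilde{\mathsf r}_i,\mathsf z)|^2$ into the ``$M$-scaled $\mathsf\Gamma_i$-part'' plus the ``$\mathsf r^c$-part'' with no cross term. This requires care because $d_{M\mathsf\Gamma_i}(\cdot,\cdot)$ is defined via the complex rounding $[\cdot]$ applied after division by $M\mathsf\Gamma_i$, and one must verify that the rounding factors through $M$ cleanly — i.e. that $[(\mathsf a)/(M\mathsf\Gamma_i)]$ relates to $[(\langle\mathsf a\rangle_{\mathsf\Gamma_i\text{-scaled}})/\mathsf\Gamma_i]$ in the way needed. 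This is where Properties \ref{prop2_d_M}, \ref{prop3_d_M}, and especially \ref{prop5_d_M} do the real work, and where the hypotheses $\Gamma,M\in\mathbb Z_{>0}$ (so $\mathcal F_{M\Gamma}$, $\mathcal F_M$ are axis-aligned squares) and $|\mathsf\Gamma_i|\ge\sqrt2$ become essential. A secondary subtlety is boundary behavior of the rounding operators on a measure-zero set (cf. Proposition \ref{Th_du_dx}); since we are only comparing minimum \emph{values} and the objective is finite everywhere, these ties can be broken consistently and do not affect the identity $\hat{\mathsf N}=\hat{\mathsf N}_{\text{MLE}}$.
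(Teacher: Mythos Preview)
Your orthogonal-splitting identity
\[
|d_{M\mathsf\Gamma_i}(\tilde{\mathsf r}_i,\mathsf z)|^2
= M^2\,\bigl|d_{\mathsf\Gamma_i}\bigl(\tilde{\mathsf q}_i,\ \mathsf z_0 + \tfrac{\mathsf x - \tilde{\mathsf r}^c_i}{M}\bigr)\bigr|^2 + |d_M(\tilde{\mathsf r}^c_i,\mathsf x)|^2
\]
is false, and the heuristic that the two pieces ``lie in orthogonal scales'' does not justify dropping the cross term: both summands are complex numbers in the \emph{same} copy of $\mathbb C$, and $|a+b|^2=|a|^2+|b|^2$ requires $\mathrm{Re}(\bar a b)=0$, which nothing here forces. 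A small example already breaks it: with $M=2$, $\mathsf\Gamma_i=1+\mathrm i$, $\tilde{\mathsf r}_i=0.5$, $\mathsf x=0$, $\mathsf z_0=0.5$, the left side is $0.25$ while your right side is $0.5$. Since the decomposition fails, you cannot minimize the first summand independently of the second, and the inner/outer separation collapses.

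The paper sidesteps this entirely by never attempting a Pythagorean split. It works only with magnitudes: from $\hat{\mathsf N}=M\mathsf k_i\mathsf\Gamma_i+M\hat{\mathsf q}_i+\hat{\mathsf r}^c$ and the definition \eqref{eq:estimating_q_i}, Property~\ref{prop2_d_M} gives
\[
|d_{M\mathsf\Gamma_i}(\tilde{\mathsf r}_i,\hat{\mathsf N})|
=\bigl|d_{M\mathsf\Gamma_i}\bigl(d_M(\tilde{\mathsf r}_i,\hat{\mathsf r}^c),\,0\bigr)\bigr|,
\]
and then Property~\ref{prop5_d_M} (with $k=M$, modulus $\mathsf\Gamma_i$) turns this into $|d_M(\tilde{\mathsf r}_i,\hat{\mathsf r}^c)|$ directly---an \emph{equality of moduli}, not a sum-of-squares identity. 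That already shows the objective in \eqref{eq:minmization_problem} at $\hat{\mathsf N}$ equals the objective in \eqref{eq:define_har_rc} at $\hat{\mathsf r}^c$. The complementary inequality, namely $|d_{M\mathsf\Gamma_i}(\tilde{\mathsf r}_i,\mathsf z)|\ge |d_M(\tilde{\mathsf r}_i,\mathsf z)|=|d_M(\tilde{\mathsf r}^c_i,\langle\mathsf z\rangle_M)|$ for every $\mathsf z$, is immediate from the sublattice inclusion $M\mathsf\Gamma_i\,\mathbb Z[\mathrm i]\subset M\,\mathbb Z[\mathrm i]$ (the nearest point in a coarser lattice is at least as far). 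Together these give the equivalence of the two minimizations and hence $\hat{\mathsf N}=\hat{\mathsf N}_{\text{MLE}}$. If you want to keep your inner/outer framing, replace the false splitting by this pair: a pointwise lower bound from the sublattice, and equality at the algorithm's output via Properties~\ref{prop2_d_M} and~\ref{prop5_d_M}.
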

\begin{proof}
By (\ref{eq:calculate_N_0}) and Theorem \ref{Th_C_CRT}, we have
$$\hat{\mathsf N}_0\equiv \hat{\mathsf q}_i\mod\mathsf\Gamma_i,\ i=1, 2, \ldots, L.$$
That is, there exist Gaussian integers $\mathsf k_i$ such that
$\hat{\mathsf N}_0=\mathsf k_i \mathsf \Gamma_i + \hat {\mathsf q}_i$.
By (\ref{eq:calculate_N2B}), we have
$$\hat{\mathsf N} = M\mathsf k_i \mathsf \Gamma_i + M\hat {\mathsf q}_i+\hat{\mathsf r}^c.$$
According to Property \ref{prop2_d_M}, we have
$$|d_{M \mathsf\Gamma_i}(\tilde{\mathsf r}_i, \hat{\mathsf N})|
=\left|d_{M \mathsf\Gamma_i}(\tilde{\mathsf r}_i, M \hat{\mathsf q}_i + \hat{\mathsf r}^c)\right|
=\left|d_{M \mathsf\Gamma_i}(\tilde{\mathsf r}_i - M \hat{\mathsf q}_i - \hat{\mathsf r}^c, 0)\right|.$$
It follows from (\ref{eq:estimating_q_i}) that
$$|d_{M \mathsf\Gamma_i}(\tilde{\mathsf r}_i, \hat{\mathsf N})|=\left|d_{M \mathsf\Gamma_i}(d_M(\tilde{\mathsf r}_i, \hat{\mathsf r}^c ), 0)\right|.$$
As $d_M\left(\tilde{\mathsf r}_i, \hat {\mathsf r}^c\right) \in \mathcal S_M$, by Property \ref{prop5_d_M}, we have
$|d_{M \mathsf\Gamma_i}(\tilde{\mathsf r}_i, \hat{\mathsf N})|=\left|d_M(\tilde{\mathsf r}_i, \hat{\mathsf r}^c)\right|$.
Hence, (\ref{eq:minmization_problem}) and (\ref{eq:define_har_rc}) are equivalent, that is, $\mathsf z$ in (\ref{eq:minmization_problem}) is optimal
if and only if $\mathsf x$ in (\ref{eq:define_har_rc}) is optimal.
\end{proof}

According to Theorem \ref{th_MLE}, we can search for $\hat{\mathsf r}^c$ within the smaller set $\mathcal F_M$ to obtain $\hat{\mathsf N}$. Although $\mathcal F_M$
is much smaller than the original searching region $\mathcal F_{M\Gamma}$ in (\ref{eq:minmization_problem}), it still contains infinitely many elements to search. Next, we introduce a fast algorithm that requires only a finite number of searches to find $\hat{\mathsf r}^c$. Let
\begin{equation}\label{eq_weights}
w_i = \frac{\frac{1}{\sigma_i^{2}}}{\sum_{i=1}^{L}\frac{1}{\sigma_i^{2}}}, \ i=1, 2, \ldots, L.
\end{equation}
Then, we have $0 < w_i \le 1$ and $\sum_{i=1}^{L}w_i=1$. Consequently, the optimal estimate in (\ref{eq:define_har_rc}) can be rewritten as
\begin{equation}\label{estimate_rc_1}
\hat{\mathsf r}^c = \arg \min\limits_{\mathsf x \in \mathcal F_M} \sum_{i=1}^{L}w_i|d_M
\left(\tilde{\mathsf r}_i^c, \mathsf x\right)|^2.
\end{equation}

\begin{theorem}\label{Hat_rcImage}
The estimate $\hat{\mathsf r}^c$ in (\ref{estimate_rc_1}) is optimal if and only if
$\mathrm{Re}(\hat{\mathsf r}^c)$ and $\mathrm{Im}(\hat{\mathsf r}^c)$ are optimal simultaneously, i.e.,
\begin{equation}\label{Re_rc_Im_rc}\hspace{-1em}\left\{
\begin{aligned}
& \mathrm{Re}(\hat{\mathsf r}^c) = \arg \min\limits_{\mathrm{Re}(\mathsf x) \in [0, M)} \sum_{i=1}^{L}w_i d_M^2\left(\mathrm{Re}(\tilde{\mathsf r}_i^c), \mathrm{Re}(\mathsf x)\right), \\
& \mathrm{Im}(\hat{\mathsf r}^c) = \arg \min\limits_{\mathrm{Im}(\mathsf x) \in [0, M)}\sum_{i=1}^{L}w_i d_M^2\left(\mathrm{Im}(\tilde{\mathsf r}_i^c), \mathrm{Im}(\mathsf x)\right).
\end{aligned}\right.
\end{equation}
\end{theorem}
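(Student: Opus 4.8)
The plan is to exploit the hypothesis that the common modulus $M$ is a \emph{positive integer}, so that the circular distance $d_M(\cdot,\cdot)$ — and hence the entire objective in (\ref{estimate_rc_1}) — splits into independent real and imaginary contributions, and then to invoke the elementary fact that minimizing a sum of two functions, each depending on a separate coordinate, over a Cartesian product domain decouples.

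First I would establish the coordinate‑wise decomposition of the circular distance for a real modulus. Since $M$ is real, $\frac{\mathsf x-\mathsf y}{M}=\frac{\mathrm{Re}(\mathsf x-\mathsf y)}{M}+\mathrm i\,\frac{\mathrm{Im}(\mathsf x-\mathsf y)}{M}$, so by the definition of the rounding operation, $\left[\frac{\mathsf x-\mathsf y}{M}\right]=\left[\frac{\mathrm{Re}(\mathsf x-\mathsf y)}{M}\right]+\left[\frac{\mathrm{Im}(\mathsf x-\mathsf y)}{M}\right]\mathrm i$. Substituting into (\ref{d_mathsf_M}) gives
$$d_M(\mathsf x,\mathsf y)=d_M\big(\mathrm{Re}(\mathsf x),\mathrm{Re}(\mathsf y)\big)+\mathrm i\,d_M\big(\mathrm{Im}(\mathsf x),\mathrm{Im}(\mathsf y)\big),$$
where $d_M$ on the right denotes the real circular distance. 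Taking squared moduli, $|d_M(\tilde{\mathsf r}_i^c,\mathsf x)|^2=d_M^2\big(\mathrm{Re}(\tilde{\mathsf r}_i^c),\mathrm{Re}(\mathsf x)\big)+d_M^2\big(\mathrm{Im}(\tilde{\mathsf r}_i^c),\mathrm{Im}(\mathsf x)\big)$. Summing over $i$ with the weights $w_i$, the objective in (\ref{estimate_rc_1}) becomes $G(\mathrm{Re}(\mathsf x))+H(\mathrm{Im}(\mathsf x))$, where $G(u)=\sum_{i=1}^L w_i d_M^2(\mathrm{Re}(\tilde{\mathsf r}_i^c),u)$ depends only on $u$ and $H(v)=\sum_{i=1}^L w_i d_M^2(\mathrm{Im}(\tilde{\mathsf r}_i^c),v)$ depends only on $v$. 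I would also record that, because $M$ is a positive integer, (\ref{FM}) gives $\mathcal F_M=\{Ma+Mb\mathrm i:0\le a,b<1\}$, which in the $(\mathrm{Re},\mathrm{Im})$ coordinates is exactly the product $[0,M)\times[0,M)$; hence ranging $\mathsf x$ over $\mathcal F_M$ amounts to ranging $\mathrm{Re}(\mathsf x)$ and $\mathrm{Im}(\mathsf x)$ independently over $[0,M)$.

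Finally I would prove the stated equivalence in both directions by the separation argument. For necessity, suppose $\hat{\mathsf r}^c$ minimizes $G(\mathrm{Re}(\mathsf x))+H(\mathrm{Im}(\mathsf x))$ over $[0,M)\times[0,M)$; fixing $\mathrm{Im}(\mathsf x)=\mathrm{Im}(\hat{\mathsf r}^c)$ and letting $\mathrm{Re}(\mathsf x)$ vary forces $G(\mathrm{Re}(\hat{\mathsf r}^c))\le G(u)$ for every $u\in[0,M)$, i.e. $\mathrm{Re}(\hat{\mathsf r}^c)$ solves the first optimization in (\ref{Re_rc_Im_rc}); the symmetric argument handles $\mathrm{Im}(\hat{\mathsf r}^c)$. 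For sufficiency, if $\mathrm{Re}(\hat{\mathsf r}^c)$ and $\mathrm{Im}(\hat{\mathsf r}^c)$ minimize $G$ and $H$ respectively, then for every $\mathsf x\in\mathcal F_M$ one has $G(\mathrm{Re}(\mathsf x))+H(\mathrm{Im}(\mathsf x))\ge G(\mathrm{Re}(\hat{\mathsf r}^c))+H(\mathrm{Im}(\hat{\mathsf r}^c))$, so $\hat{\mathsf r}^c$ is optimal in (\ref{estimate_rc_1}).

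I do not anticipate a serious obstacle: the only points needing care are the decomposition of the rounding operation and of $d_M$ under a real modulus, and the factorization $\mathcal F_M=[0,M)\times[0,M)$ when $M$ is a positive integer — both immediate from the definitions (\ref{FM}) and (\ref{d_mathsf_M}). The separation step itself is purely order‑theoretic and requires no convexity, continuity, or existence of minimizers.
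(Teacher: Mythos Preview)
Your proposal is correct and follows essentially the same approach as the paper: decompose $|d_M(\tilde{\mathsf r}_i^c,\mathsf x)|^2$ into real and imaginary parts using that $M$ is a positive integer, then observe that because $\mathcal F_M=[0,M)\times[0,M)$ the two coordinates vary independently, so the joint minimization decouples. Your write-up is slightly more explicit than the paper's in spelling out both directions of the equivalence, but the argument is the same.
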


\begin{proof}
For any $\mathsf x\in \mathcal F_M$, we have
$$\hspace{-0.8em} \begin{aligned}
& \sum_{i=1}^L w_i|d_M \left(\tilde{\mathsf r}_i^c, \mathsf x\right)|^{2} =
\sum_{i=1}^L w_i\left|\tilde{\mathsf r}_i^c - \mathsf x - \left[\frac{\tilde{\mathsf r}_i^c - \mathsf x}{ M}\right]\right|^{2}\\
& = \sum_{i=1}^{L}w_i \left(\left(\mathrm{Re}(\tilde{\mathsf r}_i^c) - \mathrm{Re}(\mathsf x) -\left[\frac{\mathrm{Re}(\tilde{\mathsf r}_i^c) - \mathrm{Re}(\mathsf x)}{M} \right] \right)^2 \right.
\left.+\left(\mathrm{Im}(\tilde{\mathsf r}_i^c)-\mathrm{Im}(\mathsf x)
-\left[\frac{\mathrm{Im}(\tilde{\mathsf r}_i^c)-\mathrm{Im}(\mathsf x)}{M} \right] \right)^{2}\right)\\
&  = \sum_{i=1}^{L}w_id_M^2\left(\mathrm{Re}(\tilde{\mathsf r}_i^c), \mathrm{Re}(\mathsf x)\right) +\sum_{i=1}^{L}w_id_M^2\left(\mathrm{Im}(\tilde{\mathsf r}_i^c), \mathrm{Im}(\mathsf x)\right).\\
\end{aligned}$$
Thus, $\sum_{i=1}^Lw_i|d_M\left(\tilde{\mathsf r}_i^c, \mathsf x\right)|^{2}$ attains its minimum value if and only if both
$\sum_{i=1}^{L}w_id_M^2\left(\mathrm{Re}(\tilde{\mathsf r}_i^c), \mathrm{Re}(\mathsf x)\right)$ and
$\sum_{i=1}^{L}w_id_M^2\left(\mathrm{Im}(\tilde{\mathsf r}_i^c), \mathrm{Im}(\mathsf x)\right)$ attain their minimum values, since region $\mathcal F_M$ of variable $\mathsf x$ is a square with sides parallel to the two axes and thus $\mathrm{Re}(\mathsf x)$ and $\mathrm{Im}(\mathsf x)$ are independent each other.
\end{proof}

For real numbers, the optimal estimate $\hat r^c$ is provided in Theorem 2 of \cite{Wang_2015}. For complex numbers, we obtain the following result.

\begin{theorem}\label{Th_Omega_r_c}
The optimal estimate $\hat{\mathsf r}^c$ in (\ref{estimate_rc_1}) belongs to the following set:
\begin{align}
 \Omega = \left\{\left\langle
  \sum_{i=1}^Lw_i\tilde{\mathsf r}_i^c +M\left(\sum_{i=1}^{k_1}w_{\varsigma_{(i)}} + \mathrm{i}\sum_{i=1}^{k_2}w_{\upsilon_{(i)}}\right)\right\rangle_M:
  k_1, k_2=1, 2, \ldots, L\right\},
\end{align}
where $\varsigma$ and $\upsilon$ are permutations on $\{1, 2, \ldots, L\}$ satisfying
$$\mathrm{Re}(\tilde{\mathsf r}_{\varsigma_{(1)}}^c)\leq\mathrm{Re}(\tilde{\mathsf r}_{\varsigma_{(2)}}^c) \leq \cdots \leq \mathrm{Re}(\tilde{\mathsf r}_{\varsigma_{(L)}}^c)$$
and
$$\mathrm{Im}(\tilde{\mathsf r}_{\upsilon_{(1)}}^c)
\leq\mathrm{Im}(\tilde{\mathsf r}_{\upsilon_{(2)}}^c)
\leq\cdots\leq\mathrm{Im}(\tilde{\mathsf r}_{\upsilon_{(L)}}^c),$$
respectively.
\end{theorem}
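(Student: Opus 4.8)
The plan is to reduce the 2D optimization in \eqref{estimate_rc_1} to two independent 1D problems via Theorem \ref{Hat_rcImage}, and then invoke the structure of the 1D optimal common-remainder estimate already established for real numbers in Theorem 2 of \cite{Wang_2015}. By Theorem \ref{Hat_rcImage}, $\hat{\mathsf r}^c$ is optimal if and only if $\mathrm{Re}(\hat{\mathsf r}^c)$ minimizes $\sum_{i=1}^L w_i d_M^2(\mathrm{Re}(\tilde{\mathsf r}_i^c), \mathrm{Re}(\mathsf x))$ over $[0,M)$ and $\mathrm{Im}(\hat{\mathsf r}^c)$ minimizes $\sum_{i=1}^L w_i d_M^2(\mathrm{Im}(\tilde{\mathsf r}_i^c),\mathrm{Im}(\mathsf x))$ over $[0,M)$. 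Each of these is precisely a real-valued weighted least-circular-distance problem of the type solved in \cite{Wang_2015}, with the weights $w_i$ summing to $1$.

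First I would quote the real-valued result: the minimizer of $\sum_{i=1}^L w_i d_M^2(a_i, x)$ over $x\in[0,M)$, after sorting the $a_i$ in increasing order as $a_{\tau_{(1)}}\le\cdots\le a_{\tau_{(L)}}$, lies in the finite set $\{\langle \sum_{i=1}^L w_i a_i + M\sum_{i=1}^{k} w_{\tau_{(i)}}\rangle_M : k=1,\dots,L\}$. The intuition, which I would recall briefly, is that the circular-distance objective is piecewise quadratic in $x$; within each ``cell'' (determined by which of the $L$ wrapped copies of each $a_i$ is active) the unconstrained minimizer of the ordinary weighted quadratic is the weighted mean of the active copies, and as one sweeps $x$ around the circle the active configuration changes exactly when $x$ crosses one of the $a_i$, shifting that point's copy by $\pm M$; this is why only the $L$ sorted ``shift'' candidates $M\sum_{i=1}^{k}w_{\tau_{(i)}}$ (added to the global weighted mean) need to be checked. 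Applying this with $a_i=\mathrm{Re}(\tilde{\mathsf r}_i^c)$ and permutation $\varsigma$ gives $\mathrm{Re}(\hat{\mathsf r}^c)\in\{\langle \sum_i w_i\mathrm{Re}(\tilde{\mathsf r}_i^c) + M\sum_{i=1}^{k_1} w_{\varsigma_{(i)}}\rangle_M : k_1=1,\dots,L\}$, and similarly with $a_i=\mathrm{Im}(\tilde{\mathsf r}_i^c)$ and permutation $\upsilon$ for the imaginary part.

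Finally I would combine the two: since the real and imaginary candidate sets are independent (the square $\mathcal F_M$ has sides parallel to the axes, as used in Theorem \ref{Hat_rcImage}), $\hat{\mathsf r}^c$ lies in the Cartesian product, and writing a point of that product as a single complex number, $\hat{\mathsf r}^c = \langle \sum_i w_i\mathrm{Re}(\tilde{\mathsf r}_i^c) + M\sum_{i=1}^{k_1}w_{\varsigma_{(i)}} + \mathrm{i}(\sum_i w_i\mathrm{Im}(\tilde{\mathsf r}_i^c) + M\sum_{i=1}^{k_2}w_{\upsilon_{(i)}})\rangle_M$ for some $k_1,k_2\in\{1,\dots,L\}$, where the outer $\langle\cdot\rangle_M$ on a complex number acts componentwise because $M$ is real (as noted after \eqref{mathsfrN}); this is exactly $\mathsf x\in\Omega$ with $\sum_i w_i\tilde{\mathsf r}_i^c = \sum_i w_i\mathrm{Re}(\tilde{\mathsf r}_i^c) + \mathrm{i}\sum_i w_i\mathrm{Im}(\tilde{\mathsf r}_i^c)$.

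The main obstacle I anticipate is not the combination step but cleanly importing the 1D result: one must make sure the statement of Theorem 2 in \cite{Wang_2015} is matched exactly, in particular that the weighted (rather than plain) average is handled and that the sorting/permutation convention and the placement of the wraparound $\langle\cdot\rangle_M$ agree. If instead we want a self-contained argument, the work is in the 1D lemma — establishing that the objective's local minimizers on the circle are among the $L$ listed shifted weighted means — which requires the piecewise-quadratic analysis and a check that boundary/tie cases ($\mathrm{Re}(\tilde{\mathsf r}_i^c)$ equal, or $d_M$ discontinuities) do not produce extra minimizers; but since the paper explicitly defers to \cite{Wang_2015} for the real case, I would simply cite it and keep the proof to the reduction-and-recombination argument above.
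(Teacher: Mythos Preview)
Your proposal is correct and follows essentially the same approach as the paper: reduce to two independent 1D problems via Theorem~\ref{Hat_rcImage}, apply Theorem~2 of \cite{Wang_2015} to each coordinate separately (with the appropriate sorting permutations $\varsigma$ and $\upsilon$), and then recombine using the fact that $\langle\cdot\rangle_M$ with real $M$ acts componentwise. The paper's proof is exactly this three-step argument, stated more tersely and without your added intuition about the piecewise-quadratic structure of the 1D objective.
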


\begin{proof}
Based on Theorem \ref{Hat_rcImage}, it suffices to solve for $\mathrm{Re}(\hat{\mathsf r}^c)$ and $\mathrm{Im}(\hat{\mathsf r}^c)$ in (\ref{Re_rc_Im_rc}). By Theorem 2 of \cite{Wang_2015}, we have
\begin{equation}\label{Omega_Re_r}
\mathrm{Re}(\hat{\mathsf r}^c) = \left\langle\sum_{i=1}^L w_i\mathrm{Re}(\tilde{\mathsf r}_i^c) + M\sum_{i=1}^{k_1}w_{\varsigma_{(i)}} \right\rangle_M
\end{equation}
and
\begin{equation}\label{Omega_Im_r}
\mathrm{Im}(\hat{\mathsf r}^c) = \left\langle\sum_{i=1}^L w_i\mathrm{Im}(\tilde {\mathsf r}_i^c) + M\sum_{i=1}^{k_2}w_{\upsilon_{(i)}} \right\rangle_M
\end{equation}
for some integers $k_1, k_2\in \{1, 2, \ldots, L\}$. Hence,
$$ \hat{\mathsf r}^c = \left\langle\sum_{i=1}^L w_i\tilde{\mathsf r}_i^c + M\left(\sum_{i=1}^{k_1}w_{\varsigma_{(i)}} + \mathrm{i}\sum_{i=1}^{k_2}w_{\upsilon   _{(i)}}\right)\right\rangle_M.$$
This completes the proof of the theorem.
\end{proof}

In terms of the set $\Omega$ in Theorem \ref{Th_Omega_r_c}, there are $L^2$ possible candidates for obtaining the optimal estimate. The following result demonstrates that the number of searches can be reduced from $L^2$ to $2L$.

\begin{corollary}
The optimal estimate $\hat{\mathsf r}^c$ can be determined with a total of $2L$ searches.
\end{corollary}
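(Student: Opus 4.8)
The plan is to reduce the $L^2$ candidates of Theorem~\ref{Th_Omega_r_c} to $2L$ by exploiting the separability established in Theorem~\ref{Hat_rcImage}. First I would recall from Theorem~\ref{Hat_rcImage} that, on the square $\mathcal F_M$ whose sides are parallel to the coordinate axes, the objective in (\ref{estimate_rc_1}) splits additively as
\[
\sum_{i=1}^L w_i|d_M(\tilde{\mathsf r}_i^c, \mathsf x)|^2 = g_1(\mathrm{Re}(\mathsf x)) + g_2(\mathrm{Im}(\mathsf x)),
\]
where $g_1(a) = \sum_{i=1}^L w_i d_M^2(\mathrm{Re}(\tilde{\mathsf r}_i^c), a)$ and $g_2(b) = \sum_{i=1}^L w_i d_M^2(\mathrm{Im}(\tilde{\mathsf r}_i^c), b)$, and the two summands are governed by the mutually independent variables $\mathrm{Re}(\mathsf x), \mathrm{Im}(\mathsf x) \in [0, M)$. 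Hence the two-dimensional minimization in (\ref{estimate_rc_1}) decouples into the two one-dimensional problems $\min_{a\in[0,M)} g_1(a)$ and $\min_{b\in[0,M)} g_2(b)$, and $\hat{\mathsf r}^c$ is recovered by pairing their minimizers.

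Next I would invoke Theorem~2 of \cite{Wang_2015}, as reproduced in (\ref{Omega_Re_r}) and (\ref{Omega_Im_r}): after sorting the real parts $\mathrm{Re}(\tilde{\mathsf r}_1^c), \ldots, \mathrm{Re}(\tilde{\mathsf r}_L^c)$ to obtain the permutation $\varsigma$, a minimizer of $g_1$ lies in the explicit $L$-point set $\{\langle \sum_{i=1}^L w_i\mathrm{Re}(\tilde{\mathsf r}_i^c) + M\sum_{i=1}^{k_1} w_{\varsigma_{(i)}}\rangle_M : k_1 = 1, \ldots, L\}$, and symmetrically a minimizer of $g_2$ lies in the analogous $L$-point set built from the sorting permutation $\upsilon$ of the imaginary parts. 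Evaluating $g_1$ at its $L$ candidate points selects the optimal index $k_1^\star$ in $L$ searches; evaluating $g_2$ at its $L$ candidate points selects $k_2^\star$ in another $L$ searches; altogether $2L$ searches. (Producing the sorting permutations $\varsigma$ and $\upsilon$ is one-time preprocessing, not counted as a search, since here a search means one evaluation of the objective at a candidate point.)

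Finally, by the decoupling of the first step the pair $(k_1^\star, k_2^\star)$ yields a global minimizer, so
\[
\hat{\mathsf r}^c = \left\langle \sum_{i=1}^L w_i\tilde{\mathsf r}_i^c + M\left(\sum_{i=1}^{k_1^\star} w_{\varsigma_{(i)}} + \mathrm{i}\sum_{i=1}^{k_2^\star} w_{\upsilon_{(i)}}\right)\right\rangle_M \in \Omega,
\]
and no search over the remaining $L^2-2L$ elements of $\Omega$ is needed. I do not expect a genuine obstacle here: the whole argument is carried by Theorems~\ref{Hat_rcImage} and~\ref{Th_Omega_r_c}. The one point that deserves care is the justification that minimizing $g_1$ and $g_2$ separately delivers the joint optimum---this is precisely the separability of both the objective and the feasible region $\mathcal F_M$ that Theorem~\ref{Hat_rcImage} provides---together with the elementary bookkeeping that a ``search'' is a single objective evaluation, so the two independent sweeps cost $L+L=2L$ rather than their product $L^2$.
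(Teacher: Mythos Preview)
Your proposal is correct and follows essentially the same approach as the paper: invoke the separability of Theorem~\ref{Hat_rcImage} so that $\mathrm{Re}(\hat{\mathsf r}^c)$ and $\mathrm{Im}(\hat{\mathsf r}^c)$ are each determined by an independent one-dimensional search over the $L$ candidates in (\ref{Omega_Re_r}) and (\ref{Omega_Im_r}), yielding $L+L=2L$ searches in total. The paper's proof is simply a terser two-sentence version of exactly this argument.
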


\begin{proof}
By (\ref{Omega_Re_r}) and (\ref{Omega_Im_r}) in the proof of Theorem \ref{Th_Omega_r_c}, both $\mathrm{Re}(\hat{\mathsf r}^c)$ and $\mathrm{Im}(\hat{\mathsf r}^c)$ can be obtained through $L$ searches respectively.
Thus, $\hat{\mathsf r}^c$ can be determined with $2L$ searches.
\end{proof}

\textbf{Comparison with the Two-Stage CRT in \cite{Gong_2021}}:
The two-stage CRT described in \cite{Gong_2021} consists of two stages, each applying the closed-form CRT \cite{Wang_2010}. In the first stage, $l$ pairs of equations with complex moduli are converted into $l$ equations with real moduli, where $2l \le L$. In the second stage, the real and imaginary parts are separated, and two congruence systems are solved using the closed-form CRT. The reconstruction result depends on the choice of the reference remainder, which is influenced by the estimation of the common remainder.
Note that the estimation of the common remainder is based on the searching method in \cite{Wang_2010} used in \cite{Gong_2021}, although the fast searching of only $L$ times in \cite{Wang_2015} can be applied. This requires searching through all the points within the interval $[0, M)$ and the estimation depends on the searching step sizes. To achieve good estimation accuracy, small searching step sizes are required, resulting in many more searching steps than $2L$. Notably, the complexity of the two-stage CRT increases as $M$ increases.
Furthermore, it is based on the assumption that the remainder errors have the same variance. If the variances differ, the reconstruction performance is significantly degraded.

\section{Robust Estimation for the Fast MLE C-CRT}\label{sec4}
In this section, we present a necessary and sufficient condition for the MLE C-CRT to be robust. Then, we calculate the probability of the robust MLE C-CRT.

\subsection{Condition of Robust Estimation}\label{sec4_A}
We first consider a necessary condition of robust estimation for the MLE C-CRT. For convenience, we define the remainder error set as
$$\mathcal U = \left\{\Delta \mathsf r_1, \Delta \mathsf r_2, \ldots, \Delta \mathsf r_L\right\}$$
and the weighted average of the remainder errors as
$$\overline{\Delta \mathsf r} = \sum_{i=1}^{L}w_i \Delta \mathsf r_i,$$
where the weights $w_i$ are defined in (\ref{eq_weights}). In \cite{Wang_2015}, a necessary condition for a robust estimation of real numbers is
$$-\frac{M}{2} \le \Delta r_i - (\hat N - N) < \frac{M}{2}.$$
For complex numbers, we have the following necessary condition:
\begin{eqnarray}\label{condition_delta}
\Delta \mathsf r_i - (\hat{\mathsf N} - \mathsf N) \in \mathcal S_M,
\end{eqnarray}
where $\mathcal S_M$ is defined in (\ref{eq_def_SM}). In what follows, in order to discuss the robustness of the MLE C-CRT, we suppose that (\ref{condition_delta}) is always satisfied.
%

\begin{theorem}\label{thm_Delta_r}
Suppose that $\overline{\Delta \mathsf r}$ satisfies $|\mathrm{Re}(\overline{\Delta \mathsf r})| < \frac{M}{2}$ and $|\mathrm{Im}(\overline{\Delta \mathsf r})| < \frac{M}{2}$ simultaneously. If
\begin{equation}\label{robust_ri_SM}
\sum_{\Delta \mathsf r_i\in \mathcal V}\frac{w_i\Delta \mathsf r_i}{\sum_{\Delta \mathsf r_j\in \mathcal V}w_j} - \sum_{\Delta \mathsf r_i\in\overline{\mathcal V}}\frac{w_i\Delta \mathsf r_i}{\sum_{\Delta \mathsf r_j \in\overline{\mathcal V}}w_j}\in \mathcal S_M
\end{equation}
holds for any $\mathcal V\subseteq\mathcal U$ and $\overline{\mathcal V} =\mathcal U \backslash \mathcal V$, then we have
\begin{eqnarray}\label{Tm4Hat_rc}
\hat{\mathsf r}^c = \langle \mathsf r^c + \overline{\Delta \mathsf r}\rangle_M.
\end{eqnarray}
Furthermore,
\begin{equation}\label{Re drc}\hspace{-0.95em}
\mathrm{Re}(\Delta \mathsf r^c) = \mathrm{Re}(\overline{\Delta\mathsf r})+
\begin{cases}
M, &\text{if}~ \mathrm{Re}(\mathsf r^c+\overline{\Delta \mathsf r}) < 0, \\
0, &\text{if}~ 0\leq \mathrm{Re}(\mathsf r^c+\overline{\Delta \mathsf r}) < M,\\
-M, &\text{if}~ \mathrm{Re}(\mathsf r^c+\overline{\Delta \mathsf r}) \ge M
\end{cases}
\end{equation}
and
\begin{equation}\label{Im drc}\hspace{-0.98em}
\mathrm{Im}(\Delta\mathsf r^c) = \mathrm{Im}(\overline{\Delta \mathsf r})+
\begin{cases}
M, &\text{if}~ \mathrm{Im}(\mathsf r^c+\overline{\Delta \mathsf r}) < 0, \\
0, &\text{if}~ 0\leq \mathrm{Im}(\mathsf r^c+\overline{\Delta \mathsf r}) < M, \\
-M, &\text{if}~ \mathrm{Im}(\mathsf r^c+\overline{\Delta \mathsf r})\ge M.
\end{cases}
\end{equation}
\end{theorem}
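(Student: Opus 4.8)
The plan is to reduce the complex-valued statement to two independent real-valued claims via Theorem~\ref{Hat_rcImage}, and then invoke the structure of the real-valued robust CRT from \cite{Wang_2015}. First I would note that by Theorem~\ref{Hat_rcImage}, the optimal $\hat{\mathsf r}^c$ decouples into $\mathrm{Re}(\hat{\mathsf r}^c)$ and $\mathrm{Im}(\hat{\mathsf r}^c)$, each minimizing a one-dimensional weighted sum of squared circular distances over $[0,M)$. So it suffices to prove the real part of \eqref{Tm4Hat_rc}, namely $\mathrm{Re}(\hat{\mathsf r}^c) = \langle \mathrm{Re}(\mathsf r^c) + \mathrm{Re}(\overline{\Delta \mathsf r})\rangle_M$, together with \eqref{Re drc}; the imaginary part is identical. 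The hypothesis \eqref{robust_ri_SM}, when we take real parts, gives exactly the real-valued robustness condition of \cite{Wang_2015} (a bound of the form $|\cdot| < M/2$ on the difference of the conditional weighted means over any partition $\mathcal V, \overline{\mathcal V}$ of the error index set), and similarly for imaginary parts. Thus the real-valued theorem from \cite{Wang_2015} applies to each coordinate and yields $\mathrm{Re}(\hat{\mathsf r}^c) = \langle \mathrm{Re}(\mathsf r^c + \overline{\Delta\mathsf r})\rangle_M$ and $\mathrm{Im}(\hat{\mathsf r}^c) = \langle \mathrm{Im}(\mathsf r^c + \overline{\Delta\mathsf r})\rangle_M$, which combine to \eqref{Tm4Hat_rc}.

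Next I would derive \eqref{Re drc}. Writing $\Delta \mathsf r^c = \hat{\mathsf r}^c - \mathsf r^c$ (the error in the estimated common remainder) and using $\mathrm{Re}(\hat{\mathsf r}^c) = \langle \mathrm{Re}(\mathsf r^c + \overline{\Delta\mathsf r})\rangle_M$, I unwrap the modulo operation: since $\mathrm{Re}(\mathsf r^c) \in [0,M)$ and $|\mathrm{Re}(\overline{\Delta\mathsf r})| < M/2$ by assumption, the quantity $\mathrm{Re}(\mathsf r^c + \overline{\Delta\mathsf r})$ lies in $(-M/2, 3M/2)$, so reducing it mod $M$ subtracts $0$, $M$, or $-M$ according to whether it falls in $[0,M)$, $[M, 3M/2)$, or $(-M/2, 0)$ respectively. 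Subtracting $\mathrm{Re}(\mathsf r^c)$ then gives $\mathrm{Re}(\Delta\mathsf r^c) = \mathrm{Re}(\overline{\Delta\mathsf r}) + \{M, 0, -M\}$ in the three cases, which is exactly \eqref{Re drc}; \eqref{Im drc} follows the same way. This part is routine arithmetic once \eqref{Tm4Hat_rc} is established.

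The main obstacle is establishing that the hypothesis \eqref{robust_ri_SM}, stated in the complex/$\mathcal S_M$ form, correctly specializes to the scalar hypothesis needed to invoke \cite{Wang_2015} in each coordinate, and that the $\mathcal S_M$ membership \eqref{robust_ri_SM} is equivalent to the conjunction of the two real-part and imaginary-part bounds. Since $\mathcal S_M = \{M(c+d\mathrm i): -\tfrac12 \le c,d < \tfrac12\}$ is, for a real modulus $M$, just the square $[-M/2,M/2)^2$, the membership $z \in \mathcal S_M$ is literally $\mathrm{Re}(z) \in [-M/2,M/2)$ and $\mathrm{Im}(z) \in [-M/2,M/2)$; but one must also check that the condition \eqref{condition_delta} we are assuming throughout, namely $\Delta\mathsf r_i - (\hat{\mathsf N} - \mathsf N) \in \mathcal S_M$, together with \eqref{robust_ri_SM}, is what \cite{Wang_2015} actually requires in the scalar case (their condition involves the analogous partition-based bound). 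I would spell out this translation carefully — in particular noting that taking $\mathcal V$ ranging over all subsets in \eqref{robust_ri_SM} is precisely the family of inequalities in Theorem~2 / its supporting lemmas in \cite{Wang_2015} applied separately to the real parts $\mathrm{Re}(\Delta\mathsf r_i)$ and to the imaginary parts $\mathrm{Im}(\Delta\mathsf r_i)$ — and then the rest is a direct citation plus the unwrapping computation above.
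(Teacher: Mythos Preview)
Your proposal is correct and follows essentially the same route as the paper's proof: decouple the complex minimization into its real and imaginary parts (the paper does this implicitly, you do it explicitly via Theorem~\ref{Hat_rcImage}), observe that the membership $z\in\mathcal S_M$ for real $M$ is precisely the conjunction of the two scalar bounds on $\mathrm{Re}(z)$ and $\mathrm{Im}(z)$, and then invoke the real-valued result from \cite{Wang_2015} (the paper cites it as their Theorem~3) in each coordinate. Your derivation of \eqref{Re drc}--\eqref{Im drc} by unwrapping the modulo is more detailed than the paper's terse ``Hence, \eqref{Re drc} holds,'' but the content is the same; the one unnecessary detour is your worry about the standing assumption \eqref{condition_delta}, which is not actually used in this particular argument.
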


The proof of this theorem is in Appendix $B$.

Theorem \ref{thm_Delta_r} gives a condition of the remainder errors and their weights such that the optimal estimate $\hat{\mathsf r}^c$ of the common remainder $\mathsf r^c$ is $\langle \mathsf r^c + \overline {\Delta \mathsf r}\rangle_M$. Next, we consider the sufficiency of the robust estimation when the errors satisfy (\ref{robust_ri_SM}). For convenience, we introduce a result below.

\begin{proposition}\label{Th_Set_N_N_0}
For $\mathsf N$ and $\mathsf N_0$ in (\ref{eq_N_r_c}),
$\mathsf N_0\in \mathcal H$ if and only if $\mathsf N \in M\mathcal H =\{M\mathsf h: \mathsf h\in\mathcal H\}$, where
$$\mathcal H=\{h_1 + h_2 \mathrm i: 1 \leq h_1, h_2 < \Gamma-1\}.$$
\end{proposition}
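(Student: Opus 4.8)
The plan is to reduce the two-sided inclusion to an elementary one-dimensional equivalence handled coordinatewise, the essential ingredient being that $\mathsf N_0$ is in fact a Gaussian integer, so that this proposition refines the already established fact $\mathsf N_0 \in \mathcal F_\Gamma$.

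First I would collect the facts already available. Since $\mathsf r^c \in \mathcal F_M$, both $\mathrm{Re}(\mathsf r^c)$ and $\mathrm{Im}(\mathsf r^c)$ lie in $[0, M)$. Moreover, the derivation leading to (\ref{N_0_mod}) shows that $\mathsf N - \mathsf r^c = \lfloor \mathrm{Re}(\mathsf N)/M\rfloor M + \mathrm i\,\lfloor \mathrm{Im}(\mathsf N)/M\rfloor M$, so by (\ref{eq_q_iNN0}) the number $\mathsf N_0 = (\mathsf N - \mathsf r^c)/M = \lfloor \mathrm{Re}(\mathsf N)/M\rfloor + \mathrm i\,\lfloor \mathrm{Im}(\mathsf N)/M\rfloor$ is a Gaussian integer; in particular $\mathrm{Re}(\mathsf N_0), \mathrm{Im}(\mathsf N_0) \in \mathbb Z$, and from (\ref{eq_N_r_c}) we have $\mathrm{Re}(\mathsf N) = M\,\mathrm{Re}(\mathsf N_0) + \mathrm{Re}(\mathsf r^c)$ and likewise for the imaginary parts.

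Next I would unwind the two set memberships. Because $M > 0$ and $\mathcal H$ is a product of the intervals $[1, \Gamma - 1)$, we have $\mathsf N_0 \in \mathcal H$ if and only if $1 \le \mathrm{Re}(\mathsf N_0) < \Gamma - 1$ and $1 \le \mathrm{Im}(\mathsf N_0) < \Gamma - 1$, while $\mathsf N \in M\mathcal H$ if and only if $M \le \mathrm{Re}(\mathsf N) < M(\Gamma - 1)$ and $M \le \mathrm{Im}(\mathsf N) < M(\Gamma - 1)$. Since both conditions decouple into identical statements about the real and the imaginary coordinate, it suffices to prove, for an integer $n$ and a real number $s$ with $0 \le s < M$, the scalar equivalence
\begin{equation*}
1 \le n < \Gamma - 1 \iff M \le Mn + s < M(\Gamma - 1),
\end{equation*}
and then apply it to $(n, s) = (\mathrm{Re}(\mathsf N_0), \mathrm{Re}(\mathsf r^c))$ and to $(n, s) = (\mathrm{Im}(\mathsf N_0), \mathrm{Im}(\mathsf r^c))$, combining the two conjuncts.

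The scalar equivalence is then routine interval arithmetic: from $1 \le n < \Gamma - 1$ with $n \in \mathbb Z$ one has $n \le \Gamma - 2$, hence $Mn + s \ge M + 0 = M$ and $Mn + s < M(\Gamma - 2) + M = M(\Gamma - 1)$; conversely, $Mn + s \ge M$ gives $Mn \ge M - s > 0$, so $n \ge 1$ because $n$ is an integer, while $Mn + s < M(\Gamma - 1)$ gives $Mn < M(\Gamma-1) - s \le M(\Gamma-1)$, i.e. $n < \Gamma - 1$. The one place where care is needed — and indeed the only substantive point in the whole argument — is this use of integrality of $n$: without it, $Mn \ge M - s > 0$ would only yield $n > 0$ rather than $n \ge 1$, and the claimed equivalence between $\mathcal H$ and $M\mathcal H$ would fail. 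Everything else is bookkeeping.
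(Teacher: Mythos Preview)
Your proof is correct and follows essentially the same route as the paper's: both reduce to showing that $\lfloor \mathsf N/M\rfloor \in \mathcal H$ is equivalent to $\mathsf N \in M\mathcal H$, which is precisely your scalar equivalence applied coordinatewise. The paper compresses this last step into the single assertion ``$\lfloor \mathsf N/M\rfloor \in \mathcal H$ is equivalent to $\mathsf N \in M\mathcal H$'' without spelling out the interval arithmetic, whereas you make explicit the role of the integrality of $\mathsf N_0$ in securing the lower bound $n \ge 1$ --- a point the paper leaves implicit but which, as you correctly note, is the only place where care is needed.
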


\begin{proof}
According to (\ref{eq_q_iNN0}), it suffices to prove that $\mathsf N-\mathsf r^c\in M\mathcal H$ if and only if
$\mathsf N \in M\mathcal H$.
By $\mathsf N - \mathsf r^c=M \left\lfloor\frac{\mathsf N}{M}\right\rfloor$,
we can obtain that $\mathsf N-\mathsf r^c\in M\mathcal H$ if and only if $\left\lfloor\frac{\mathsf N}{M}\right\rfloor\in\mathcal H$,
which is equivalent to $\mathsf N \in M\mathcal H$.
\end{proof}

According to Theorem \ref{thm_Delta_r}, we have
$\Delta\mathsf r^c = \overline{\Delta\mathsf r} + Mk_1 + Mk_2\mathrm i$,
where $k_1, k_2\in\{-1, 0, 1 \}$. By the definition of $\hat{\mathsf q}_i$ in (\ref{eq:estimating_q_i}), we have
\begin{equation}\label{temp1_Thm4}
\hat{\mathsf q}_i = \mathsf q_i + \left[\frac{\Delta\mathsf r_i - \Delta\mathsf r^c}{M}\right] = \mathsf q_i - k_1 - k_2\mathrm i + \left[\frac{\Delta\mathsf r_i - \overline{\Delta \mathsf r}}{M}\right].
\end{equation}
Arbitrarily choose a $\Delta\mathsf r_i\in\mathcal U$, and let $\mathcal V = \{\Delta\mathsf r_i\}$.
Then, we obtain from (\ref{robust_ri_SM}) that
$$\Delta\mathsf r_i - \sum_{j\neq i}\frac{w_j}{\sum_{j\neq i}w_j}\Delta\mathsf r_j \in \mathcal S_M.$$
Since $\sum_{j\neq i}w_j = 1 - w_i$, we have
$$\frac{1}{1 - w_i}\Delta\mathsf r_i - \frac{1}{1-w_i}\overline{\Delta\mathsf r} \in \mathcal S_M.$$
Hence, $$\Delta\mathsf r_i - \overline{\Delta\mathsf r} \in (1-w_i)\mathcal S_M.$$
Consequently,
$$\left[\frac{\Delta\mathsf r_i - \overline{\Delta\mathsf r}}{M}\right] = 0.$$
It follows from (\ref{temp1_Thm4}) that
$\hat{\mathsf q}_i = \mathsf q_i - k_1 -k_2 \mathrm i$.
Based on the arbitrariness of $\Delta\mathsf r_i$, we can obtain from (\ref{eq:calculate_N_0}) that
\begin{equation}\label{eq_hat_N_0}
\hat{\mathsf N}_0 \equiv \sum_{i=1}^L\bar{\mathsf\gamma}_i\mathsf \gamma_i\mathsf q_i - \sum_{i=1}^L\bar{\mathsf\gamma}_i
\mathsf \gamma_i(k_1 + k_2\mathrm i) \mod \Gamma.
\end{equation}
Note that $\bar{\mathsf\gamma}_i \mathsf \gamma_i \equiv 1 \mod \mathsf \Gamma_i$ and $\bar{\mathsf\gamma}_j \mathsf \gamma_j \equiv 0 \mod \mathsf \Gamma_i$ for $j \ne i$. Hence, $$\sum_{i=1}^L\bar{\mathsf\gamma}_i
\mathsf \gamma_i \equiv 1\mod \mathsf \Gamma_i.$$
Consequently, $$\sum_{i=1}^L\bar{\mathsf\gamma}_i
\mathsf \gamma_i (k_1 + k_2\mathrm i) \equiv k_1 + k_2\mathrm i \mod \mathsf \Gamma_i,$$ that is, $\mathsf \Gamma_i$ divides $\sum_{i=1}^L\bar{\mathsf\gamma}_i
\mathsf \gamma_i (k_1 + k_2\mathrm i) - (k_1 + k_2\mathrm i)$. Since $\mathsf\Gamma_1, \mathsf\Gamma_2, \ldots, \mathsf\Gamma_L$ are pairwise coprime, $\prod_{i=1}^L\mathsf\Gamma_i$ divides $\sum_{i=1}^L\bar{\mathsf\gamma}_i
\mathsf \gamma_i (k_1 + k_2\mathrm i) - (k_1 + k_2\mathrm i)$. Thus,
$$\sum_{i=1}^L\bar{\mathsf\gamma}_i \mathsf \gamma_i(k_1 + k_2\mathrm i)\equiv k_1+k_2\mathrm i\mod \Gamma.$$ By (\ref{eq_hat_N_0}), we have
$$\hat{\mathsf N}_0\equiv \mathsf N_0-k_1-k_2\mathrm i\mod \Gamma.$$
Since $\hat{\mathsf N}_0\in\mathcal F_{\Gamma}$, we have
$$\hat{\mathsf N}_0 = \left\langle\mathsf N_0 - k_1 - k_2 \mathrm i\right\rangle_\Gamma.$$
If $\mathsf N \in M\mathcal H$, i.e., $\mathsf N_0\in\mathcal H$ by Proposition \ref{Th_Set_N_N_0}, then $\hat{\mathsf N}_0 = \mathsf N_0 - k_1 -k_2 \mathrm i$. Consequently,
$$\hat{\mathsf N} = M(\mathsf N_0 - k_1 -k_2\mathrm i) + \mathsf r^c + \Delta \mathsf r^c = \mathsf N + \overline{\Delta\mathsf r}.$$
Therefore, $\mathsf N$ can be robustly estimated. The next theorem demonstrates that (\ref{robust_ri_SM}) is both a necessary and sufficient condition for robust estimation.

\begin{theorem}\label{Theo_iff}
Let $\mathsf N \in M\mathcal H$. If $\overline {\Delta \mathsf r}$ satisfies $|\mathrm{Re}(\overline{\Delta \mathsf r})| < \frac{M}{2}$ and $|\mathrm{Im}(\overline{\Delta \mathsf r})| < \frac{M}{2}$ simultaneously, then
\begin{equation}\label{hat_N_N}
\hat{\mathsf N} - \mathsf N = \overline{\Delta \mathsf r}
\end{equation}
holds if and only if (\ref{robust_ri_SM}) holds for all $\mathcal V\subseteq\mathcal U$.
\end{theorem}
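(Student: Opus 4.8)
\emph{Proof strategy.} The backward implication has already been carried out in the discussion preceding the statement: under $\mathsf N\in M\mathcal H$ and the two inequalities on $\overline{\Delta\mathsf r}$, condition (\ref{robust_ri_SM}) for all $\mathcal V$ was shown, via Theorem \ref{thm_Delta_r}, to give $\hat{\mathsf N}=\mathsf N+\overline{\Delta\mathsf r}$. So it remains to prove necessity: assume $\hat{\mathsf N}-\mathsf N=\overline{\Delta\mathsf r}$, together with the standing hypothesis (\ref{condition_delta}) that $\Delta\mathsf r_i-(\hat{\mathsf N}-\mathsf N)\in\mathcal S_M$ for every $i$, and deduce (\ref{robust_ri_SM}) for every $\mathcal V\subseteq\mathcal U$.

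First I would transfer the hypothesis from $\hat{\mathsf N}$ to $\hat{\mathsf r}^c$. Both $\hat{\mathsf N}$ and $\mathsf N$ decompose as an element of $M\mathbb Z[\mathrm i]$ plus an element of $\mathcal F_M$: indeed $\hat{\mathsf N}=M\hat{\mathsf N}_0+\hat{\mathsf r}^c$ with $\hat{\mathsf N}_0\in\mathbb Z[\mathrm i]$ (a reduction modulo the integer $\Gamma$ of a Gaussian integer), and $\mathsf N=M\mathsf N_0+\mathsf r^c$ with $\mathsf N_0=\lfloor\mathsf N/M\rfloor\in\mathbb Z[\mathrm i]$, $\mathsf r^c=\langle\mathsf N\rangle_M$. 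Hence $\hat{\mathsf N}-\mathsf N=\overline{\Delta\mathsf r}$ forces $\hat{\mathsf r}^c-\mathsf r^c-\overline{\Delta\mathsf r}\in M\mathbb Z[\mathrm i]$, and since $\hat{\mathsf r}^c\in\mathcal F_M$ this yields $\hat{\mathsf r}^c=\langle\mathsf r^c+\overline{\Delta\mathsf r}\rangle_M$. By definition (\ref{estimate_rc_1}), $\hat{\mathsf r}^c$ minimises $\mathsf x\mapsto\sum_{i}w_i|d_M(\tilde{\mathsf r}^c_i,\mathsf x)|^2$ over $\mathcal F_M$; since $\tilde{\mathsf r}^c_i\equiv\mathsf r^c+\Delta\mathsf r_i\pmod{M\mathbb Z[\mathrm i]}$ and $d_M(\cdot,\cdot)$ depends only on the difference of its arguments (Property \ref{prop2_d_M} and (\ref{d_mathsf_M})), a shift of the variable together with the $M\mathbb Z[\mathrm i]$-periodicity of $\tilde F(\mathsf y):=\sum_{i}w_i|d_M(\Delta\mathsf r_i,\mathsf y)|^2$ shows that $\overline{\Delta\mathsf r}$ minimises $\tilde F$ over all of $\mathbb C$.

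Next I would convert this minimality into a weighted-variance inequality. Writing $d_M^2(\Delta\mathsf r_i,\mathsf y)=\min_{\mathsf k_i\in\mathbb Z[\mathrm i]}|\Delta\mathsf r_i-\mathsf k_iM-\mathsf y|^2$ and minimising over $\mathsf y$ first gives $\min_{\mathsf y}\tilde F(\mathsf y)=\min_{\mathbf k}\mathrm{Var}_w\big((\Delta\mathsf r_i-\mathsf k_iM)_i\big)$, where $\mathrm{Var}_w$ denotes the $w$-weighted variance $\sum_i w_i|x_i|^2-\big|\sum_i w_i x_i\big|^2$. By (\ref{condition_delta}) we have $\Delta\mathsf r_i-\overline{\Delta\mathsf r}\in\mathcal S_M$, so Property \ref{prop3_d_M} gives $d_M(\Delta\mathsf r_i,\overline{\Delta\mathsf r})=\Delta\mathsf r_i-\overline{\Delta\mathsf r}$, whence $\tilde F(\overline{\Delta\mathsf r})=\sum_i w_i|\Delta\mathsf r_i-\overline{\Delta\mathsf r}|^2=\mathrm{Var}_w\big((\Delta\mathsf r_i)_i\big)$, which by the previous display is the global minimum; therefore $\mathrm{Var}_w\big((\Delta\mathsf r_i)_i\big)\le\mathrm{Var}_w\big((\Delta\mathsf r_i-\mathsf k_iM)_i\big)$ for every $\mathbf k\in(\mathbb Z[\mathrm i])^L$. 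Now fix a nonempty proper $\mathcal V\subseteq\mathcal U$, put $W=\sum_{\Delta\mathsf r_i\in\mathcal V}w_i$, $W'=1-W$, and let $\mathsf A,\mathsf B$ be the $w$-weighted means over $\mathcal V$ and over $\overline{\mathcal V}$, so the left side of (\ref{robust_ri_SM}) equals $\mathsf A-\mathsf B$. Applying the variance inequality to $\mathbf k$ supported on $\overline{\mathcal V}$ with value $\mathsf c\in\{1,-1,\mathrm i,-\mathrm i\}$ there, and using the two-group decomposition $\mathrm{Var}_w=Wv_1+W'v_2+WW'|\mathsf A-\mathsf B|^2$ (the within-group variances $v_1,v_2$ being unchanged by the shift), one obtains $|\mathsf A-\mathsf B|^2\le|\mathsf A-\mathsf B+\mathsf cM|^2$ for each such $\mathsf c$, i.e.\ $\mathrm{Re}(\mathsf A-\mathsf B),\mathrm{Im}(\mathsf A-\mathsf B)\in[-M/2,M/2]$, so $\mathsf A-\mathsf B\in\mathcal S_M$. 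The two remaining cases $\mathcal V=\varnothing$ and $\mathcal V=\mathcal U$ of (\ref{robust_ri_SM}) are precisely the standing assumptions $|\mathrm{Re}(\overline{\Delta\mathsf r})|<M/2$ and $|\mathrm{Im}(\overline{\Delta\mathsf r})|<M/2$.

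The step I expect to need the most care is the half-open boundary: the variance argument delivers only the closed bounds $\mathrm{Re}(\mathsf A-\mathsf B),\mathrm{Im}(\mathsf A-\mathsf B)\in[-M/2,M/2]$, whereas $\mathcal S_M$ excludes $\mathrm{Re}(\mathsf A-\mathsf B)=M/2$ (and likewise for the imaginary part). Ruling out that boundary value requires the tie-breaking convention used to single out $\hat{\mathsf r}^c$ among possibly several minimisers — the same half-open convention that defines $\langle\cdot\rangle_M$ and the candidate set $\Omega$ of Theorem \ref{Th_Omega_r_c} — so that when $\mathrm{Re}(\mathsf A-\mathsf B)=M/2$ the actual output of the algorithm is not $\langle\mathsf r^c+\overline{\Delta\mathsf r}\rangle_M$, contradicting the first step. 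Equivalently, one may split into real and imaginary parts via Theorem \ref{Hat_rcImage} and invoke the one-dimensional necessary-and-sufficient robustness condition for real-valued CRT in \cite{Wang_2015}, which already carries this convention; the argument above is then exactly the reduction showing that the two one-dimensional conditions assemble into the $\mathcal S_M$-membership (\ref{robust_ri_SM}).
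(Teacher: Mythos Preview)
Your proposal is correct and follows essentially the same route as the paper: both reduce from $\hat{\mathsf N}$ to $\hat{\mathsf r}^c=\langle\mathsf r^c+\overline{\Delta\mathsf r}\rangle_M$ and then exploit optimality of $\hat{\mathsf r}^c$ by comparing against a shifted candidate, with the paper working coordinate-wise (real and imaginary parts separately, citing \cite{Wang_2015}) and constructing the competitor $\tilde{\mathsf r}^c=\langle\mathsf r^c+\overline{\Delta\mathsf r}-M\sum_{\mathcal V_1}w_i\rangle_M$ explicitly, while you phrase the same comparison as the weighted-variance inequality $\mathrm{Var}_w((\Delta\mathsf r_i))\le\mathrm{Var}_w((\Delta\mathsf r_i-\mathsf k_iM))$ together with the two-group decomposition. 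Your handling of the half-open boundary matches the paper's in spirit (the paper also resolves the equality case by asserting uniqueness of the optimiser, which amounts to the same tie-breaking convention you identify).
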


The proof of this theorem can be found in Appendix $C$.

Note that $M\mathcal H=\left\{M\Gamma(a+b\mathrm i): \frac{1}{\Gamma}\leq a, b<1-\frac{1}{\Gamma}\right\}$.
As illustrated in Fig. \ref{N_0_N}, compared to $\mathcal F_{M\Gamma}$, $M\mathcal H$ only differs with (does not include) four trapezoids with a height of $M$.
\begin{figure}[h!]
\vspace{-0.3cm}
\centering
\captionsetup{skip=-18pt}
\setlength{\belowcaptionskip}{-0.3cm}
\captionsetup{font={footnotesize}}
\captionsetup{labelsep=period}
\includegraphics[width=0.3\textwidth]{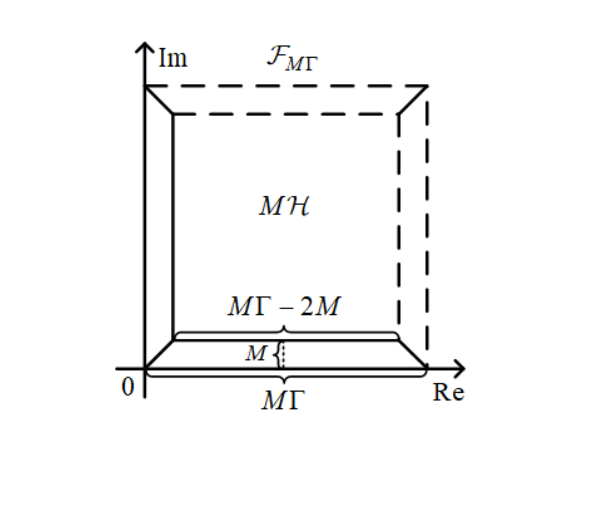}
\caption{Illustration of $\mathcal F_{M\Gamma}$ and $M\mathcal H$.}\label{N_0_N}
\end{figure}

Theorem \ref{Theo_iff} demonstrates that the MLE C-CRT is capable of ``preserving errors'', i.e., it preserves the weighted average error $\overline{\Delta \mathsf r}$ of the original remainder errors $\Delta \mathsf r_1, \Delta \mathsf r_2, \ldots, \Delta \mathsf r_L$ during the reconstruction. In this case, it is called error preserving MLE C-CRT. Since the output error of the MLE C-CRT is comparable to the input remainder error level, it is robust. By Theorem \ref{Theo_iff}, we can derive the following robustness as well.

\begin{corollary}\label{Cor_lb_zt}
Let $\mathsf N \in M\mathcal H$, $|\mathrm{Re}(\overline{\Delta \mathsf r})| < \tau$ and $|\mathrm{Im}(\overline{\Delta \mathsf r})| < \tau$, where $\tau\leq \frac{M}{2}$.
If (\ref{robust_ri_SM}) holds for all $\mathcal V\subseteq\mathcal U$, then $|\mathrm{Re}(\hat{\mathsf N}) - \mathrm{Re}(\mathsf N)|< \tau$ and
$|\mathrm{Im}(\hat{\mathsf N}) - \mathrm{Im}(\mathsf N)|< \tau$.
\end{corollary}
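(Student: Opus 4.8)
The plan is to derive Corollary~\ref{Cor_lb_zt} directly from Theorem~\ref{Theo_iff}, which is the substantive result; the corollary is essentially a quantitative restatement under the slightly stronger hypothesis $|\mathrm{Re}(\overline{\Delta \mathsf r})| < \tau$ and $|\mathrm{Im}(\overline{\Delta \mathsf r})| < \tau$ with $\tau \le \tfrac{M}{2}$. First I would observe that the hypotheses of Corollary~\ref{Cor_lb_zt} imply those of Theorem~\ref{Theo_iff}: since $\tau \le \tfrac{M}{2}$, the bounds $|\mathrm{Re}(\overline{\Delta \mathsf r})| < \tau$ and $|\mathrm{Im}(\overline{\Delta \mathsf r})| < \tau$ give $|\mathrm{Re}(\overline{\Delta \mathsf r})| < \tfrac{M}{2}$ and $|\mathrm{Im}(\overline{\Delta \mathsf r})| < \tfrac{M}{2}$; also $\mathsf N \in M\mathcal H$ is assumed, and condition \eqref{robust_ri_SM} is assumed to hold for all $\mathcal V \subseteq \mathcal U$. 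Therefore Theorem~\ref{Theo_iff} applies and yields the exact identity $\hat{\mathsf N} - \mathsf N = \overline{\Delta \mathsf r}$.

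Next I would take real and imaginary parts of this identity. Since $\hat{\mathsf N} - \mathsf N = \overline{\Delta \mathsf r}$, we get $\mathrm{Re}(\hat{\mathsf N}) - \mathrm{Re}(\mathsf N) = \mathrm{Re}(\overline{\Delta \mathsf r})$ and $\mathrm{Im}(\hat{\mathsf N}) - \mathrm{Im}(\mathsf N) = \mathrm{Im}(\overline{\Delta \mathsf r})$. Applying the hypotheses $|\mathrm{Re}(\overline{\Delta \mathsf r})| < \tau$ and $|\mathrm{Im}(\overline{\Delta \mathsf r})| < \tau$ immediately gives $|\mathrm{Re}(\hat{\mathsf N}) - \mathrm{Re}(\mathsf N)| < \tau$ and $|\mathrm{Im}(\hat{\mathsf N}) - \mathrm{Im}(\mathsf N)| < \tau$, which is the desired conclusion. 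The proof is short: the only thing to be careful about is to verify cleanly that all hypotheses of Theorem~\ref{Theo_iff} are indeed met, in particular that $\tau \le \tfrac{M}{2}$ is exactly what is needed to pass from the $\tau$-bounds to the $\tfrac{M}{2}$-bounds required by the theorem.

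There is no real obstacle here; this is a routine corollary whose entire content is packaged in Theorem~\ref{Theo_iff}. If anything, the only subtlety worth a sentence is emphasizing that the bound $\tau$ on the weighted average error $\overline{\Delta \mathsf r}$ transfers verbatim to a bound on the componentwise reconstruction error, i.e., the MLE C-CRT is ``error preserving'' in each coordinate separately with the same constant $\tau$. This is why the statement is phrased in terms of $\mathrm{Re}$ and $\mathrm{Im}$ individually rather than in terms of $|\hat{\mathsf N} - \mathsf N|$.

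\begin{proof}
Since $\tau \le \frac{M}{2}$, the hypotheses $|\mathrm{Re}(\overline{\Delta \mathsf r})| < \tau$ and $|\mathrm{Im}(\overline{\Delta \mathsf r})| < \tau$ imply $|\mathrm{Re}(\overline{\Delta \mathsf r})| < \frac{M}{2}$ and $|\mathrm{Im}(\overline{\Delta \mathsf r})| < \frac{M}{2}$. Together with $\mathsf N \in M\mathcal H$ and the assumption that \eqref{robust_ri_SM} holds for all $\mathcal V\subseteq\mathcal U$, all hypotheses of Theorem~\ref{Theo_iff} are satisfied. Hence $\hat{\mathsf N} - \mathsf N = \overline{\Delta \mathsf r}$ by \eqref{hat_N_N}. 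Taking real and imaginary parts gives
$$\mathrm{Re}(\hat{\mathsf N}) - \mathrm{Re}(\mathsf N) = \mathrm{Re}(\overline{\Delta \mathsf r}) \quad\text{and}\quad \mathrm{Im}(\hat{\mathsf N}) - \mathrm{Im}(\mathsf N) = \mathrm{Im}(\overline{\Delta \mathsf r}).$$
Therefore $|\mathrm{Re}(\hat{\mathsf N}) - \mathrm{Re}(\mathsf N)| = |\mathrm{Re}(\overline{\Delta \mathsf r})| < \tau$ and $|\mathrm{Im}(\hat{\mathsf N}) - \mathrm{Im}(\mathsf N)| = |\mathrm{Im}(\overline{\Delta \mathsf r})| < \tau$.
\end{proof}
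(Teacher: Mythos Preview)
Your proof is correct and follows essentially the same approach as the paper: verify that the $\tau$-bounds imply the $\tfrac{M}{2}$-bounds needed for Theorem~\ref{Theo_iff}, apply that theorem to get $\hat{\mathsf N}-\mathsf N=\overline{\Delta\mathsf r}$, and read off the real and imaginary parts. The paper's version is just more terse.
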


\begin{proof}
Clearly, $|\mathrm{Re}(\overline{\Delta \mathsf r})| < \frac{M}{2}$ and $|\mathrm{Im}(\overline{\Delta \mathsf r})| < \frac{M}{2}$. By Theorem \ref{Theo_iff}, we have $\hat{\mathsf N} - \mathsf N = \overline{\Delta \mathsf r}$. This leads to $|\mathrm{Re}(\hat{\mathsf N}) - \mathrm{Re}(\mathsf N)|< \tau$ and $|\mathrm{Im}(\hat{\mathsf N}) - \mathrm{Im}(\mathsf N)|< \tau$.
\end{proof}

Corollary \ref{Cor_lb_zt} presents the robustness of MLE C-CRT in terms of a bound of the weighted average error $\overline{\Delta \mathsf r}$. The following result presents the conditions under which all remainder errors contribute to a robust estimation, which is analogous to the error bound $\frac{M}{4}$ for real numbers \cite{Li_2009, Wang_2010}.

\begin{corollary}\label{Omega_err}
Let $\mathsf N \in M\mathcal H$. If $|\mathrm{Re}(\Delta \mathsf r_i)| <\tau$ and $|\mathrm{Im}(\Delta \mathsf r_i)| <\tau$ hold for all $i=1, 2, \ldots, L$,
where $\tau\leq\frac M4$, then we have
$|\mathrm{Re}(\hat{\mathsf N}) - \mathrm{Re}(\mathsf N)|
<\tau$ and $|\mathrm{Im}(\hat{\mathsf N}) - \mathrm{Im}(\mathsf N)|<\tau$.
\end{corollary}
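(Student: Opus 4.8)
The plan is to derive Corollary \ref{Omega_err} directly from Corollary \ref{Cor_lb_zt}: under the hypothesis $|\mathrm{Re}(\Delta\mathsf r_i)|<\tau$ and $|\mathrm{Im}(\Delta\mathsf r_i)|<\tau$ for all $i$, with $\tau\le M/4$, I would verify the two hypotheses of Corollary \ref{Cor_lb_zt}, namely that the weighted average error $\overline{\Delta\mathsf r}$ has both real and imaginary parts of modulus less than $\tau$ (hence less than $M/2$), and that condition (\ref{robust_ri_SM}) holds for every $\mathcal V\subseteq\mathcal U$. Then Corollary \ref{Cor_lb_zt} (applied with the same $\tau\le M/4\le M/2$) immediately gives $|\mathrm{Re}(\hat{\mathsf N})-\mathrm{Re}(\mathsf N)|<\tau$ and $|\mathrm{Im}(\hat{\mathsf N})-\mathrm{Im}(\mathsf N)|<\tau$.

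For the first hypothesis: since the weights satisfy $w_i>0$ and $\sum_{i=1}^L w_i=1$, the real part $\mathrm{Re}(\overline{\Delta\mathsf r})=\sum_{i=1}^L w_i\,\mathrm{Re}(\Delta\mathsf r_i)$ is a convex combination of the numbers $\mathrm{Re}(\Delta\mathsf r_i)$, so $|\mathrm{Re}(\overline{\Delta\mathsf r})|\le\sum_{i=1}^L w_i\,|\mathrm{Re}(\Delta\mathsf r_i)|<\tau$, and the same argument applied to imaginary parts gives $|\mathrm{Im}(\overline{\Delta\mathsf r})|<\tau$.

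For the second hypothesis, fix $\mathcal V\subseteq\mathcal U$ and set $\overline{\mathcal V}=\mathcal U\setminus\mathcal V$, $T_1=\sum_{\Delta\mathsf r_i\in\mathcal V}\frac{w_i\Delta\mathsf r_i}{\sum_{\Delta\mathsf r_j\in\mathcal V}w_j}$ and $T_2=\sum_{\Delta\mathsf r_i\in\overline{\mathcal V}}\frac{w_i\Delta\mathsf r_i}{\sum_{\Delta\mathsf r_j\in\overline{\mathcal V}}w_j}$, with the usual convention that an empty sum is $0$ (which covers $\mathcal V=\emptyset$ and $\mathcal V=\mathcal U$, where $T_1-T_2=\mp\overline{\Delta\mathsf r}\in\mathcal S_M$ by the bound just established). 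When $\mathcal V$ is nonempty the coefficients $w_i/\sum_{\Delta\mathsf r_j\in\mathcal V}w_j$ are positive and sum to $1$, so $\mathrm{Re}(T_1)$ is a convex combination of $\{\mathrm{Re}(\Delta\mathsf r_i):\Delta\mathsf r_i\in\mathcal V\}$, whence $|\mathrm{Re}(T_1)|<\tau$ and likewise $|\mathrm{Im}(T_1)|<\tau$; the same bounds hold for $T_2$. Therefore $|\mathrm{Re}(T_1-T_2)|\le|\mathrm{Re}(T_1)|+|\mathrm{Re}(T_2)|<2\tau\le M/2$ and $|\mathrm{Im}(T_1-T_2)|<2\tau\le M/2$, so by the definition (\ref{eq_def_SM}) of $\mathcal S_M$ (a square of side $M$ with sides parallel to the axes, containing all points whose real and imaginary parts lie in $[-M/2,M/2)$) we conclude $T_1-T_2\in\mathcal S_M$, i.e., (\ref{robust_ri_SM}) holds.

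The argument is essentially bookkeeping: the only points needing care are the degenerate choices $\mathcal V=\emptyset$ and $\mathcal V=\mathcal U$ (dispatched by the empty-sum convention together with the $\overline{\Delta\mathsf r}$ bound) and the half-open nature of $\mathcal S_M$, which causes no difficulty since every inequality obtained above is strict and $2\tau\le M/2$. I do not anticipate any substantive obstacle; the entire content is the observation that each of the two partial weighted averages $T_1,T_2$ is a convex combination of a subset of the $\Delta\mathsf r_i$, so its coordinates stay within $(-\tau,\tau)$, and the factor of two is absorbed by $\tau\le M/4$.
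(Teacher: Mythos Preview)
Your proof is correct and follows essentially the same route as the paper: both arguments verify condition (\ref{robust_ri_SM}) by observing that each partial weighted average is a convex combination of the $\Delta\mathsf r_i$ (hence has real and imaginary parts in $(-\tau,\tau)$, so the difference lies in $\mathcal S_M$ since $2\tau\le M/2$), and then invoke the error-preserving result. The only cosmetic difference is that you go through Corollary \ref{Cor_lb_zt} while the paper appeals directly to Theorem \ref{Theo_iff}; your version is slightly more careful in explicitly treating the degenerate cases $\mathcal V=\emptyset,\mathcal U$ and the half-open boundary of $\mathcal S_M$.
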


\begin{proof}
Since $\mathrm{Re}(\Delta \mathsf r_{i})<\tau$ and $\mathrm{Im}(\Delta \mathsf r_{i})<\tau$, we have
$$\left|\sum_{\Delta r_{i}\in S}\frac{w_i\mathrm{Re}(\Delta \mathsf r_{i})}{\sum_{\Delta r_{j}\in S}w_{j}}-\sum_{\Delta r_{i}\in\overline{S}}\frac{w_i\mathrm{Re}(\Delta \mathsf r_{i})}{\sum_{\Delta r_{j}\in\overline{S}}w_{j}}\right| < \left|\sum_{\Delta r_{i}\in S}\frac{w_i\tau}{\sum_{\Delta r_{j}\in S}w_{j}}\right| + \left|\sum_{\Delta r_{i}\in\overline{S}}\frac{w_i\tau}{\sum_{\Delta r_{j}\in\overline{S}}w_{j}}\right| =2\tau.$$
Similarly,
$$\left|\sum_{\Delta r_{i}\in S}\frac{w_i\mathrm{Im}(\Delta \mathsf r_{i})}{\sum_{\Delta r_{j}\in S}w_{j}}-\sum_{\Delta r_{i}\in\overline{S}}\frac{w_i\mathrm{Im}(\Delta \mathsf r_{i})}{\sum_{\Delta r_{j}\in\overline{S}}w_{j}}\right|< 2\tau.$$
By Theorem \ref{Theo_iff}, we have
$|\mathrm{Re}(\hat{\mathsf N}) - \mathrm{Re}(\mathsf N)| = |\mathrm{Re}(\overline{\Delta \mathsf r})|<\tau$ and $|\mathrm{Im}(\hat{\mathsf N}) - \mathrm{Im}(\mathsf N)|= |\mathrm{Im}(\overline{\Delta \mathsf r})|<\tau$.
\end{proof}
This result gives a concrete robust 2D-CRT compared to the general setting in \cite{Xiao_2020} and \cite{Xiao_2024}.

\subsection{Probability of Error Preserving MLE C-CRT}\label{sec4_B}

We now calculate the probability of the MLE C-CRT preserving errors, i.e., satisfying the necessary and sufficient conditions in Theorem \ref{Theo_iff}. This is also a probability for achieving robust reconstruction when $\tau=\frac{M}{2}$ by Corollary \ref{Cor_lb_zt}. Assume that the $i$-th remainder error $\Delta \mathsf r_i$ follows a wrapped complex Gaussian distribution with a mean of $0$ and a variance of $2\sigma_i^2$, and that the real and imaginary parts of $\Delta \mathsf r_i$ have equal variance $\sigma_i^2$. Since $|M\Gamma_i|$ is generally much larger than $\sigma_i^2$, we approximate $\Delta\mathsf r_i$ as a complex Gaussian distribution.

According to Theorem \ref{Theo_iff}, the necessary and sufficient condition for the MLE C-CRT to robustly estimate $\mathsf N$ is that the errors $\Delta\mathsf r_i$ satisfy $(\ref{robust_ri_SM})$.
Denote $x_i = \mathrm{Re}(\Delta\mathsf r_i)$ and $y_i = \mathrm{Im}(\Delta \mathsf r_i)$.
Let $\mathcal R$ be the set of all vectors $\mathbf r = (\Delta \mathsf r_1, \Delta\mathsf r_2, \ldots, \Delta\mathsf r_L)$ that satisfy $(\ref{robust_ri_SM})$. Similar to the discussion for real numbers in \cite{Wang_2015}, we have
$$\hspace{-0.3em}\begin{aligned}
& p((\Delta\mathsf r_1, \Delta\mathsf r_2, \ldots, \Delta\mathsf r_L)\in\mathcal R) \nonumber\\
= &  \frac{1}{(2\pi)^{L}}\prod_{i=1}^{L} \frac{1}{\sigma_i^{2}}
\underset{\mathbf r \in \mathcal R}{\idotsint } \exp{\left\{\sum_{i=1}^L\left(-\frac{x_i^2}{2\sigma_i^2} - \frac{y_i^2}{2\sigma_i^2}\right)\right\}} \mathrm{d}V_\mathbf x \mathrm{d}V_\mathbf y \nonumber \\
= &  \frac{1}{(2\pi)^{L}}\prod_{i = 1}^{L} \frac{1}{\sigma_i^{2}}
\bigg(\underset{\mathbf r \in \mathcal R}{\idotsint }
\exp\left\{-\sum_{i = 1}^L \frac{x_i^2}{2\sigma_i^2}\right\}\mathrm{d}V_\mathbf x\bigg)^2,
\end{aligned}$$
where $\mathbf x = (x_1, x_2, \ldots, x_L)$ and $\mathbf y = (y_1, y_2, \ldots, y_L)$, $\mathrm dV_\mathbf x$ and $\mathrm dV_\mathbf y$ are the differential volume elements of $\mathbf x$ and $\mathbf y$, respectively.

Fig. \ref{Probability_L_2} illustrates the theoretical and simulated probabilities of the error preserving MLE C-CRT at different standard deviations of the real and imaginary parts of noises. In the simulation, we set $L=2$, $M=10$, $\mathsf\Gamma_1= 4+19\mathrm i$, $\mathsf\Gamma_2 = 4-19\mathrm i$, $\sigma_1=2.4 + k$, and $\sigma_2 =2.5 + k$. For each $k$, the number of trials is $100000$.

\begin{figure}[htbp]
\centering
\setlength{\belowcaptionskip}{-0.2cm}
\captionsetup{font={footnotesize}}
\captionsetup{labelsep=period}
\includegraphics[width=0.5\textwidth]{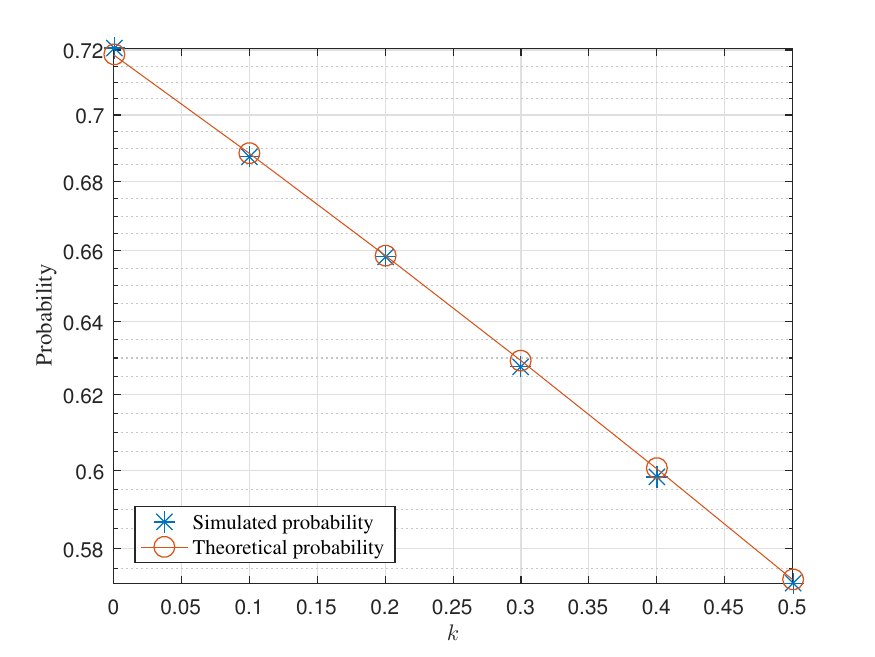}
\caption{Probabilities of error preserving MLE C-CRT (i.e., robust) at different standard deviations.}\label{Probability_L_2}
\end{figure}

\section{Simulation Results}\label{sec5}
In this section, we present several simulations to demonstrate the effectiveness of the proposed fast MLE C-CRT algorithm. Furthermore, we apply the algorithm to modulo ADCs to highlight its practical applicability.

We first present the relationship between signal-to-noise ratio (SNR) and noise variance. Let a complex-valued bandlimited signal be $\tilde{\mathsf g}(t)=\mathsf g(t)+\mathsf w(t)$, where $\mathsf w(t)$ is the noise with mean 0 and variance $2\sigma^2$. Then, the SNR is defined as
$$ \mathsf g_\mathrm{SNR} = 10\log_{10}\frac{\sum_n|\mathsf g(n/f_s)|^2}{\sum_n|\mathsf w(n/f_s)|^2},$$
where $f_s$ is the sampling frequency. Assume that the sampled values of $\mathsf g(t)$ are uniformly distributed within $\mathcal F_{\mathsf M}$, where $\mathsf M\in\mathbb Z[\mathrm i]$.
When there are enough sampled values, the mean values of $|\mathsf g(n/f_s)|^2$ and $|\mathsf w(n/f_s)|^2$ can be approximated as $\frac{2}{3}|\mathsf M|^2$ and $2\sigma^2$, respectively. Hence,
\begin{equation}\label{eq_SNR_sigma}
\mathsf g_\mathrm{SNR} \approx 10\log_{10}\frac{|\mathsf M|^2}{3\sigma^2}.
\end{equation}

\subsection{Comparison of Fast MLE C-CRT and Two-Stage CRT Algorithms}
In this subsection we compare the proposed fast MLE C-CRT and the two-stage CRT
presented in \cite{Gong_2021} in terms of both performance and computational complexity, where the moduli $\mathsf \Gamma_1, \mathsf\Gamma_2, \ldots, \mathsf\Gamma_8$ are set as $1+4\mathrm{i}, 1-4\mathrm{i}, 3+4\mathrm{i}, 3-4\mathrm{i}, 2+7\mathrm{i}, 2-7\mathrm{i}, 3, 7$, respectively.
In each trial, the real and imaginary parts of the complex number $\mathsf N$ are randomly selected from the interval $[M, M(\Gamma-1))$, where $M = 10$. The real and imaginary parts of the remainder errors $\Delta \mathsf r_i$
follow a wrapped complex Gaussian distribution with mean $0$ and variance $\sigma_i^2$. In the simulation, we set $\sigma_i = u|M\mathsf\Gamma_i|$, where $u$ is a small positive constant. For convenience, we approximate $\mathsf r_i$ as a uniform distribution within $\mathcal F_{M\mathsf\Gamma_i}$. Similar to (\ref{eq_SNR_sigma}), $-20\log_{10}\sqrt{3}u$ can be used as the measurement for SNR of the remainders. For each $u$, the total number $n$ of trials is $10000$, i.e., $n=10000$. We evaluate the performance of the two methods using two metrics: the root mean square error (RMSE), and the trial fail rate (TFR) for robust reconstruction and preserving errors.
The RMSE of $\mathsf N$ is defined as
$$ \Delta \mathsf N_{\text{RMSE}} = \sqrt{\frac{1}{n}\sum_{j=1}^n|\mathsf N_j - \hat{\mathsf N}_j|^2}.$$
According to Theorem \ref{Theo_iff}, the theoretical RMSE for the fast MLE C-CRT is
$$ \Delta\mathsf N_{\text{theory}} = \sqrt{E\left\{\left(\mathrm{Re}(\overline{\Delta \mathsf r})\right)^2\right\}+E\left\{\left(\mathrm{Im}(\overline{\Delta \mathsf r})\right)^2\right\}},$$
where $E\{\cdot\}$ denotes the mean. Since $\mathrm{Re}(\Delta \mathsf r_i)$ are mutually independent and Gaussian distributed for $i=1, 2, \ldots, L$, $\mathrm{Re}(\overline{\Delta \mathsf r})$ follows a Gaussian distribution with mean $0$ and variance $\sum_{i=1}^L w_i^2\sigma_i^2$. Similarly, the distribution of $\mathrm{Im}(\overline{\Delta \mathsf r})$ is the same as that of $\mathrm{Re}(\overline{\Delta \mathsf r})$. Thus,
\begin{equation}\label{eq_N_throry}
\Delta\mathsf N_{\text{theory}}=\sqrt{2\sum_{i=1}^L w_i^2\sigma_i^2}.
\end{equation}

\begin{figure}[htbp]
\centering
\setlength{\belowcaptionskip}{-0.2cm}
\captionsetup{font={footnotesize}}
\captionsetup{labelsep=period}
\includegraphics[width=0.5\textwidth]{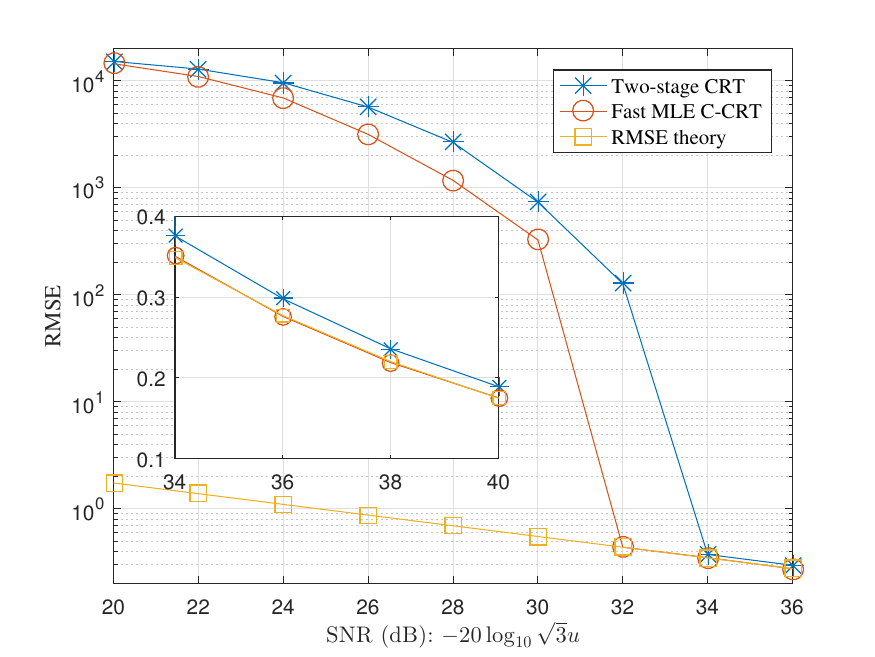}
\caption{Comparison of the RMSE.}\label{RMSE_duibi}
\end{figure}

Fig. \ref{RMSE_duibi} illustrates the curves of the RMSE for the two algorithms in terms of SNR, along with the theoretical RMSE of the fast MLE C-CRT. It can be observed that the RMSE of the fast MLE C-CRT is smaller than that of the two-stage CRT. At the SNR of $32$dB, the maximum error in the real and imaginary parts
of the remainders is $4.3267$, the reconstruction errors for the fast MLE C-CRT are less than $1.2621$. At the SNR of $34$dB, the maximum error in the remainders is $3.1640$, the reconstruction errors for the two-stage CRT are less than $1.1697$.
From Fig. \ref{RMSE_duibi}, one can see that the fast MLE C-CRT achieves robust reconstruction more quickly than the two-stage CRT does, since it has the optimal estimation of the common remainder. When the SNR is less than $30$dB, due to the errors of some remainders exceeding $\frac{M}{2}$, this does not satisfy the conditions of Theorem \ref{Theo_iff}. Hence, the fast MLE C-CRT may not have robust reconstruction and has large errors. Similarly, when the SNR is less than $34$dB, the condition for the robust reconstruction of the two-stage CRT algorithm may not be satisfied and thus has large errors. On the other hand, the theoretical curve is from (\ref{eq_N_throry}) that is based on the assumption of complex Gaussian distributions of the remainder errors and only depends on the error distribution variances, while the true distributions of the remainder errors follow wrapped complex Gaussian distributions. It justifies that the theoretical curve is smooth and does not exhibit a large change. When the SNR is higher, the assumption holds better and the simulated curve and the theoretical curve match better. As one can see from the zoom-in part in Fig. \ref{RMSE_duibi}, the fast MLE C-CRT, in fact, achieves the theoretical RMSE values when SNR is high, while the two-stage CRT cannot.

For the TFR of the robust reconstruction, we consider the estimation error of $\mathsf N_j$. If $\mathsf N_j$ and $\hat{\mathsf N}_j$ satisfy $|\mathrm{Re}(\mathsf N_j - \hat{\mathsf N}_j)| < \tau$
and $|\mathrm{Im}(\mathsf N_j - \hat{\mathsf N}_j)| < \tau$
simultaneously for a pre-given small positive constant $\tau$, the trial is considered successful and otherwise, the trial fails. In the simulations, we set $\tau = \frac{M}{4} = 2.5$, which is the upper bound of the errors in Corollary \ref{Omega_err}. Fig. \ref{tau_duibi} illustrates the curves of the TFR for the two algorithms in terms of SNR. It is evident that the fast MLE C-CRT outperforms the two-stage CRT, particularly at higher SNR values.

\begin{figure}[htbp]
\centering
\setlength{\belowcaptionskip}{-0.2cm}
\captionsetup{font={footnotesize}}
\captionsetup{labelsep=period}
\includegraphics[width=0.5\textwidth]{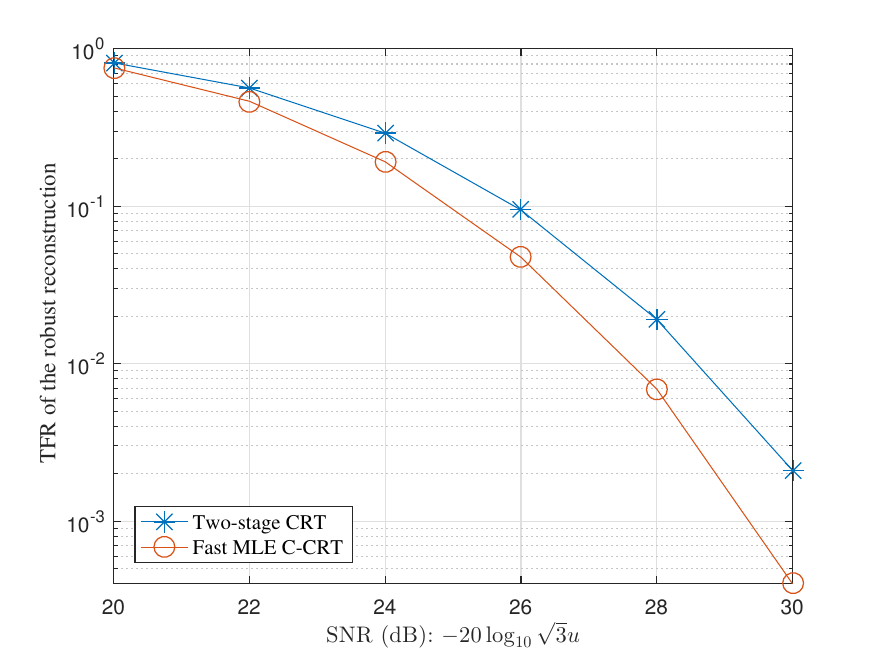}
\caption{Comparison of TFR.}
\label{tau_duibi}
\end{figure}

For the TFR in terms of preserving errors, we test whether (\ref{hat_N_N}) holds. If it does, the trial is considered successful and otherwise, the trial fails. According to Theorem \ref{Theo_iff}, the TFR of preserving errors for each trial can be expressed as
$$p_{\text{TFR}}^L = 1 - p((\Delta\mathsf r_1, \Delta\mathsf r_2, \ldots, \Delta\mathsf r_L)\in\mathcal R).$$
Fig. \ref{TRT_tong} illustrates the curves of the TFR and its theoretical value for the fast MLE C-CRT with respect to the number of moduli $L$, where the noise variances $2\sigma_i^2$ are constant and equal to $2\sigma^2$ for $i=1, 2, \ldots, L$. It can be observed that the TFR of the fast MLE C-CRT closely matches its theoretical value in both cases.

\begin{figure}[htbp]
\centering
\setlength{\belowcaptionskip}{-0.2cm}
\captionsetup{font={footnotesize}}
\captionsetup{labelsep=period}
\includegraphics[width=0.45\textwidth]{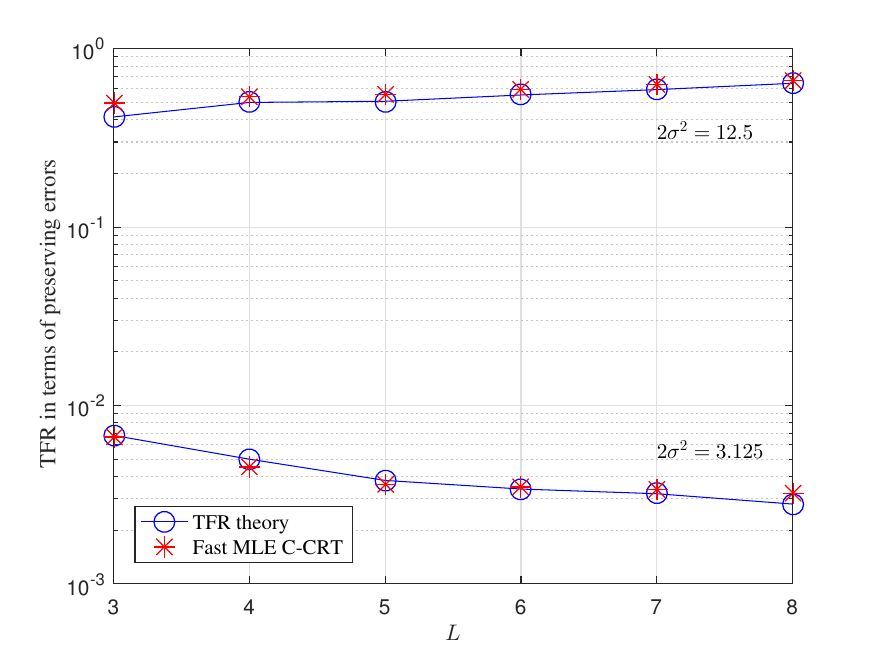}
\caption{TFR for different noise variances.}\label{TRT_tong}
\end{figure}

We now compare the computational complexities of the two methods by counting the numbers of the real multiplications they need, i.e., the real multiplicative complexities. Given a known common remainder, the real multiplicative complexity for the fast MLE C-CRT can be directly calculated as $\mathcal O(L)$ by (\ref{eq:estimating_q_i}), (\ref{eq:calculate_N_0}), and (\ref{eq:calculate_N2B}).
The two-stage CRT requires twice the number of operations of the closed-form CRT for reals, with a real multiplicative complexity of $\mathcal O(L)$ when the common remainder is known \cite{Wang_2010}. Therefore, the computational complexities of the two algorithms primarily arise from the estimation of the common remainder $\hat{\mathsf r}^c$. For the fast MLE C-CRT, $\hat{\mathsf r}^c$ is obtained from the objective functions in (\ref{Re_rc_Im_rc}). Each evaluation of the real or imaginary part requires $4L$ real multiplications. According to Theorem \ref{Th_Omega_r_c}, obtaining $\hat{\mathsf r}^c$ necessitates $8L^2$ real multiplications in total. When $L\geq 5$, the common remainder search in the two-stage CRT arises from its second stage, where the objective functions are the special case of (\ref{Re_rc_Im_rc}) with $w_i=1$ for each $i$. According to the algorithm presented in \cite{Wang_2010}, each evaluation of the real or imaginary part involves $5(L-l)$ real multiplications, where $2l\leq L$ is the number of complex-valued moduli.
Denoting $\epsilon$ as the search step size, this algorithm requires at least $\frac{5ML}{\epsilon}$ real multiplications to estimate the common remainder. Since $\frac{M}{\epsilon}$ is generally much larger than $L$, the complexity of the two-stage CRT is higher in this case.
As mentioned earlier, the two-stage CRT proposed in \cite{Gong_2021} requires a search process to estimate the common remainder in \cite{Wang_2010} but does not utilize the fast algorithm proposed in \cite{Wang_2015}. If so, the number of searches would be $2(L-l)$ and only $10(L-l)^2$ real multiplications are required to obtain the common remainder. Consequently, the reference remainder can be properly determined and hence the folding integers ($n_i$ defined in \cite{Wang_2010}) can be correctly determined. However, the estimation is not the MLE since the estimation of the common remainder can not be utilized in the reconstruction of $\mathsf N$. Therefore, although the two-stage CRT provides a robust estimation, it is not optimal even when utilizing the fast algorithm proposed in \cite{Wang_2015}.

Fig. \ref{fzd_duibi} illustrates the numbers of the real multiplications required by the two-stage CRT in \cite{Gong_2021} and the proposed fast MLE C-CRT, where $\epsilon=0.001$. Since there is no need to search for the common remainder in the two-stage CRT when $L = 2, 3, 4$, the number of real multiplications is fewer than that of the fast MLE C-CRT. However, the two-stage CRT needs to search for the common remainder when $L\ge 5$, this leads to a significant increase in the number of real multiplications.

\begin{figure}[htbp]
\centering
\setlength{\belowcaptionskip}{-0.2cm}
\captionsetup{font={footnotesize}}
\captionsetup{labelsep=period}
\includegraphics[width=0.45\textwidth]{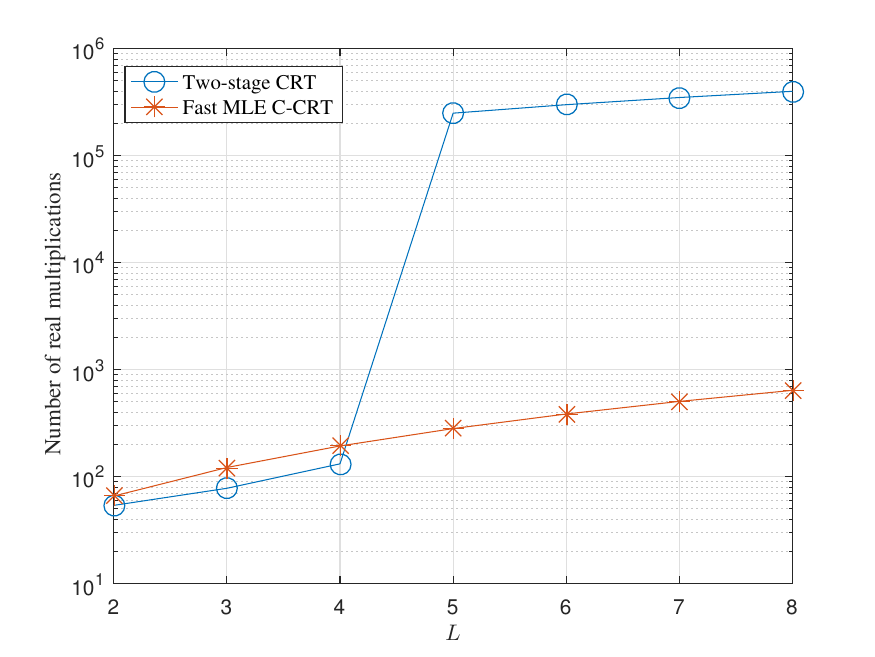}
\caption{Comparison of numbers of real multiplications.}\label{fzd_duibi}
\end{figure}

\subsection{Application in Modulo ADCs}
Now, we compare the three methods: the proposed fast MLE C-CRT, the two-stage CRT \cite{Gong_2021}, and two independent closed-form CRTs \cite{Wang_2010}.
For the proposed fast MLE C-CRT and the two-stage CRT, both the real and imaginary parts are sampled simultaneously from pairs of SR-ADCs with complex-valued moduli as illustrated in Fig. \ref{fig:C_SR_ADC}. For the closed-form CRT for real values, the real and imaginary parts are sampled separately from independent SR-ADCs with real-valued moduli using two sets of SR-ADCs as illustrated in Fig. \ref{fig:r_SR_ADC}.

The condition for the MLE C-CRT to uniquely reconstruct $\mathsf N$ from the congruence system (\ref{mathsfNmod}) is that $\mathsf N \in\mathcal F_{M\Gamma}$,
which is the same as that of the two-stage CRT in the sense of congruence.
Specifically, if the moduli in (\ref{mathsfNmod}) are the real numbers $M\Gamma_1, M\Gamma_2, \ldots, M\Gamma_L$, then we have
$$\mathrm{Re}(\mathsf{N}) \equiv \mathrm{Re}(\mathsf{r}_i) \mod M\Gamma_i, \ i = 1, 2, \ldots, L,$$
and
$$\mathrm{Im}(\mathsf{N}) \equiv \mathrm{Im}(\mathsf{r}_i) \mod M\Gamma_i, \ i = 1, 2, \ldots, L.$$
If $\mathrm{Re}(\mathsf{N})$ and $\mathrm{Im}(\mathsf{N})$ are within $[0, M\Gamma)$, then $\mathsf N$ can be uniquely reconstructed by the two independent real-valued CRTs.

For an integer $M\Gamma$ while $\mathsf\Gamma_1, \mathsf\Gamma_2, \ldots, \mathsf\Gamma_L$ are Gaussian integers, we know that $\mathcal F_{M\Gamma}$ is a square according to (\ref{FM}) as also mentioned earlier, where its sides are parallel to the real and imaginary axes with length $M\Gamma$. From Theorem \ref{Th_C_CRT}, any complex number in $\mathcal F_{M\Gamma}$ can be uniquely reconstructed by C-CRT, i.e., the multi-channel SR-ADCs in Fig. \ref{fig:C_SR_ADC}. Thus, both of the uniquely determinable real and imaginary parts of a complex number by using C-CRT or the multi-channel SR-ADCs are within $[0, M\Gamma)$. In this case, we call $M\Gamma$ as the dynamic range of the C-CRT. Thus, within this dynamic range, the sampled values of $\mathsf g(t)$ can be uniquely recovered by the C-CRT and the two-stage CRT with complex-valued moduli, or the closed-form CRTs with real-valued moduli. Note that to have an integer lcm, the complex-valued moduli can be selected as pairs of conjugate Gaussian integers.

When the maximum dynamic range of all the SR-ADCs is limited by $\Delta_{\max}$, the maximum dynamic range of the multi-channel SR-ADCs for real values is not greater than that for complex values and thus it is also limited by $\Delta_{\max}$. Since the real-valued moduli $\Gamma_i$ and the complex-valued moduli $\mathsf\Gamma_i$ belong to $\mathcal M_1=\left\{x\in \mathbb Z: 2\leq x\leq \Delta_{\max}\right\}$ and $\mathcal M_2=\left\{\mathsf x\in\mathbb Z
[\mathrm i]:\sqrt{2}\leq |\mathsf x|\leq \Delta_{\max} \right\}$,
respectively, and it is clear that $\mathcal M_1\subset\mathcal M_2$, there are more options for the complex-valued moduli, and its dynamic range is at least as large as that of the real-valued moduli. Table \ref{Table_DeltaMax} presents some examples for this application, where $M=1$, $L=3$, and the last column shows the dynamic ranges for both the real and imaginary parts of a uniquely determinable complex-valued signal using multi-channel SR-ADCs. For a fair comparison, when the maximal dynamic ranges of SR-ADCs are given, the sets of three pairwise coprime positive integers are optimized in Table \ref{Table_DeltaMax} in the closed-form CRTs for real values.

\begin{table}[h]
\centering
\captionsetup{font={footnotesize}}
\captionsetup{labelsep=period}
\caption{Comparison for real-valued and complex-valued moduli.}\label{Table_DeltaMax}
\begin{tabular}{cccc}
\toprule
$\Delta_{\max}$ & Method & $M\mathsf\Gamma_i$ & Dynamic range  $M\Gamma$ \\
\midrule
7 & Closed-form CRTs & 5, 6, 7 & $210$ \\
7 & C-CRT, Two-stage CRT & $4+5\mathrm{i}$, $4-5\mathrm{i}$, 7 & 287 \\
9 & Closed-form CRTs & 7, 8, 9 & $504$ \\
9 & C-CRT, Two-stage CRT & $7+4\mathrm{i}$, $7-4\mathrm{i}$, 9 & 585 \\
\bottomrule
\end{tabular}
\end{table}

In the simulations, we set a complex-valued bandlimited signal
$$ \mathsf g(t) = \sum_{k=-30}^{30}(a_k + \mathrm{i}b_k)A\cdot\mathrm{sinc}(t-k),$$
where $A$ is a constant, coefficients $a_k$ and $b_k$ are uniformly distributed in $[-1, 1]$, $\mathrm{sinc}(t) = \frac{\sin(\pi t)}{\pi t}$. For the closed-form CRTs, we set six SR-ADCs with dynamic ranges of $5, 5, 6, 6, 7$, and $7$ when $\Delta_{\max}=7$, and $7, 7, 8, 8, 9$, and $9$ when $\Delta_{\max}=9$. For the C-CRT and the two-stage CRT, we set six SR-ADCs with dynamic ranges of $\sqrt{41}$, $\sqrt{41}$, $\sqrt{41}$, $\sqrt{41}$, $7$ and $7$ when $\Delta_{\max}=7$, and $\sqrt{65}$, $\sqrt{65}$, $\sqrt{65}$, $\sqrt{65}$, $9$ and $9$ when $\Delta_{\max}=9$, where the moduli $M\mathsf\Gamma_i$ are shown in Table \ref{Table_DeltaMax}. The reconstruction error is quantified using the root relative squared error (RRSE), given by
$$  \mathsf g_{\mathrm{RRSE}} = \sqrt{\frac{\sum_{n}|\mathsf g(n/f_s) - \hat{\mathsf g}(n/f_s)|^2}{\sum_{n}|\mathsf g(n/f_s)|^2 }},$$
where the sampling frequency $f_s$ in the time domain for each channel is the Nyquist rate, i.e., $1$Hz.

Fig. \ref{fig_RRSE} illustrates the RRSE curves for the three methods in terms of SNR. The fast MLE C-CRT demonstrates the best performance overall. Similar to the previous RMSE simulations, the fast MLE C-CRT achieves robustness at an SNR of
$16$dB. In contrast, the two-stage CRT achieves robustness at an SNR of $18$dB. Additionally, for SNR values of $18$dB and higher, the performances of the two-stage CRT and the closed-form CRT are nearly indistinguishable.

\begin{figure}[h]
\centering
\setlength{\belowcaptionskip}{-0.5cm}
\captionsetup{font={footnotesize}}
\captionsetup{labelsep=period}
\subfigure[$\Delta_{\max}=7$ and $A=16$.]{\includegraphics[width=0.35\textwidth]{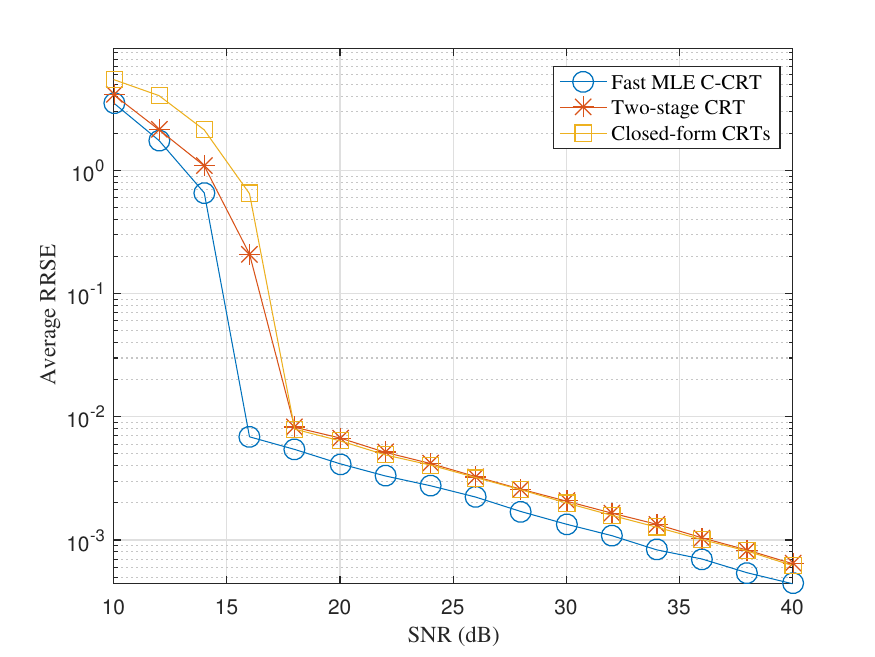}}
\subfigure[$\Delta_{\max}=9$ and $A=87$.]{\includegraphics[width=0.35\textwidth]{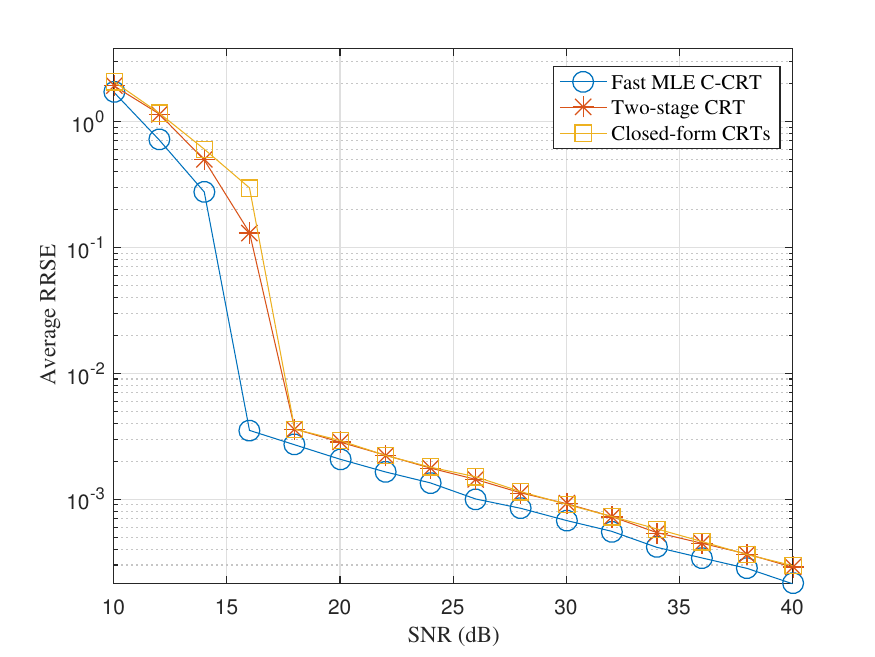}}
\caption{Average RRSE of the three methods.}\label{fig_RRSE}
\end{figure}

To show the high dynamic range of complex moduli, while the maximum dynamic range of all the SR-ADCs is constrained by $\Delta_{\max}$, we compute the TFR of the robust reconstruction for the three methods. The curves of the TFR for the three methods in terms of SNR are illustrated in Fig. \ref{fig_TFR}, where $a_k=-0.9$, $b_k=0.98$, and $\tau=0.25$. Since some sampled values exceed the dynamic range of SR-ADCs with real-valued moduli, the closed-form CRTs exhibit an error floor. Additionally, although the TFRs of both the fast MLE C-CRT and the two-stage CRT can approach $0$, the fast MLE C-CRT outperforms the two-stage CRT due to its optimal estimation of the common remainder. These results illustrate that the proposed MLE C-CRT for complex numbers performs better than the conventional CRT for real values. In addition, if C-CRT is thought of as a special 2D-CRT, it clearly shows that 2D-CRT (non-separable) performs better than two 1D-CRTs (separable).

\begin{figure}[h]
\centering
\setlength{\belowcaptionskip}{-0.2cm}
\captionsetup{font={footnotesize}}
\captionsetup{labelsep=period}
\subfigure[$\Delta_{\max}=7$ and $A=145$.]{\includegraphics[width=0.35\textwidth]{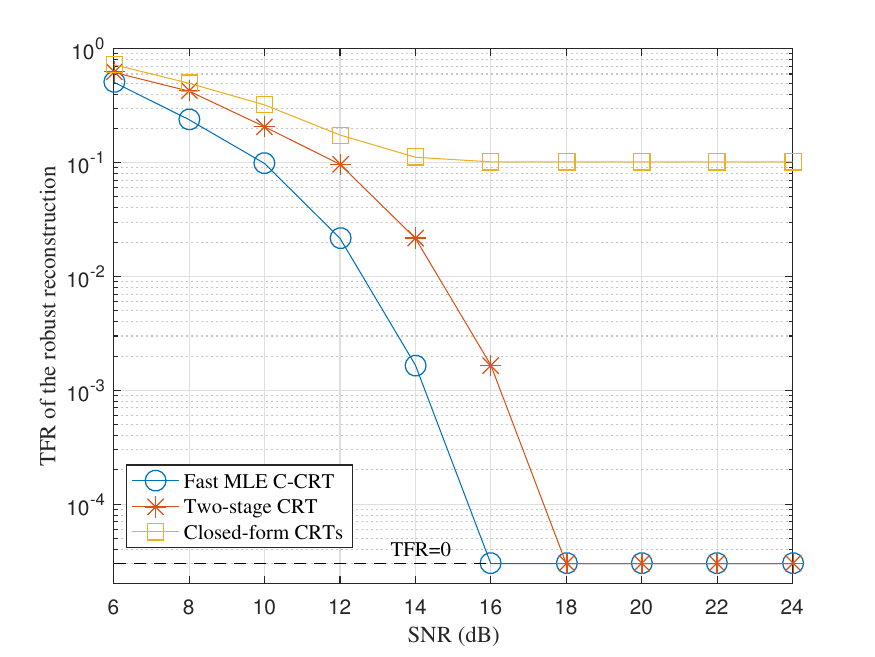}}
\subfigure[$\Delta_{\max}=9$ and $A=275$.]{\includegraphics[width=0.35\textwidth]{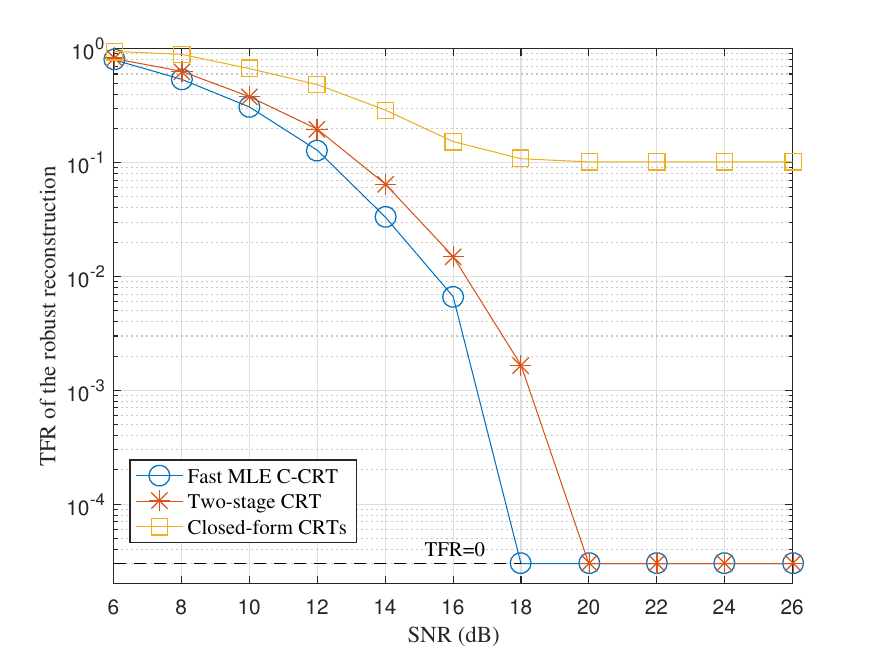}}
\caption{TFR of the three methods.}\label{fig_TFR}
\end{figure}

As a remark, for robust integer recovery from erroneous integer remainders using robust CRT, such as in \cite{Wang_2010}, the moduli are usually required to have a common integer factor $M>1$ and then the integer remainder errors can be as high as $\frac{M}{4}$ as a sufficient condition. This is for integers. As \cite{Wang_2015}, the robust reconstruction can be extended to real values, where although the integer moduli may not have a gcd $M>1$, such as those listed in Table \ref{Table_DeltaMax} where $M=1$, there still exists the robustness for real value reconstructions. This is because in the reconstruction of a real value, it has decimals. For example, when we consider one decimal precision in the reconstruction, it is equivalent to multiplying all the numbers including all the moduli by $10$, and then they become all integers. In this case the moduli have a gcd $10$ that provides the robustness for the robust CRT for integers.

\section{Conclusion}\label{sec6}
This paper proposes an efficient C-CRT algorithm based on MLE, enabling robust reconstruction of complex numbers from erroneous remainders modulo several Gaussian integers. The optimal common remainder can be determined using $L$ searches in both real and imaginary parts. Additionally, a necessary and sufficient condition for the C-CRT algorithm to achieve robust reconstruction is provided. In simulations, all the theoretical results are verified and it is illustrated that the proposed algorithm outperforms the current two-stage CRT and has been successfully applied to multi-channel SR-ADCs for complex-valued bandlimited signals.

\section*{Appendix}
\subsection{Proof of Properties \ref{prop1_F_M} to \ref{prop5_d_M}}
\textbf{Property \ref{prop1_F_M}}: Since $\mathsf z\in \mathcal F_{\mathsf M}$, we can express $\mathsf z$ as $\rho e^{\mathrm i\theta}(a+b\mathrm i)$ by (\ref{FM}),
where $0 \leq a, b < 1$. Thus, $\mathsf ze^{-\mathrm i\theta}=\rho(a+b\mathrm i)$.
According to (\ref{FM}), we have $\mathsf ze^{-\mathrm i\theta}\in\mathcal F_\rho$.

\textbf{Property \ref{prop2_F_M}}: For any point $\mathsf z$, multiplying it by
$e^{-\mathrm i\theta}$ results in a rotation of $\mathsf z$ by $-\theta$. By Property \ref{prop1_F_M}, we know that the region $\mathcal F_{\mathsf M}$ is rotated to $\mathcal F_{\mathsf\rho}$. Since $\mathcal F_{\mathsf\rho}$ is a square with side length $\rho$, we have that $\mathcal F_{\mathsf M}$ is also a square and the area of $\mathcal F_{\mathsf M}$ is $\rho^2$.

\textbf{Property \ref{prop1_d_M}}: Let $\mathsf x-\mathsf y = \mathsf M(a+b\mathrm i)$. By (\ref{d_mathsf_M}), we have
$\begin{aligned}
d_{\mathsf M}(\mathsf x, \mathsf y) = \mathsf M(a-[a]) +
\mathsf M (b - \left[b\right])\mathrm i.
\end{aligned}$ Since $-\frac{1}{2}\leq a-[a] < \frac{1}{2}$ and $-\frac{1}{2}\leq b-[b] < \frac{1}{2}$, we have $d_{\mathsf M}(\mathsf x, \mathsf y)\in\mathcal S_\mathsf M$.

\textbf{Property \ref{prop2_d_M}}:
Since $[\mathsf z+\mathsf k]=[\mathsf z]+\mathsf k$ holds for any complex number $\mathsf z$, we obtain by the definition of the circular distance in (\ref{d_mathsf_M}) that
$d_{\mathsf M}(\mathsf x +\mathsf {kM}, \mathsf y) = d_{\mathsf M}(\mathsf x, \mathsf y)$ and $d_\mathsf{\mathsf M}(\mathsf x, \mathsf y+\mathsf {kM})=d_\mathsf{\mathsf M}(\mathsf x, \mathsf y)$.
Furthermore, it follows from  (\ref{mathsfrN}) that
$$d_{\mathsf M}(\mathsf x, \mathsf y)=d_{\mathsf M}\left(\mathsf x, \mathsf y-\mathsf M\left\lfloor \frac{\mathsf y}{\mathsf M} \right\rfloor \right) = d_{\mathsf M}(\mathsf x , \left\langle \mathsf y\right\rangle_\mathsf M).$$

\textbf{Property \ref{prop3_d_M}}: Let $\mathsf x-\mathsf y=\mathsf M(c+d\mathrm i)$, where $-\frac{1}{2}\leq c, d < \frac{1}{2}$. Note that $\left[\frac{\mathsf M(c+d\mathrm i)}{\mathsf M}\right]=0$. Hence, $d_\mathsf{M}(\mathsf x, \mathsf y)= \mathsf x-\mathsf y$.

\textbf{Property \ref{prop4_d_M}}: Let $\mathsf x-\mathsf y = \mathsf M(a+b\mathrm i)$. Then, we have either $\mathsf x-\mathsf y \in \mathcal S_\mathsf M$ or $\mathsf x-\mathsf y\not\in \mathcal S_\mathsf M$. For the first case, we have three subcases: 1) $-\frac{1}{2} < a < \frac{1}{2}, b = -\frac{1}{2}$; 2) $-\frac{1}{2} < b < \frac{1}{2}$, $a = -\frac{1}{2}$; 3) $a=b=-\frac{1}{2}$. By Property \ref{prop3_d_M}, we have $|d_{\mathsf M}(\mathsf x, \mathsf y)|=|\mathsf x-\mathsf y|$ for these three subcases. For the second case,  we have three subcases: 1) $-\frac{1}{2}\leq a < \frac{1}{2}$ and $b=\frac{1}{2}$; 2) $-\frac{1}{2}\leq b < \frac{1}{2}$ and $a=\frac{1}{2}$; 3) $a=b=\frac{1}{2}$. Without loss of generality, we only prove subcase 1). Since $\left[\frac{\mathsf M(a+b\mathrm i)}{\mathsf M}\right]=\mathrm i$, we have $|d_\mathsf M(\mathsf x,\mathsf y)|=|\mathsf M(a+b\mathrm i) - \mathsf M\mathrm i| = |\mathsf M(a-\frac{1}{2}\mathrm i)| = |\mathsf M(a+\frac{1}{2}\mathrm i)| = |\mathsf x-\mathsf y|$.

\textbf{Property \ref{prop5_d_M}}: It suffices to prove $d_k(\mathsf x, \mathsf y)\in\mathcal S_{k\mathsf M}\cup\partial\mathcal S_{k\mathsf M}$ by Properties \ref{prop3_d_M} and \ref{prop4_d_M}.
Since $\mathcal S_{k\mathsf M}\cup\partial\mathcal S_{k\mathsf M}$ is a square with side length $|k\mathsf M|$,
its incircle is $\mathcal O=\left\{\mathsf z: |\mathsf z|\leq \frac{|k\mathsf M|}{2}\right\}$.
An inscribed square of
$\mathcal O$ is $\mathcal Q = \left\{a+b\mathrm i: -\frac{\sqrt{2}}{2}|k\mathsf M|\leq a, b\leq \frac{\sqrt{2}}{2}|k\mathsf M| \right\}$.
Since $|\mathsf M|\geq \sqrt{2}$, we have
$\mathcal S_k=\left\{a+b\mathrm i: -|k|\leq a, b< |k| \right\}\subseteq\mathcal Q\subseteq\mathcal S_{k\mathsf M}\cup\partial\mathcal S_{k\mathsf M}$.
Thus, $d_k(\mathsf x, \mathsf y)\in\mathcal S_{k\mathsf M}\cup\partial\mathcal S_{k\mathsf M}$.

\subsection{Proof of Theorem \ref{thm_Delta_r}}\label{appendix_1}
\begin{proof}
By (\ref{robust_ri_SM}), we have
$$-\frac M 2\leq\sum_{\Delta \mathsf r_i\in \mathcal V}\frac{w_i\mathrm{Re}(\Delta \mathsf r_i)}{\sum_{\Delta\mathsf r_j\in \mathcal V}w_j} - \sum_{\Delta \mathsf r_i\in\overline{\mathcal V}}\frac{w_i\mathrm{Re}(\Delta \mathsf r_i)}{\sum_{\Delta \mathsf r_j\in\overline{\mathcal V}}w_j} < \frac M 2.$$
According to Theorem 3 in \cite{Wang_2015}, we have
$$\mathrm{Re}(\hat{\mathsf r}^c) = \langle \mathrm{Re}(\mathsf r^c) + \mathrm{Re}(\overline{\Delta \mathsf r})\rangle_M.$$
Hence, (\ref{Re drc}) holds. Similarly, we have
$$\mathrm{Im}(\hat{\mathsf r}^c) = \langle \mathrm{Im}(\mathsf r^c) + \mathrm{Im}(\overline{\Delta \mathsf r})\rangle_M.$$
This leads to (\ref{Im drc}). Since $M\in\mathbb Z$, we have
$$\hat{\mathsf r}^c
=\langle \mathrm{Re}(\mathsf r^c)+(\mathrm{Re}\overline{\Delta \mathsf r})\rangle_M + \mathrm i\langle \mathrm{Im}(\mathsf r^c) + \mathrm{Im}(\overline{\Delta \mathsf r})\rangle_M
=\langle \mathsf r^c + \overline{\Delta \mathsf r}\rangle_M.$$
\end{proof}

\subsection{Proof of Theorem \ref{Theo_iff}}\label{appendix_2}
\begin{proof}
The sufficiency has been proven, now we prove the necessity. 
Since $\hat{\mathsf N} - \mathsf N = \overline{\Delta \mathsf r}$, we have 
$$M\hat{\mathsf N}_0 + \hat{\mathsf r}^c - M\mathsf N_0 - \mathsf r^c = \overline{\Delta \mathsf r}.$$
This leads to
$$\hat{\mathsf r}^c = \langle \mathsf r^c + \overline{\Delta \mathsf r} \rangle_M.$$
Hence, 
$$\mathrm{Re}(\hat{\mathsf r}^c) = \langle \mathrm{Re}(\mathsf r^c) + \mathrm{Re}(\overline{\Delta \mathsf r})\rangle_M, \ \mathrm{Im}(\hat{\mathsf r}^c) = \langle \mathrm{Im}(\mathsf r^c) + \mathrm{Im}(\overline{\Delta \mathsf r})\rangle_M.$$

If there exists a non-empty set $\mathcal V_1 \subset \mathcal U$ satisfying
$$\sum_{\Delta \mathsf r_i\in \mathcal V_1}\frac{w_i\mathrm{Re}(\Delta\mathsf r_i)}{\sum_{\Delta\mathsf r_j \in \mathcal V_1}w_j} - \sum_{\Delta\mathsf r_i\in\overline{\mathcal V}_1}\frac{w_i\mathrm{Re}(\Delta\mathsf r_i)}{\sum_{\Delta\mathsf r_j\in\overline{\mathcal V}_1}w_j} \geq \frac{M}{2},$$
then we can obtain
$$\sum\limits_{\Delta \mathsf r_i\in \mathcal V_1}w_i\left(\mathrm{Re}(\Delta\mathsf r_i - \overline{\Delta\mathsf r})\right) \geq \frac{M}{2}\left(1-\sum\limits_{\Delta\mathsf r_i\in \mathcal V_1}w_i\right)\sum\limits_{\Delta\mathsf r_i\in \mathcal V_1}w_i.$$
Let 
\begin{equation}\label{eq_tilde_rc}
\tilde{\mathsf r}^c = \left\langle \mathsf r^c + \overline{\Delta \mathsf r} - M\sum_{\Delta \mathsf r_i\in \mathcal V_1}w_i\right\rangle_M.
\end{equation}
Then we have
\begin{equation}\label{Ineq_eq}
\sum_{i=1}^{L}w_id_M^2\left(\mathrm{Re}(\tilde{\mathsf r}_i^c), \langle \mathrm{Re}(\mathsf r^c) + \mathrm{Re}(\overline{\Delta\mathsf r})\rangle_M\right)
 \geq \sum_{i=1}^{L}w_id_M^2 \left(\mathrm{Re}(\tilde{\mathsf r}_i^c),
\mathrm{Re}(\tilde{\mathsf r}^c)\right).
\end{equation}
If the equality of \eqref{Ineq_eq} holds, then $\mathrm{Re}(\tilde{\mathsf r}^c)$ is optimal. Hence,
$$\mathrm{Re}(\tilde{\mathsf r}^c) = \mathrm{Re}(\hat{\mathsf r}^c) = \langle \mathrm{Re}(\mathsf r^c) + \mathrm{Re}(\overline{\Delta \mathsf r})\rangle_M.$$
By \eqref{eq_tilde_rc}, we can obtain that either $\mathcal V_1 = \emptyset$ or $\mathcal V_1  = \mathcal U$ holds, which is a contradiction.
If the inequality of \eqref{Ineq_eq} holds, then
$\mathrm{Re}(\hat{\mathsf r}^c)$ is not optimal, which is a contradiction.
\end{proof}

\end{document}